\algrenewcommand\algorithmicrequire{\textbf{Input:}}
\algrenewcommand\algorithmicensure{\textbf{Output:}}
\newcommand{\V}{\bm{V}}
\newcommand{\R}{\mathbb{R}}
\newcommand{\N}{\mathbb{N}}
\newcommand{\Q}{\mathbb{Q}}
\newcommand{\mL}{\mathcal{L}}
\newcommand{\Fb}{\mathbf{F}}
\newcommand{\cb}{\mathbf{c}}
\newcommand{\xb}{\mathbf{x}}
\newcommand{\yb}{\mathbf{y}}
\newcommand{\ab}{\mathbf{a}}
\newcommand{\hb}{\mathbf{h}}
\newcommand{\Pb}{\mathbf{P}}
\newcommand{\bb}{\mathbf{b}}
\newcommand{\gb}{\bm{g}}
\newcommand{\C}{\mathbb{C}}
\newcommand{\Cn}{\C^n}
\def\1{{\mathbbm 1}}
\newcommand{\gbt}{\widetilde{\gb}}
\newcommand{\SFX}{S_{(F,X)}}
\newcommand{\dg}{d_{\gb}}
\newcommand{\dF}[1][]{d_{F_{#1}}}
\newcommand{\dgbis}{d_{g}}
\newcommand{\degh}{d_{\hb}}
\newcommand{\compose}{\textsf{Compose}\xspace}
\newcommand{\InRadical}{\textsf{InRadical}\xspace}
\newcommand{\VectorBasis}{\textsf{VectorBasis}\xspace}
\newcommand{\InvariantSet}{\textsf{InvariantSet}\xspace}
\newcommand{\ComputeMatrix}{\textsf{ComputeMatrix}\xspace}
\newcommand{\InvariantSetBranch}{\textsf{InvariantSetBranch}\xspace}
\newcommand{\ComputeMatrixBranch}{\textsf{ComputeMatrixBranch}\xspace}
\newcommand{\TruncatedClass}{\textsf{TruncatedClass}\xspace}
\newcommand{\TruncatedIdeal}{\textsf{TruncatedIdeal}\xspace}
\newcommand{\TruncatedIdealBranch}{\textsf{TruncatedIdealBranch}\xspace}
\newcommand{\ParametricSol}{\textsf{ParametricSol}\xspace}
\newcommand{\CheckPI}{\textsf{CheckPI}\xspace}
\newcommand{\CheckPIBranch}{\textsf{CheckPIBranch}\xspace}
\newcommand{\coefficients}{\textsf{coefficients}\xspace}
\theoremstyle{plain}
\newtheorem{theorem}{Theorem}[section]
\newtheorem{lemma}[theorem]{Lemma}
\newtheorem{proposition}[theorem]{Proposition}
\newtheorem{corollary}[theorem]{Corollary}
\newtheorem{definition}[theorem]{Definition}
\theoremstyle{definition}
\newtheorem{example}{Example}
\theoremstyle{remark}
\newtheorem{remark}{Remark}
\newcommand{\programbox}[2][\linewidth]{
\begin{samepage}\normalfont
\vspace*{.5em}\hspace*{0.3cm}\fbox{
\hspace*{-0.3cm}\begin{minipage}{#1}
\vspace*{-.1em}\begin{algorithmic}
#2
\end{algorithmic}
\vspace*{-.2em}
\end{minipage}
}\\
\end{samepage}
}
\newcommand{\programboxappendix}[2][\linewidth]{
\begin{samepage}\normalfont
\vspace*{.5em}\hspace*{0.0cm}\fbox{
\hspace*{-0.3cm}\begin{minipage}{#1}
\vspace*{-.1em}\begin{algorithmic}
#2
\end{algorithmic}
\vspace*{-.2em}
\end{minipage}
}\\
\end{samepage}
}
\newcommand{\rev}[1]{{#1}}
\begin{document}
\title{Algebraic and Algorithmic Methods for Computing Polynomial Loop Invariants}

\author{Erdenebayar \textsc{Bayarmagnai} \textsuperscript{a}\\
\texttt{erdenebayar.bayarmagnai@kuleuven.be}
\and Fatemeh \textsc{Mohammadi} \textsuperscript{a,b}\\
\texttt{fatemeh.mohammadi@kuleuven.be}
\and R\'{e}mi \textsc{Pr\'ebet} \textsuperscript{c}\\
\texttt{remi.prebet@ens-lyon.fr}}

\date{\itshape\small
\textsuperscript{a}Department of Computer Science, KU Leuven,
Leuven,
Belgium\\
\textsuperscript{b}Department of Mathematics, KU Leuven,
Leuven,
Belgium\\
\textsuperscript{c}Inria, CNRS, ENS de Lyon, Université Claude Bernard Lyon 1, LIP, UMR 5668, 69342, Lyon cedex 07, France\\[.5em]
\today}

\maketitle
\vspace*{-2em}
\begin{abstract}
Loop invariants are properties of a program loop that hold both before and after each iteration of the loop. They are often used to verify programs and ensure that algorithms consistently produce correct results during execution. Consequently, generating invariants becomes a crucial task for loops. We specifically focus on polynomial loops, where both the loop conditions and the assignments within the loop are expressed as polynomials. Although computing polynomial invariants for general loops is undecidable, efficient algorithms have been developed for certain classes of loops. For instance, when all assignments within a while loop involve linear polynomials, the loop becomes solvable. In this work, we study the more general case, where the polynomials can have arbitrary degrees.  

Using tools from algebraic geometry, we present two algorithms designed to generate all polynomial invariants within a given vector subspace, for a branching loop with nondeterministic conditional statements. These algorithms combine linear algebraic subroutines with computations on polynomial ideals. They differ depending on whether the initial values of the loop variables are specified or treated as parameters. Additionally, we present a much more efficient algorithm for generating polynomial invariants of a specific form, applicable to all initial values. This algorithm avoids expensive ideal computations.  
\end{abstract}
\section{Introduction}

Loop invariants are properties that hold both before and after each loop iteration in a program. When a loop invariant takes the form of a polynomial equation or a polynomial inequality, it is called a polynomial invariant of the loop. In this paper, we focus exclusively on polynomial equation invariants, which we refer to simply as polynomial invariants. These play a crucial role in automating program verification, helping to ensure that algorithms consistently yield correct results during execution. 
In particular, several well-established safety verification methods, such as the Floyd-Hoare inductive assertion technique \cite{floyd1993assigning} and the termination verification via standard ranking functions technique \cite{manna2012temporal}, rely on loop invariants to verify correctness, enabling full automation of the verification process. 

In this paper, we specifically study polynomial loops, where the expressions in assignments are polynomials in program variables. Additionally, we consider only the case where the loop guard contains inequations, which are defined as the negation of an equation.  
\rev{Note here that allowing inequalities for invariants and guards would render the invariant verification problem more general than the positivity problem for linear recurrence
sequences, whose decidability status has been open for many decades (see \cite{kovacs2023algebra} and the references therein).}

More precisely, let $\ab=(a_1,\dotsc,a_n) \in \mathbb{C}^n$, and let $\hb = (h_1, \ldots, h_s)$ and $F = (f_1, \ldots, f_n)$ be two sequences of polynomials in $\mathbb{C}[x_1, \ldots, x_n]$, where the $x_i$'s represent \emph{program variables} and the $a_i$'s are the \emph{initial values} of the loop. Then, $\mL(\ab, \hb, F)$ denotes the loop: 
\begin{center}
  \programbox[0.5\linewidth]{
\State$(x_{1},\ldots, x_n)=(a_1,\ldots, a_n)$
\While{$h_1 \neq 0, \ldots, h_s\neq 0$}
\State $\begin{pmatrix}
x_1 \\
x_2 \\
\vdots \\
x_n
\end{pmatrix}
\xleftarrow{F}
\begin{pmatrix}
f_1\\
f_2\\
\vdots\\
f_n
\end{pmatrix}
$
\EndWhile
}\label{page:alg}
\end{center}
When no $\hb$ is considered, we write $\mL(\mathbf{a}, 1, F)$, as this corresponds to an infinite loop. Finally, we simply write $\mL$ when the corresponding loop is clear from the context.  
\medskip

In the following, consider a polynomial loop $\mL(\ab, \hb, F)$. We begin by defining the main object of interest to be computed in this paper.
\begin{definition}
The set $I_{\mL(\ab, \hb, F)}\subset \C[x_1,\ldots,x_n]$ of all polynomial invariants for $\mL(\ab, \hb, F)$ is called the \emph{invariant~ideal~of~$\mL$}. 
\end{definition}
The set defined above is thus called, as it is known (at least from \cite{rodriguez2004automatic}) to be an ideal of $\C[x_1, \ldots, x_n]$.  
Thanks to the Hilbert basis theorem, one can hope to compute a finite basis for this ideal to completely describe this invariant ideal.  
However, as recently shown in \cite{MMK2024}, this task is at least as hard as the Skolem problem, whose decidability has remained widely open for almost a century.  
Instead, in this paper, we aim to compute this invariant ideal partially, in the sense of the following definition.  
\begin{definition}  
Let $E$ be a \emph{finite}-dimensional vector subspace of $\C[x_1, \ldots, x_n]$. Then, the polynomial invariants of $\mL(\ab, \hb, F)$ in $E$  
are denoted by $I_{\mL(\ab, \hb, F),E}$.  
\end{definition}  
Note that this restricted set of polynomial invariants is a vector subspace of the finite-dimensional vector subspace $E$ of $\C[x_1, \ldots, x_n]$. Thus, we can reduce many routines to linear algebra problems and compute a vector basis for $I_{\mL(\ab, \hb, F), E}$.  
A classic choice for $E$ in the literature is the set of polynomials of degree bounded by some chosen constant. Making this degree large enough allows one to expect some polynomial invariant to be in $E$, while controlling the dimension of the problem.  

\medskip

\noindent\textbf{Related works.}
Over the past two decades, the computation of polynomial invariants for loops has been extensively studied, see e.g.~\cite{Unsolvableloops, de2017synthesizing, 
hrushovski2018polynomial,karr1976affine,
kovacs2008reasoning,
kovacs2023algebra,rodriguez2004automatic, 
rodriguez2007automatic, rodriguez2007generating, amrollahi2025solvable}.
For general loops, this problem is undecidable \cite{hrushovski2023strongest}. Therefore, special emphasis has been placed on certain families of loops, especially those in which the assertions are all linear or can be reduced to linear assertions.

A common approach for generating polynomial invariants entails creating a system of recurrence relations from a loop, obtaining a closed formula for this recurrence relation, and then computing polynomial invariants by removing the loop counter from the obtained closed formula (as in \cite{katz1976logical}). Note that it is straightforward to find such recursive formulas from a polynomial invariant. However, the reverse process is only feasible under very strong assumptions, as detailed in \cite{Unsolvableloops}. Specifically, one needs to identify polynomial relations among program variables, which is a challenging task. 

When each assignment within the loop is a linear function, Michael Karr introduced a pioneering algorithm for computing all linear invariants \cite{karr1976affine}. Subsequent studies, such as \cite{muller2004note} and \cite{rodriguez2007automatic}, have extended this algorithm to compute all polynomial invariants of bounded degree.
\rev{More recent approaches aim to compute the full invariant ideal. 
The algorithms of~\cite{hrushovski2018polynomial} and~\cite{klara2022computation} achieve this goal but suffer from prohibitive complexity: even the best ones remain multiple-exponential in the worst case. 
Nevertheless, restricted classes of loops admit substantially lower complexity. 
For instance, for non-branching loops with a single linear update, \cite{manssour2024simple} obtained a PSPACE algorithm, showing that efficient invariant generation is possible in suitably constrained settings.}

Another class of loops for which invariants have been successfully computed is the family of solvable loops. These loops are characterized by polynomial assignments that are either inherently linear or can be transformed into linear forms through a change of variables, as elaborated in \cite{de2016polynomial} and \cite{kovacs2008reasoning}. The techniques for generating all polynomial invariants of solvable loops are discussed in~\cite{de2016polynomial,rodriguez2004automatic}.
In~\cite{de2016polynomial}, solvable loops are transformed into linear loops, while in~\cite{rodriguez2004automatic} the authors rely on forward propagation and fixed-point computation. The methodology proposed in~\cite{kovacs2008reasoning} is specifically tailored for the special case of P-solvable loops, which is later extended in~\cite{humenberger2017invariant}.

Challenges persist when dealing with loops featuring non-linear or unsolvable assignments \rev{and nested nondeterministic branches}, as discussed in~\cite{Unsolvableloops, cachera2014inference, cyphert2024solvable, sankaranarayanan2004non, wang2024polynomial}. These methods generate polynomial invariants of a given degree; however, while they are sound, they do not guarantee completeness, meaning that invariants may be missed.
\rev{In particular, \cite{wang2024polynomial} searches for polynomial expressions of bounded degree that satisfy multi-dimensional C-finite recurrence relations. These expressions form a vector space, a viewpoint that conceptually parallels our own vector-space–based formulation.}

In \cite{muller2004computing}, Müller-Olm and Seidl employ ideas similar to ours in Algorithm~\ref{algo1}, with the notable difference that, through our geometric approach for computing the invariant set, we have established a better stopping criterion by comparing the equality of radical ideals rather than the ideals themselves. They also impose algebraic conditions on the initial values and subsequently compute polynomial invariants that must apply to all initial values satisfying these constraints, meaning some invariants applicable to all but finitely many initial values may be overlooked. 
In contrast, Algorithm~\ref{algo:class} yields polynomial invariants that depend on the initial values, addressing a much broader problem, which, to our knowledge, has not been previously tackled.  
Moreover, Algorithm~\ref{alg2}, which tackles cases with fixed initial values, is significantly faster than Algorithm~\ref{matrixalgo} and its counterpart in~\cite{muller2004computing}.~Additionally, Algorithm~\ref{alg3} generates all general invariants of a specific form whenever they exist, enabling us to generate invariants produced in~\cite{Unsolvableloops}.  

Finally, \cite{chatterjee2020polynomial} considers polynomial invariants as inequalities using tools like Putinar's Positivstellens\"atz.  However, this presents a different problem, 
involving semi-algebraic sets $X$ where $F(X) \subseteq X$. However, polynomial invariants do not necessarily satisfy this property.

\medskip
\noindent\textbf{Our contributions.} 
In this work, we consider the problem of fully generating the set $I_{\mL(\ab,\hb,F),E}$ of all polynomial invariants in a given vector space~$E$ for loops with polynomial updates of arbitrary degree and inequations as guards. 
In Section~\ref{sectioninvariantset}, we introduce invariant sets \rev{and present Algorithm~\ref{algo1} for computing them. In Section~\ref{sec: testinginvariants}, we build on this to develop Algorithm~\ref{algo:verify}, which verifies whether a given polynomial is an invariant.}
In Section~\ref{sectiongenerating}, using the results of Sections~\ref{sectioninvariantset} and~\ref{sec: testinginvariants}, we design two algorithms (Algorithms~\ref{matrixalgo} and~\ref{alg2}) to compute a basis of $I_{\mL(\ab,\hb,F),E}$, depending on whether the vector~$\ab$ is fixed or remains symbolic. These results and algorithms are further extended to branching loops with nondeterministic conditional statements.
In Section~\ref{sec:lift}, we introduce a specialized Algorithm~\ref{alg3}, which is significantly more efficient for computing all polynomial invariants of the form $f(x_1,\dotsc,x_n) - f(a_1,\dotsc,a_n)$. Notably, this algorithm avoids any computation with polynomial ideals. For example, while Algorithm~\ref{alg2} fails to compute all polynomial invariants of degree up to~5 within 360 seconds on the benchmarks \texttt{Fib1}, \texttt{Fib2}, and \texttt{Fib3}, Algorithm~\ref{alg3} successfully computes all invariants of this form up to degree~10 in under 120 seconds on the same benchmarks.
In Section~\ref{sec:termination}, we relate termination of loops with inequalities in their guards to polynomial invariants, deriving several necessary or sufficient conditions.  
Section~\ref{sec:implementation} presents an implementation of our algorithms together with experimental results.

\medskip
\noindent\textbf{Extended version.}
This paper is an extended version of the short paper \cite{bayarmagnai2024algebraic} published in the proceedings of ISSAC~2024. It provides full details, strengthens several results, and introduces multiple new contributions. Specifically, in Section~\ref{subsection:parinva}, we present Algorithm~\ref{algo:class}, which computes an explicit description of polynomial invariants that depend on the initial values. 
In Section~\ref{subsection:branching}, we extend the previously introduced algorithms to handle branching loops with nondeterministic conditionals and inequations in their guards; the earlier version addressed only single-path loops without guard conditions. 
In Section~\ref{sec:lift}, we considerably generalize \cite[Proposition~4.1]{bayarmagnai2024algebraic} and develop Algorithm~\ref{alg3}, which computes all polynomial invariants of a specific form, up to a given degree, for branching loops. This algorithm is independent of the methods in Section~\ref{sectiongenerating} and avoids expensive ideal computations. 
Finally, in Section~\ref{sec:implementation}, we compare the new Algorithm~\ref{alg3} with \textsf{Polar}~\cite{DBLP:journals/pacmpl/MoosbruggerSBK22} and provide new benchmarks obtained from an optimized implementation of Algorithm~\ref{alg2}. 
\rev{The following figure summarizes the dependencies among the algorithms. The core component is \InvariantSet\ (Algorithm~\ref{algo1}), and all other algorithms except Algorithm~\ref{alg3} rely on it.}

\begin{figure}[H]
\hspace*{-.5cm}
\begin{tikzcd}[column sep=4pt]
 & \textsf{InvariantSet}\;\text{(Alg.~\ref{algo1})} \arrow[dl] \arrow[d] & \\
 \textsf{CheckPI}\;\text{(Alg.~\ref{algo:verify})} \arrow[d] & \textsf{ComputeMatrix}\;\text{(Alg.~\ref{matrixalgo})} \arrow[dl] \arrow[d] & \\
  \textsf{TruncatedIdeal}\;\text{(Alg.~\ref{alg2})} & 
 \textsf{TruncatedClass}\;\text{(Alg.~\ref{algo:class})} & \textsf{SpecificForm}\;\text{(Alg.~\ref{alg3})}
\end{tikzcd}
\caption{Dependencies between Algorithms}
\end{figure}

\section{The invariant sets}\label{sectioninvariantset}
In this section, we present a method for identifying polynomial invariants by introducing invariant sets. As we will see later, these are naturally related to polynomial invariants.  
We denote the field of complex numbers by $\mathbb{C}$.
Throughout the paper, $\xb$ denotes $x_1, \dots, x_n$, and $\C[\xb]$ the polynomial ring in these variables.

\subsection{Properties of the invariant sets}
We start with the central technical object introduced in this paper.
\begin{definition}\label{def:invariantset}
Let $F : \Cn \longrightarrow \Cn$ be a map and $X$ be a subset of $\Cn$.
The invariant set $S_{(F,X)}$ of $(F,X)$ is defined as:
\begin{center}
    $S_{(F,X)} = \{x \in X \mid \forall m\in \N,\, F^{m}(x) \in X\},$
\end{center}
where $F^{0}(x)=x$ and $F^{m}(x)= F(F^{m-1}(x))$ for any $m>1$.
\end{definition}

Let us first establish an effective description of invariant sets. This will allow us to derive an algorithm for the particular case of algebraic varieties.

\begin{lemma}\label{lem:largeset}
     Let $X$ and $F$ be as above. Then $S_{(F,X)}$ is the largest $Y \subset X$ such that $F(Y)\subset Y$.
\end{lemma}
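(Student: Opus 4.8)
The plan is to check the two properties that characterize "the largest subset with the stated stability property": first, that $S := S_{(F,X)}$ itself is a subset of $X$ satisfying $F(S)\subset S$; second, that any $Y\subset X$ with $F(Y)\subset Y$ is contained in $S$. Both parts are elementary set-theoretic arguments that only unwind Definition~\ref{def:invariantset}.

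For the first part, the inclusion $S\subset X$ is immediate (apply the defining condition with $m=0$, using $F^0(x)=x$). To prove $F(S)\subset S$, I would take $x\in S$ and verify that $F(x)$ satisfies the membership criterion for $S$: on one hand $F(x)=F^1(x)\in X$ since $x\in S$; on the other hand, for every $m\in\N$ we have $F^m(F(x))=F^{m+1}(x)\in X$, again because $x\in S$. Hence $F(x)\in S$, as desired.

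For the maximality part, let $Y\subset X$ satisfy $F(Y)\subset Y$, and fix $x\in Y$. A straightforward induction on $m$ shows $F^m(x)\in Y$ for all $m\in\N$: the base case $m=0$ reads $F^0(x)=x\in Y$, and if $F^m(x)\in Y$ then $F^{m+1}(x)=F(F^m(x))\in F(Y)\subset Y$. Since $Y\subset X$, we conclude $F^m(x)\in X$ for all $m$, i.e.\ $x\in S$. Therefore $Y\subset S$, which completes the proof.

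There is no substantial obstacle here; the only thing to be careful about is the index bookkeeping for the iterates (that $F^0=\mathrm{id}$ and that $\N$ is taken to include $0$), so that the condition "$F^m(x)\in X$ for all $m\in\N$" simultaneously encodes $x\in X$ and invariance under all forward iterates of $F$.
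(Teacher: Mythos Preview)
Your proof is correct and follows essentially the same approach as the paper's: first show $F(S_{(F,X)})\subset S_{(F,X)}$ via the identity $F^m(F(x))=F^{m+1}(x)$, then show maximality by observing that $F(Y)\subset Y$ forces $F^m(Y)\subset Y\subset X$ for all $m$. Your version is slightly more explicit (you spell out the induction and the case $m=0$), but the argument is the same.
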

\begin{proof}
 Let $x \in S_{(F,X)}$. By definition, we have 
$   F^{m}(F(x)) \; = \; F^{m+1}(x) \in X$ for any $m\geq 0$.
Thus, $F(x) \in S_{(F,X)}$ for every $x \in S_{(F,X)}$,~i.e.~$F(S_{(F,X)}) \subset S_{(F,X)}$. 
 Conversely, let $Y \subset X$ such that $F(Y)\subset Y$. Thus, for any $m\geq 0$, we have 
     $F^{(m)}(Y)\subset Y\subset X$.
Hence, $Y\subset S_{(F,X)}$ which completes the proof.
\end{proof}

We now consider the case where the above $X$ and $F$ are respectively an algebraic variety and a polynomial map. Thanks to algebraic-geometry algorithmic tools, we can develop an effective method for incrementally computing invariant sets. First, we fix some terminology from algebraic geometry and refer to \cite{cox2013ideals, kempf_1993} for further details.

Let $S$ be a set of polynomials in $\mathbb{C}[\xb]$. Then, the \emph{algebraic variety} $\V(S)$ associated to $S$ is the common zero set in $\C^n$ of the polynomials in $S$. 
In particular, $\V(S) = \V(\langle S \rangle)$, where $\langle S \rangle$ is the ideal generated by~$S$.
Conversely, the \emph{defining ideal} of a subset $Y\subset\mathbb{C}^n$ is the set of polynomials in $\mathbb{C}[\xb]$ that vanish on $Y$. 
The algebraic variety associated to the ideal $I(Y)$ is called the \emph{Zariski closure} of $Y.$ 
A map $F: \C^n \to \C^m$ is called a \emph{polynomial map}, if there exist $f_1,\ldots,f_m$ in $\C[\xb]$, such that $F(x) = (f_1(x),\ldots,f_m(x))$ for all $x \in \C^n$. For simplicity, we will henceforth refer to polynomial maps and their associated polynomials interchangeably.

In the following, we fix $F: \mathbb{C}^n \longrightarrow \mathbb{C}^m$ to be a polynomial map and $X \subset \mathbb{C}^m$ an algebraic variety.
 The following lemma is folklore in algebraic geometry. We provide a proof here as we could not find a suitable reference.
\begin{lemma}\label{lem:equtionspreimage}
Assume that $X=\V(g_1, \ldots, g_k)$ and $F = (f_{1}, \ldots, f_{m})$.
Then 
\[
F^{-1}(X) = \V(g_{1}(f_1, \ldots , f_m), \ldots, g_{k}(f_1, \ldots, f_m)) \subset \mathbb{C}^n. 
\]
In particular, $F^{-1}(X) \subset \C^n$ is an algebraic variety.
\end{lemma}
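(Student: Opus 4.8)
The statement to prove is purely set-theoretic once we unwind the definitions, so the plan is to show the two inclusions $F^{-1}(X) \subseteq \V(g_1(f_1,\dots,f_m),\dots,g_k(f_1,\dots,f_m))$ and the reverse inclusion by a direct chase on points of $\C^n$. For a point $x \in \C^n$, membership $x \in F^{-1}(X)$ means $F(x) \in X = \V(g_1,\dots,g_k)$, which by definition of an algebraic variety means $g_i(F(x)) = 0$ for every $i \in \{1,\dots,k\}$. The only thing to observe is that $g_i(F(x)) = g_i(f_1(x),\dots,f_m(x)) = (g_i(f_1,\dots,f_m))(x)$, i.e. that evaluating the composed polynomial $g_i \circ (f_1,\dots,f_m)$ at $x$ coincides with evaluating $g_i$ at the point $F(x)$. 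This is just the definition of polynomial composition / substitution, but it is the crux and deserves an explicit sentence.

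Concretely, I would write: let $h_i := g_i(f_1,\dots,f_m) \in \C[\xb]$ for $i = 1,\dots,k$, where $h_i$ is the polynomial obtained by substituting $f_j$ for the $j$-th variable of $g_i$. Then for any $x \in \C^n$ we have the chain of equivalences
\[
x \in F^{-1}(X) \iff F(x) \in X \iff g_i(F(x)) = 0 \text{ for all } i \iff h_i(x) = 0 \text{ for all } i \iff x \in \V(h_1,\dots,h_k),
\]
where the middle equivalence uses $X = \V(g_1,\dots,g_k)$ and the next one uses $h_i(x) = g_i(f_1(x),\dots,f_m(x)) = g_i(F(x))$. Since this holds for every $x$, the two sets are equal. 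The final sentence ``$F^{-1}(X)$ is an algebraic variety'' is then immediate, because it has been exhibited as the zero set of the explicit finite polynomial system $\{h_1,\dots,h_k\} \subset \C[\xb]$.

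There is no real obstacle here; the ``hard part'' is only bookkeeping — being careful that $g_i(f_1,\dots,f_m)$ denotes the formal substitution of the $f_j$'s into $g_i$ and that this formal object, when evaluated at a point $x$, agrees with $g_i$ evaluated at $F(x)$. One could make this fully rigorous by noting that evaluation at a fixed point $x$ is a ring homomorphism $\C[\xb] \to \C$ and that substitution is the unique ring homomorphism $\C[y_1,\dots,y_m] \to \C[\xb]$ sending $y_j \mapsto f_j$, so the composition of these two homomorphisms is the evaluation homomorphism $\C[y_1,\dots,y_m] \to \C$ sending $y_j \mapsto f_j(x)$, which is precisely ``evaluate at $F(x)$.'' But for the level of this paper a one-line justification invoking the definition of polynomial composition should suffice.
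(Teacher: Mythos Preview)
Your proof is correct and follows essentially the same approach as the paper: define $h_i = g_i(f_1,\dots,f_m)$, observe that $x \in F^{-1}(X)$ iff $g_i(F(x)) = 0$ for all $i$ iff $h_i(x) = 0$ for all $i$, and conclude $F^{-1}(X) = \V(h_1,\dots,h_k)$. The paper's version is terser, omitting your extra justification that polynomial substitution commutes with evaluation, but the argument is identical.
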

\vspace{-2mm}
\begin{proof}
By definition,
  \begin{align*}
    F^{-1}(X) 
 &= \{ x \in \C^n \mid \forall\, 1 \leq i \leq k,\, 
 g_i(f_1(x),\ldots,f_m(x)) = 0\}.
  \end{align*}
 Let $h_i =  g_i(f_1,\ldots,f_m)\in\C[\xb]$, for all $1 \leq i \leq k$. Then
 $
    F^{-1}(X) = \V(h_1,\ldots,h_k),
 $ 
 and so $F^{-1}(X)$ is an algebraic variety.
\end{proof}
We now present an effective method to compute invariant sets of polynomial maps on algebraic varieties using an iterative outer approximation that converges to the goal. The method is effective because the sequence eventually stabilizes, which is easy to detect.
We will denote $(F^{m})^{-1}$~as~$F^{-m}$.  
\begin{proposition}\label{prop:stabilization}
Let $X_0 = X$ and, for all $m \in \N$, let $X_m=\bigcap\limits_{i=0}^{m}F^{-i}(X).$\\[-.5em] 
Then, the following hold:
\begin{itemize}
        \item[$(a)$] $\SFX \subset X_{m+1} \subset X_{m}$ for all $m\geq 0$;
        \item[$(b)$] there exists 
        $N\in\mathbb{N}$ such that $X_{N} = X_{m}$ for all $m > N$;
        \item[$(c)$]  if $X_{N}=X_{N+1}$ for some $N$, then $X_{N}=X_{m}$ for all $m> N$;
        \item[$(d)$]  the invariant set $S_{(F,X)}$ is exactly $X_{N}$.
    \end{itemize}
\end{proposition}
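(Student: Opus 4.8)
The plan is to prove the four items roughly in the order $(a) \Rightarrow (c) \Rightarrow (b) \Rightarrow (d)$, since $(b)$ will follow once we combine $(c)$ with a Noetherianity argument, and $(d)$ is where the real content lies.

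\textbf{Item $(a)$.} This is essentially bookkeeping. For the inclusion $X_{m+1} \subset X_m$, note $X_{m+1} = X_m \cap F^{-(m+1)}(X)$ directly from the definition, so it is immediate. For $S_{(F,X)} \subset X_{m+1}$, I would take $x \in S_{(F,X)}$; by definition $F^i(x) \in X$ for every $i \in \N$, hence $x \in F^{-i}(X)$ for every $i$, and in particular $x \in \bigcap_{i=0}^{m+1} F^{-i}(X) = X_{m+1}$. One should also record here (via Lemma~\ref{lem:equtionspreimage}, applied to the polynomial map $F^i$) that each $F^{-i}(X)$, and hence each $X_m$, is an algebraic variety; this is what makes the stabilization argument in $(b)$ work.

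\textbf{Items $(c)$ and $(b)$.} For $(c)$, suppose $X_N = X_{N+1}$. The key observation is the ``shift'' identity $X_{m+1} = X \cap F^{-1}(X_m)$ for all $m \geq 0$, which follows because $F^{-1}\big(\bigcap_{i=0}^m F^{-i}(X)\big) = \bigcap_{i=0}^m F^{-(i+1)}(X) = \bigcap_{i=1}^{m+1} F^{-i}(X)$, and intersecting with $X = X_0 = F^{-0}(X)$ recovers $X_{m+1}$. Then I would induct: if $X_m = X_{m+1}$ for some $m \geq N$, applying $X \cap F^{-1}(\cdot)$ to both sides gives $X_{m+1} = X_{m+2}$. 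Hence $X_N = X_m$ for all $m > N$. For $(b)$, the chain $X_0 \supset X_1 \supset X_2 \supset \cdots$ is a descending chain of algebraic varieties, equivalently an ascending chain of their defining ideals in $\C[\xb]$, which is Noetherian; so the chain of ideals stabilizes, meaning $X_N = X_{N+1}$ for some $N$, and then $(c)$ gives $X_N = X_m$ for all $m > N$.

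\textbf{Item $(d)$.} This is the main point. Let $N$ be as in $(b)$, so $X_N = X_{N+1}$. From $(a)$ we already have $S_{(F,X)} \subset X_N$, so it remains to show $X_N \subset S_{(F,X)}$. By Lemma~\ref{lem:largeset}, it suffices to show $X_N \subset X$ and $F(X_N) \subset X_N$. The first is clear since $X_N \subset X_0 = X$. For the second, using the shift identity and stabilization, $F(X_N) \subset X_N$ is equivalent to $X_N \subset F^{-1}(X_N)$; but $X_N = X_{N+1} = X \cap F^{-1}(X_N) \subset F^{-1}(X_N)$, which is exactly what we need. Hence $F(X_N) \subset X_N$, so $X_N \subset S_{(F,X)}$ by Lemma~\ref{lem:largeset}, and combined with $(a)$ we conclude $S_{(F,X)} = X_N$.

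\textbf{Main obstacle.} The only genuinely delicate point is making sure the ``shift'' identity $X_{m+1} = X \cap F^{-1}(X_m)$ is used correctly, since it is what links consecutive terms and powers the inductions in both $(c)$ and $(d)$; everything else is either set-theoretic manipulation or an invocation of Noetherianity and Lemma~\ref{lem:largeset}. I would be careful to state this identity explicitly as a preliminary step before proving $(c)$ and $(d)$.
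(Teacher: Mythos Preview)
Your proposal is correct and follows essentially the same approach as the paper: both use the shift identity $X_{m+1} = X \cap F^{-1}(X_m)$ for $(c)$, Noetherianity of $\C[\xb]$ for $(b)$, and Lemma~\ref{lem:largeset} together with $F(X_N)\subset X_N$ for $(d)$. The only notable difference is that in $(a)$ the paper proves $S_{(F,X)}\subset X_m$ by induction via Lemma~\ref{lem:largeset}, whereas your direct argument from the definition of $S_{(F,X)}$ is shorter and avoids that lemma entirely; similarly, in $(d)$ the paper verifies $F(X_N)\subset X_N$ by an elementwise computation while you obtain it cleanly from the shift identity, but these are stylistic variations rather than different ideas.
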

\begin{proof}
 $(a)$ The second inclusion is straightforward from the definition as 
 $$X_{m+1}=X_m\cap F^{-(m+1)}(X)\subseteq X_m.$$
 We now proceed to prove that $S_{(F,X)}\subseteq X_{m}$ for all $m$, by induction. 
By definition, $S_{(F,X)}$ is a subset of $X=X_{0}$ which proves the base case.
Now, let $m>0$ and assume that $S_{(F,X)}\subseteq X_{m-1}$.
Then, by Lemma~\ref{lem:largeset},
$$F(S_{(F,X)})\subset S_{(F,X)}\subset X_{m-1}.$$
Hence, by the induction assumption 
$S_{(F,X)}\subset F^{-1}(X_{m-1})\cap X_0 = X_{m}.$
 
        \smallskip
    
 \noindent $(b)$ From $(a)$, we have the following descending chain 
        \[ 
            X_{0} \supseteq \cdots\supseteq X_{m} \supseteq X_{m+1} \supseteq \cdots,
        \]
       which are algebraic varieties by Lemma~\ref{lem:equtionspreimage}. 
   Then, by \cite[Proposition 1.2]{hartshorne2013algebraic}, 
       \[
        I(X_{0})\subseteq \cdots \subseteq I(X_{m})\subseteq I(X_{m+1})\subseteq \cdots.
        \]
    Since $\C[x_{1},\ldots,x_{n}]$ is a Noetherian ring, there exists $N \in \mathbb{N}$ such that $I(X_{N})=I(X_{m})$ for all $m> N$. Therefore, 
    $$X_N = \V(I(X_N)=\V(I(X_m))=X_m \text{ for all } m\geq N.$$

\smallskip 
\noindent $(c)$ For such an integer $N$, the following allows us to conclude directly
\[
    X_{N+2} =X\cap F^{-1}(X_{N+1})=X\cap F^{-1}(X_N)=X_{N+1}.
 \] 

\smallskip
\noindent $(d)$ 
Let $N$ be as above and $x\in X_{N}$. Then, $ F^{N}(F(x)) = F^{N+1}(x) \in X $, since $X_{N}=X_{N+1}$.
Hence, $F(X_N) \subset X_N$, and as $\SFX$ is contained in $X_N$, by $(a)$, the inclusion is an equality by Lemma~\ref{lem:largeset}.
\end{proof}

\begin{remark}\label{remark1}
  By Theorem~\ref{prop:stabilization}.$(d)$, the invariant set $S_{(F,X)}$ is an algebraic variety, since each $X_{i}$ is an algebraic variety.
\end{remark}

\subsection{Algorithm for computing the invariant sets}
We now present Algorithm~\ref{algo1} for computing the invariant set associated with an algebraic variety and a polynomial map described by sequences of multivariate polynomials. We restrict ourselves to \emph{rational} coefficients, since this covers the target applications, and we need to work in a computable field for effectiveness. 
\rev{All statements in this section, including those on arithmetic complexity, remain valid over any computational field.}
We first define the subroutines involved in this algorithm.

\begin{itemize}
\item the procedure \compose takes as input two sequences of polynomials $\gb=(g_1,\dotsc,g_k)$ and $F=(f_1,\dotsc,f_n)$ in $\Q[\xb]$ and outputs a sequence of polynomials $(h_1,\dotsc,h_k)$, such that $h_i = g_i(f_1,\dotsc,f_n)$ for all $i$.
\item the procedure \InRadical takes as input a sequence $\gbt$ and a set $S$ both in $\Q[\xb]$ and decides if all the polynomials in $\gbt$ belong to the \emph{radical} of the ideal generated by $S$. This can be done following, for example, \cite[Chap. 4, \S 2, Proposition 8]{cox2013ideals}.
\end{itemize}

\begin{algorithm}
\caption{\InvariantSet}\label{algo1}
\begin{algorithmic}[1]
\Require Two sequences $\gb$ and $F = (f_1,\ldots, f_n)$ in $\mathbb{Q}[\xb]$.
\Ensure List of polynomials whose common zero-set is $S_{(F,{\V(\gb)})}$.
\State $S \gets \{\gb\};$
\State\label{InSStep2} $\gbt \gets \compose(\gb,\,F);$
\While{ $\InRadical(\gbt ,\,S)==\texttt{False}$}
\State $S \gets S \cup \{\gbt\};$
\State\label{InSStep5} $\gbt \gets \compose(\gbt,\,F);$
\EndWhile
\State \Return $S$;
\end{algorithmic}
\end{algorithm}

\smallskip
We now prove the termination and correctness of Algorithm~\ref{algo1}.

\begin{theorem}\label{thm:corralgo1}
    On input two sequences $\gb=(g_1,\ldots, g_k)$ and $F=(f_1,\ldots,f_n)$ of polynomials in $\Q[\xb]$, Algorithm~\ref{algo1} terminates and outputs a sequence of polynomials whose vanishing set is the invariant set $S_{(F, \V(g_1,\ldots, g_k))}$. 
\end{theorem}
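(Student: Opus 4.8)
The plan is to establish the two required properties — termination and correctness — by reducing everything to Proposition~\ref{prop:stabilization}, using the subroutines as the effective bridge between the geometric statements there and the algebraic manipulations performed by the algorithm. The key observation is that the loop maintains the invariant that, after the $m$-th pass through the \textbf{while} test, the set $S$ is a list of polynomials whose common vanishing set is exactly $X_m = \bigcap_{i=0}^m F^{-i}(\V(\gb))$, and that $\gbt$ at that point is a sequence of polynomials cutting out $F^{-(m+1)}(\V(\gb))$.

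First I would set up this loop invariant by induction on the number of iterations. The base case is immediate: before the loop, $S = \{\gb\}$ so $\V(S) = \V(\gb) = X_0$, and line~\ref{InSStep2} sets $\gbt = \compose(\gb, F)$, which by the specification of \compose is the sequence $(g_1(f_1,\dots,f_n), \dots, g_k(f_1,\dots,f_n))$; by Lemma~\ref{lem:equtionspreimage} its zero set is $F^{-1}(\V(\gb)) = F^{-1}(X)$. For the inductive step, suppose before some iteration $\V(S) = X_m$ and $\V(\gbt) = F^{-(m+1)}(X)$. Inside the loop body, $S$ becomes $S \cup \{\gbt\}$, so $\V(S)$ becomes $X_m \cap F^{-(m+1)}(X) = X_{m+1}$; then $\gbt$ is updated to $\compose(\gbt, F)$, and since composition commutes with iteration of $F$ — i.e.\ $(g_i \circ F^{(m+1)}) \circ F = g_i \circ F^{(m+2)}$, which follows by repeatedly applying Lemma~\ref{lem:equtionspreimage} / the definition of \compose — its zero set becomes $F^{-(m+2)}(X)$. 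This maintains the invariant.

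Next I would handle the stopping test. The loop exits precisely when $\InRadical(\gbt, S) = \texttt{True}$, i.e.\ when every polynomial of $\gbt$ lies in $\sqrt{\langle S\rangle}$. By the Nullstellensatz, and using the loop invariant, this holds if and only if every polynomial of $\gbt$ vanishes on $\V(S) = X_m$, i.e.\ if and only if $X_m \subseteq \V(\gbt) = F^{-(m+1)}(X)$, which (intersecting with $X_m \subseteq X$) is equivalent to $X_m = X_{m+1}$. So the loop exits at the first index $m = N$ with $X_N = X_{N+1}$. By Proposition~\ref{prop:stabilization}$(b)$ such an $N$ exists — the descending chain of varieties stabilizes because $\Q[\xb]$ (equivalently $\C[\xb]$) is Noetherian — so the loop terminates. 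And by Proposition~\ref{prop:stabilization}$(c)$--$(d)$, for this $N$ we have $X_N = X_m$ for all $m > N$ and $S_{(F, \V(\gb))} = X_N = \V(S)$, so the returned list $S$ has the claimed vanishing set, proving correctness.

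One subtlety worth spelling out is the gap between working over $\Q$ and the geometric statements, which are phrased over $\C$: the variety $\V(\gb)$ in Lemma~\ref{lem:equtionspreimage} and Proposition~\ref{prop:stabilization} is taken in $\C^n$, while the polynomials stay in $\Q[\xb]$. This is harmless — \compose and ideal membership in the radical are purely algebraic operations that can be carried out over $\Q$, and the correspondence between the chain of varieties in $\C^n$ and the ascending chain of radical ideals in $\Q[\xb]$ (or its extension to $\C[\xb]$) is faithful enough to transport the stabilization argument — but it is the point I would be most careful about in the write-up. The main obstacle, then, is not any single deep step but rather stating the loop invariant cleanly and verifying that \InRadical implements the test "$X_m = X_{m+1}$"; once that dictionary between the algorithm's state and the sets $X_m$ of Proposition~\ref{prop:stabilization} is in place, termination and correctness follow directly from parts $(b)$, $(c)$, $(d)$ of that proposition.
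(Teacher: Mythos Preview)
Your proposal is correct and follows essentially the same approach as the paper: establish that after $m$ iterations the set $S$ cuts out $X_m$ (the paper does this by directly writing $S_m = \{\gb,\gb(F),\dots,\gb(F^m)\}$ rather than as an explicit loop invariant), use the Nullstellensatz to identify the \InRadical test with the condition $X_m = X_{m+1}$, and invoke Proposition~\ref{prop:stabilization}(b) and (d) for termination and correctness. Your additional remarks on the $\Q$ versus $\C$ subtlety and on characterizing the exit index as the \emph{first} $N$ with $X_N = X_{N+1}$ are sound and go slightly beyond what the paper spells out here.
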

\begin{proof}
Let $S_0 = \gb$, and for $m \geq 1$, let $S_m$ be the set contained in $S$ after completing $m$ iterations of the \textbf{while} loop. Similarly, let $\gbt_0 = \gb$, $\gbt_1 = \gb(F)$, and $\gbt_{m+1}$ be the sequence contained in $\gbt$ after $m$ iterations. 

Let $m \geq 0$, and let $X_m$ be as in Proposition~\ref{prop:stabilization}.  
By construction,
$
S_{m} \,=\, \left\{\gbt_0,\dotsc, \gbt_{m}\right\} $, that is $S_m\,=\,\left\{\gb, \gb(F),\ldots, \gb(F^{m})\right\}
$, and so by Lemma~\ref{lem:equtionspreimage},
\begin{equation*}
    X_{m} \,=\, \bigcap\limits_{i=0}^{m}F^{-i}(\V(\gb)) \,=\, \bigcap\limits_{i=0}^{m}\V(\gb(F^{i}))
    \,=\, \V(S_{m}).
\end{equation*} 
By Proposition~\ref{prop:stabilization}.(b), there exists $N\in\N$ such that $X_{N}=X_{N+1}$, that is $\V(S_N)=\V(S_{N+1})$. 
This means that the polynomial 
$\gbt_{N+1} = \gb(F^{N+1})$ vanishes on $\V(S_N)$, or equivalently by the Hilbert's Nullstellensatz \cite[Chap 4, \S 1, Theorem 2]{cox2013ideals}, that $\gbt_{N+1}$ belongs to $\sqrt{I(S_N)}$.

Hence, Algorithm~\ref{algo1} terminates after $N$ iterations of the \textbf{while} loop and outputs $S_N$. In particular, by Proposition~\ref{prop:stabilization}.(d), $S_{(F,X)} = X_N = \V(S_N)$, which proves the correctness of Algorithm~\ref{algo1}.
\end{proof}

\begin{proposition}\label{prop.complexity}
Using the notation of Proposition~\ref{prop:stabilization}, let $N$ be the smallest integer such that $X_N = X_{N+1}$. Let $\dg$ and $\dF$ denote bounds on the degree of the polynomials in $\gb$ and $F$, respectively, which are the inputs to Algorithm~\ref{algo1}. Then, Algorithm~\ref{algo1} satisfies the following:
\begin{itemize}
    \item[$(a)$] the {\normalfont{\bfseries while}} loop terminates after exactly $N$ iterations;
    \item[$(b)$] it performs at most $k\cdot \left(\dg\cdot \dF^N\right)^{O(n^2)}$ arithmetic operations in $\Q$.
\end{itemize}
\end{proposition}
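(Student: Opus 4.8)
The plan is to analyze the two claims separately, but both rest on the same structural fact established in Proposition~\ref{prop:stabilization}: the sets $X_m$ form a descending chain that stabilizes exactly at step $N$, and Algorithm~\ref{algo1} detects this via the radical membership test. For part $(a)$, I would argue that the \textbf{while} loop runs for exactly $N$ iterations. On the one hand, if the loop stopped after $m < N$ iterations, then $\InRadical(\gbt_{m+1}, S_m)$ returned \texttt{True}, meaning $\gb(F^{m+1}) \in \sqrt{I(\V(S_m))}$, which by the Nullstellensatz forces $\V(S_m) \subseteq \V(\gb(F^{m+1}))$, hence $X_{m+1} = X_m \cap F^{-(m+1)}(\V(\gb)) = X_m$; but this would contradict the minimality of $N$ (recall $X_{m+1} = X_m$ implies stabilization by Proposition~\ref{prop:stabilization}.$(c)$). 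On the other hand, by the same computation run in reverse, $X_N = X_{N+1}$ exactly means $\gb(F^{N+1})$ vanishes on $\V(S_N)$, so the test returns \texttt{True} at iteration $N$ and the loop halts. This matches the termination argument already given in the proof of Theorem~\ref{thm:corralgo1}, so part $(a)$ is essentially a restatement with the minimality of $N$ made explicit.

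For part $(b)$, I would bound the cost of each iteration and then multiply by $N$. In iteration $i$ (for $i = 0, \dots, N$), the algorithm performs one \compose call producing $\gbt_{i+1} = \gb(F^{i+1})$ and one \InRadical call on $(\gbt_{i+1}, S_i)$. First I would track degrees: the polynomials in $\gbt_{i}$ have degree at most $\dg \cdot \dF^{\,i}$, since composing with $F$ multiplies the degree bound by $\dF$ at each step; so the largest degree encountered over the whole run is $D := \dg \cdot \dF^{\,N}$. The \compose step is a polynomial substitution whose cost is dominated by the final one and is polynomial in the size of the output, hence $D^{O(n)}$ per polynomial and $k\cdot D^{O(n)}$ in total — negligible against the radical-membership cost. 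The dominant cost is the sequence of \InRadical calls. Using the standard approach (e.g.\ via the Rabinowitsch trick reducing radical membership to an ideal-membership / Gröbner basis computation in $n+1$ variables, as referenced after \cite[Chap. 4, \S 2, Proposition 8]{cox2013ideals}), testing whether the $k$ polynomials of $\gbt_{i+1}$ lie in $\sqrt{\langle S_i\rangle}$ costs at most $k$ times the cost of a single radical-membership test, and each such test on a system of polynomials of degree at most $D$ in $n$ variables costs $D^{O(n^2)}$ arithmetic operations in $\Q$ (the classical doubly-exponential-in-free-variables but polynomial-in-degree-raised-to-$n^2$ bound for Gröbner bases / the effective Nullstellensatz). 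Summing over the $N+1 \le D$ iterations is absorbed into the exponent, giving the claimed $k\cdot D^{O(n^2)} = k\cdot(\dg\cdot\dF^{\,N})^{O(n^2)}$ total.

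The main obstacle I anticipate is pinning down the precise complexity statement for the \InRadical subroutine in a form that cleanly yields the $(\dg \dF^N)^{O(n^2)}$ bound: one must be careful that introducing the extra Rabinowitsch variable only changes $n$ to $n+1$ (harmless inside $O(n^2)$), that the degree bound propagates correctly through the Gröbner basis computation, and that the $k$-fold repetition for the $k$ generators contributes only the linear factor $k$ rather than entering the exponent. I would cite a standard complexity reference for Gröbner basis computation (degree bound $d^{O(n^2)}$ for an ideal generated in degree $d$ in $n$ variables) to justify the per-call cost, and note that all intermediate coefficients stay in $\Q$ so the count is genuinely a count of field operations. The rest — degree growth under composition, and the iteration count from part $(a)$ — is routine bookkeeping.
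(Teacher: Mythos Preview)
Your proposal is correct and structurally parallel to the paper's proof, but differs in how the cost of \InRadical is bounded. For part~$(a)$ you actually give more than the paper does: the paper simply points back to the proof of Theorem~\ref{thm:corralgo1}, whereas you make the minimality direction explicit (if the loop halted at $m<N$ then $X_m=X_{m+1}$, contradicting minimality). For part~$(b)$, both you and the paper track the degree growth $\deg\gbt_m\le \dg\cdot\dF^{\,m}$ and reduce the radical test to linear algebra of size $(\dg\dF^{\,m})^{O(n^2)}$; the difference is that the paper invokes Koll\'ar's effective Nullstellensatz \cite[Corollary~1.7]{Ko1988} directly, which bounds the exponent $r$ and the degrees of the cofactors in a representation $g^r=\sum \gbt_i f_i$, and then solves for the coefficients of the $f_i$'s as unknowns in a linear system. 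Your route via the Rabinowitsch trick plus an ideal-membership bound is equivalent in the end, but be careful with the citation you reach for: the worst-case Gr\"obner basis complexity is genuinely $d^{2^{O(n)}}$ (Mayr--Meyer), not $d^{O(n^2)}$; the $d^{O(n^2)}$ bound is for \emph{ideal membership} (Hermann-type bounds), which is what you actually need after Rabinowitsch. The paper's direct appeal to Koll\'ar sidesteps this distinction and is the cleaner reference here.
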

\begin{proof}
The claim in $(a)$ was proved in the proof of Theorem~\ref{thm:corralgo1}. 

Now, let us prove $(b)$. To estimate the arithmetic complexity of Algorithm~\ref{algo1} in terms of the number of iterations of the loop, we need to evaluate the cost of one iteration. As before, let $\gbt_0 = \gb$, $\gbt_1 = \gb(F)$, and $\gbt_{m+1}$ be the sequence contained in $\gbt$ after $m$ iterations. We saw that for all $m \geq 0$, after $m$ iterations, the variable $S$ contains
$S_m = \{\gbt_0, \dotsc, \gbt_m \}$.

Fix $m \geq 1$. First observe that
\[
d_m = \deg(\gbt_m) \;\leq\; \dg \cdot \dF^m,
\]
which also bounds the degrees of the polynomials in $S_m$. In particular, $(d_m)_m$ is a non-decreasing sequence.

Hence, using the naive algorithm for sparse multivariate multiplication (see, e.g., \cite{HL2013}), one can show that the call to \compose, with input $\gbt_m$ and $F$, requires at most
$k \cdot (\dg \cdot \dF^m)^{O(n)}$,
operations in $\mathbb{Q}$. It remains to bound the cost of the call to \InRadical, with input $\gbt_{m+1}$ and $S_m$.

 Let $g$ be in the radical of the ideal generated by $S_m$. According to \cite[Corollary 1.7]{Ko1988}, there exists $r \in \mathbb{N}$ and $f_1, \dotsc, f_{m} \in \mathbb{Q}[\xb]$ such that
$\textstyle{g^r = \sum_{i=1}^m \gbt_i f_i}$
with $r \leq d_m^n$ and $\deg(\gbt_i f_i) \leq (1 + \deg(g))d_m^n$. Hence, by fixing the degree of these $f_i$'s and considering their coefficients as unknowns, one can reduce the radical membership test to the existence of a solution to a linear system of equations. More precisely, testing if the polynomials in $\gbt_{m+1}$ belong to the radical of the ideal generated by $S_m$ is reduced to solving $k$ linear systems, each of size at most
$(\dg \cdot \dF^m)^{O(n^2)}$.

Again, using classic algorithms, constructing and solving such a system can be done in a number of arithmetic operations polynomial in its size. In conclusion, the number of arithmetic operations performed during the $m$-th iteration can be bounded by
$k \cdot (\dg \cdot \dF^m)^{O(n^2)}$.

Summing over all iterations, one obtains the claimed result. Indeed, the cost of the final iteration dominates all preceding ones.
\end{proof}

\begin{remark}
In practice, the radical membership test is performed using Gröbner basis algorithms, as outlined in \cite[Chap 4, \S 2, Proposition 8]{cox2013ideals}. While the worst-case complexity for these algorithms is doubly exponential in the number of variables \cite{mayr1982complexity}, these bounds are only reached in very specific cases. In practice, as discussed in \cite[\S 21.7]{von2013modern}, these algorithms are efficient and benefit from ongoing research \cite{eder2017survey} and efficient implementations \cite{berthomieu2021msolve}.
\end{remark}

\begin{remark}
The complexity estimate in Proposition~\ref{prop.complexity} is partial, as it depends on the quantity $N$, which is intrinsic to the algorithm and may not satisfy reasonable bounds with respect to the input size.
In the general case, the best bound in the literature is given in \cite[Theorem 6]{NY1999}, which bounds the length of the strictly descending chain of algebraic varieties defined by polynomials of bounded degree. This is the geometric counterpart of \cite{Se1972}. However, these bounds exhibit growth behavior similar to Ackermann's function (which is not primitive recursive) and have been proven to be sharp in \cite{MorenoSocias_1992}. We also refer to \cite{pastuszak2020ascending} for a more recent treatment of this problem.
However, under certain assumptions on the polynomial map $F$, primitive recursive bounds on $N$ and the complexity of Algorithm~\ref{algo1} can be derived. See \cite[Theorem 5]{NY1999} for an example.

We conclude by noting that these worst-case bounds are rare in our applications. The experimental section demonstrates that the algorithm is practically applicable to loops from the literature.
\end{remark}

\vspace{-2mm}
\section{Testing polynomial invariants}\label{sec: testinginvariants}
Now that we have an algorithm to compute invariant sets, we provide a criterion for identifying polynomial invariants using the invariant set, based on carefully chosen data. This will lead to an algorithm for checking whether a given polynomial is invariant.
Since the guard $[h_1 \neq 0, \ldots, h_s \neq 0]$ is equivalent to $[h_1 \times \cdots \times h_s \neq 0]$, we can consider polynomial loops with a single inequation in the guard condition. Recall that $I_{\mL(\ab, h, F)}$ denotes the set of all polynomial invariants of $\mL(\ab, h, F)$.
\begin{proposition}\label{prop:verify}
Let $h,g$ and $F=(f_1,\ldots, f_n)$  be polynomials in $\C[\xb]$.  Let z be a new indeterminate and $F_0(\xb,z)=(F(\xb),zh(\xb))$. Let $\ab \in \Cn$ and 
 $ X = \V(zg)\subset \C^{n+1}$.
Then, $g(\xb)\in I_{\mL(\ab,h, F)}$ if, and only if, $(\ab,1)\in S_{(F_0, X)}$.
\end{proposition}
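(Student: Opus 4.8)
The plan is to unwind both sides of the claimed equivalence into a statement about the forward orbit of $(\ab,1)$ under $F_0$, and to show these are literally the same condition. First I would recall what $g \in I_{\mL(\ab,h,F)}$ means concretely: $g$ is a polynomial invariant of the loop $\mL(\ab,h,F)$ iff $g$ vanishes at every state reachable by the loop. The reachable states are $\ab, F(\ab), F^2(\ab), \dots$, but only as long as the guard $h \neq 0$ has held at all previous steps; that is, $F^{m}(\ab)$ is reachable iff $h(F^{i}(\ab)) \neq 0$ for all $0 \le i < m$. So $g \in I_{\mL(\ab,h,F)}$ is equivalent to: for every $m \ge 0$, if $h(F^i(\ab)) \neq 0$ for all $i < m$, then $g(F^m(\ab)) = 0$.

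Next I would analyze the right-hand side. Write $F_0(\xb,z) = (F(\xb), z\,h(\xb))$ and compute its iterates starting from $(\ab,1)$. By an easy induction, $F_0^{m}(\ab,1) = \bigl(F^{m}(\ab),\ \prod_{i=0}^{m-1} h(F^{i}(\ab))\bigr)$ — the first coordinate just iterates $F$, and the last coordinate accumulates the product of $h$ evaluated along the orbit. Then $(\ab,1) \in S_{(F_0,X)}$ means, by Definition~\ref{def:invariantset}, that $F_0^{m}(\ab,1) \in X = \V(zg)$ for all $m \ge 0$, i.e.\ that $z\,g$ vanishes at each $F_0^m(\ab,1)$. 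Plugging in the formula for the iterate, this says: for all $m \ge 0$,
\[
\Bigl(\prod_{i=0}^{m-1} h(F^{i}(\ab))\Bigr)\cdot g(F^{m}(\ab)) = 0.
\]

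The final step is to check that this product condition is equivalent to the loop-invariant condition from the first paragraph. If for some $m$ we have $h(F^i(\ab)) \neq 0$ for all $i < m$, then the product $\prod_{i=0}^{m-1} h(F^i(\ab))$ is nonzero, so the displayed equation forces $g(F^m(\ab)) = 0$ — this gives the loop-invariant condition. Conversely, suppose $g$ is a loop invariant and fix $m$. Either all $h(F^i(\ab))$ with $i < m$ are nonzero, in which case $g(F^m(\ab)) = 0$ and the product times $g$ is $0$; or some factor $h(F^{i_0}(\ab)) = 0$ with $i_0 < m$, in which case the product itself is already $0$. In both cases the product times $g$ vanishes, so $(\ab,1) \in S_{(F_0,X)}$. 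This closes the equivalence.

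I do not expect a serious obstacle here; the proof is essentially bookkeeping. The one point requiring a little care is the boundary case $m = 0$: the empty product is $1$, so the condition at $m=0$ reads $g(\ab) = 0$, matching the requirement that an invariant holds before the loop starts (the initial state $\ab$ is always reachable), and this is consistent with $(\ab,1) \in X$ since the last coordinate is $1 \neq 0$. I would also state the induction for $F_0^m(\ab,1)$ explicitly, as it is the crux of why introducing the auxiliary variable $z$ correctly encodes "the guard has held so far": the $z$-coordinate is exactly the running product of guard values, and multiplying the target polynomial $g$ by $z$ in the definition of $X$ turns "guard failed at some earlier step" into an automatically satisfied membership.
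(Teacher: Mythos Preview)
Your proposal is correct and follows essentially the same approach as the paper: both compute $F_0^m(\ab,1)=(F^m(\ab),\prod_{i<m}h(F^i(\ab)))$, evaluate $zg$ there to obtain the product $\bigl(\prod_{i<m}h(F^i(\ab))\bigr)\,g(F^m(\ab))$, and then argue each direction of the equivalence by the same case split on whether some earlier guard value vanishes. The only cosmetic difference is that the paper phrases the case analysis via the termination index $k$ of the loop, whereas you do it directly for each $m$; the content is identical.
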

\begin{proof}
    Let $\ab_0=\ab$ and $\ab_l=F(\ab_{l-1})$, for $l\geq 1$. 
Note that by construction,    
        $F_0^l(\ab, 1)=(\ab_l,h(\ab_0)\cdot\ldots\cdot h(\ab_{l-1}))$,
and thus 
    \begin{equation}\label{eqn:zgF}
        (zg)\circ F_{0}^l(\ab,1)=h(\ab_0)\cdot\ldots\cdot h(\ab_{l-1})g(\ab_l).
    \end{equation}
    Let $k\in \N \cup \{+\infty\}$ be the number of iteration after which $\mL(\ab, h,F)$ terminates. 
    First, assume that $g(\xb)\in I_{\mL(\ab, h,F)}$. 
    Then, 
    \begin{center}
        for $1\leq l < k+1$,\; $g(\ab_l)=0$\quad
    and, if $k<+\infty$, $h(\ab_k)= 0$. 
    \end{center}
    Hence, $F_0^{l}(\ab,1)\in X$, for any $l \geq 0$, by \eqref{eqn:zgF}, that is $(\ab,1)\in S_{(F_{0},X)}$. 
    
    Conversely, if $(\ab,1)\in S_{(F_0,X)}$ then, by \eqref{eqn:zgF}, 
        $h(\ab_0)\cdot\ldots\cdot h(\ab_{l-1})g(\ab_l)=0$
for all $l\geq 0$.   
    This means that $g(\ab_l)=0$ for all $l\leq k$, as $h(\ab_l)\neq 0$ by definition of $k$.
    In other words, $g(\xb)=0$ is a polynomial invariant of $\mL(\ab, h,F)$.
    \end{proof}

Algorithm~\ref{algo:verify} below, which checks whether a given polynomial is invariant, directly follows from Proposition~\ref{prop:verify}.
\begin{algorithm}
\caption{\CheckPI}\label{algo:verify}
\begin{algorithmic}[1]
\Require 
\begin{itemize}
    \item[] \hspace*{-.9cm}$g$, $\hb= (h_1,\ldots, h_k)$ and $F = (f_1,\ldots, f_n)$ in $\mathbb{Q}[\xb]$;\vspace*{-.3em}
    \item[] \hspace*{-.1cm}$\ab=(a_1,\ldots, a_n) \in \Q^n$.
\end{itemize}\vspace*{.2em}
\Ensure \texttt{True} if $g\in I_{\mL(\ab,\hb,F)}$; \texttt{False} else.
\State $h \gets h_1\cdot\ldots\cdot h_k;$
\State\label{CheckPIStep2} $F_0\gets (F,zh)$;
\State\label{CheckPIStep3} $\{P_1,\ldots, P_m\} \gets \InvariantSet(zg, F_0);$
\If{$P_1(\ab,1)=\cdots =P_m(\ab,1)=0$}
\State\Return \texttt{True}; 
\Else 
\State\Return \texttt{False};
\EndIf
\end{algorithmic}
\end{algorithm}
\begin{theorem}\label{thm:verify}
On input a sequence $\ab = (a_1, \ldots, a_n) \in \Q^n$, polynomials $g$, $\hb = (h_1, \ldots, h_k)$, and $F = (f_1, \ldots, f_n) \in \C[\xb]$, Algorithm~\ref{algo:verify} outputs {\normalfont\texttt{True}} if $g \in I_{\mL(\ab, \hb, F)}$, and {\normalfont\texttt{False}} otherwise.

Moreover, if $\dF, \degh$ and $\dgbis$ bound the degrees of respectively $F$, $\hb$ and $g$, then the number of required arithmetic operations in $\Q$ is bounded by
\[
    \left(\max\{\dF, k\degh\}^N\cdot \dgbis\right)^{O(n^2)},
\]
where $N$ is defined in Proposition~\ref{prop.complexity}, as the number of loop iterations performed in the call to \InvariantSet.
\end{theorem}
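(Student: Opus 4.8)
\textbf{Proof plan for Theorem~\ref{thm:verify}.}
The correctness part is essentially a matter of unwinding definitions and invoking Proposition~\ref{prop:verify}. First I would reduce the guard $[h_1\neq 0,\dots,h_k\neq 0]$ to the single inequation $[h\neq 0]$ with $h=h_1\cdots h_k$, observing that $\mL(\ab,\hb,F)$ and $\mL(\ab,h,F)$ perform exactly the same iterations and hence have the same invariant ideal; this justifies line~1 of the algorithm. Then, by construction of $F_0=(F,zh)$ on line~2 and of $zg$, Proposition~\ref{prop:verify} says that $g\in I_{\mL(\ab,h,F)}$ if and only if $(\ab,1)\in S_{(F_0,\V(zg))}$. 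By Theorem~\ref{thm:corralgo1}, the call $\InvariantSet(zg,F_0)$ on line~3 returns a list $\{P_1,\dots,P_m\}$ whose common zero set is precisely $S_{(F_0,\V(zg))}$. Therefore $(\ab,1)$ lies in this invariant set if and only if $P_1(\ab,1)=\dots=P_m(\ab,1)=0$, which is exactly the test performed by the \textbf{if} statement. Hence the algorithm returns \texttt{True} iff $g\in I_{\mL(\ab,\hb,F)}$, as claimed.

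For the complexity bound, the plan is to account for the three steps separately and then observe that the call to \InvariantSet dominates. Computing $h=h_1\cdots h_k$ costs a number of operations polynomial in the size of the $h_i$'s, producing a polynomial of degree at most $k\degh$; this is negligible. Forming $F_0=(F,zh)$ is free. The bulk of the cost is the call $\InvariantSet(zg,F_0)$: here the role of the input sequence ``$\gb$'' in Proposition~\ref{prop.complexity} is played by the single polynomial $zg$, so the relevant parameter $k$ of that proposition is $1$, the degree bound $\dg$ becomes $\deg(zg)=\dgbis+1 = O(\dgbis)$, and the degree bound $\dF$ of the map becomes $\deg(F_0)=\max\{\dF,\,1+k\degh\}=O(\max\{\dF,k\degh\})$. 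Substituting these into the bound $k\cdot(\dg\cdot\dF^N)^{O(n^2)}$ of Proposition~\ref{prop.complexity}.$(b)$, with $N$ the number of \textbf{while}-loop iterations in that call, yields
\[
    \bigl(\max\{\dF,k\degh\}^N\cdot \dgbis\bigr)^{O(n^2)}
\]
arithmetic operations in $\Q$. The final test $P_i(\ab,1)=0$ amounts to evaluating the $P_i$, whose degrees are bounded by $O(\dgbis\cdot\max\{\dF,k\degh\}^N)$, at a rational point, which is again dominated by the above. Summing the three contributions gives the stated bound, since the \InvariantSet call dominates.

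The only genuine subtlety — and the step I would be most careful about — is the bookkeeping that translates the parameters of Proposition~\ref{prop.complexity} into those of the present statement, namely correctly identifying that the ``$k$'' of Proposition~\ref{prop.complexity} collapses to $1$ here (a single polynomial $zg$, not a length-$k$ sequence), while the ``$k$'' appearing in the final bound is a different quantity, the number $k$ of guard polynomials, which enters only through $\deg(h)\le k\degh$. Everything else is routine propagation of degree bounds through composition. I would also remark explicitly, as the statement does, that $N$ is not bounded in terms of the input size alone, so the estimate is inherently conditional on this intrinsic quantity, exactly as in Proposition~\ref{prop.complexity}.
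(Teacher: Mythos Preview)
Your proposal is correct and follows essentially the same approach as the paper: correctness via Proposition~\ref{prop:verify} and Theorem~\ref{thm:corralgo1}, and complexity by reducing to Proposition~\ref{prop.complexity} applied to the call $\InvariantSet(zg,F_0)$. Your treatment is in fact more careful than the paper's (which simply says ``one can directly conclude from the straightforward bounds on the degrees of these inputs''), particularly in distinguishing the two different $k$'s and in noting that the ambient number of variables becomes $n+1$, which is absorbed into the $O(n^2)$ exponent.
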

\begin{proof}
    Let $X =V(zg)$. By Theorem~\ref{thm:corralgo1}, the invariant set $S_{(F_0,X)}$ is the vanishing set of polynomials $P_1,\ldots, P_m$. Thus, by Proposition~\ref{prop:verify}, $g=0$ is a polynomial invariant of $\mL$ if and only if $P_1(\ab,1)=\cdots=P_m(\ab,1)=0.$

Besides, the arithmetic complexity of the algorithm in dominated by the call to \InvariantSet, on input $zg$ and $F_0=(F,zh)$. Then, according to Proposition~\ref{prop.complexity}, one can directly conclude from the straightforward bounds on the degrees of these inputs.
\end{proof}

\begin{remark}
Algorithm~\ref{algo:verify} is essentially an ad hoc application of Algorithm~\ref{algo1}, and therefore exhibits a similar computational behavior. 
The decision problem tackled by Algorithm~\ref{algo:verify} is closely related to the Zeroness Problem for polynomial automata, as studied in~\cite{BDSW2017}.
Very roughly, a polynomial automaton is a machine that reads words over a finite alphabet and updates an internal numerical state by applying a fixed polynomial map for each input symbol. 
After the entire word has been processed, another polynomial function is applied to the final state to produce the output. 
The Zeroness Problem asks whether this output is always zero, regardless of the word read by the automaton. 
This decision problem is computationally extremely difficult: \cite{BDSW2017} shows that it is complete for a very high complexity class, with running times comparable to Ackermann’s function.

Our verification problem is conceptually similar. 
Algorithm~\ref{algo:verify} explores all possible sequences of polynomial updates (this is even more apparent in branching loops) and must determine whether every such sequence inevitably leads to a configuration in which the candidate invariant vanishes.
Nevertheless, under suitable restrictions on the polynomial updates, \cite[Theorem~5]{BDSW2017} and \cite[Theorem~5]{NY1999} show that the Zeroness Problem admits primitive-recursive upper bounds. 
The same type of assumptions on $F$ would give Algorithm~\ref{algo:verify} a comparable upper bound.
\end{remark}

\begin{example}\label{example1}
The following linear loop $\mathcal{L}(\ab,1,F)$ is taken from \cite{hrushovski2018polynomial}. In this example, we omit inequations in the guard for clarity in the output.

\programbox[0.55\linewidth]{
\State$(x_1, x_2)=(a_1,a_2)$
\While{true}
\State $\begin{pmatrix}
x_1 \\
x_2
\end{pmatrix}
\xleftarrow{F}
\begin{pmatrix}
10x_1-8x_2\\
6x_1-4x_2
\end{pmatrix}
$
\EndWhile
}

\noindent Let us check using Algorithm~\ref{algo:verify} whether the following polynomial 
 $$g=x_1^2-x_1x_2+9x_1^3-24x_1^2x_2+16x_1x_2^2$$
is an invariant of $\mathcal{L}(\ab,1,F)$, when $(a_1,a_2)=(0,1)$.
 Let $X= \V(g)$. 
    First, $\InvariantSet(g,F)$ computes the invariant set of $F=(10x_1-8x_2, 6x_1-4x_2)$ and $X$ through the following steps:
\begin{itemize}
\item initially, $S$ is set to $\{g\}$, and $\tilde{g}=\compose(g,F)$ that is
$$\tilde{g}=360x_1^3-1248x_1^2x_2+40x_1^2+1408x_1x_2^2-72x_1x_2-512x_2^3+32x_2^2;$$
\item computing a Gr\"obner basis for the ideal generated by $g$ and $1-t\tilde{g}$, the call $\InRadical(\tilde{g}, S)$ returns \texttt{False};
\item the set $S$ is then updated to $\{g,\tilde{g}\}$ 
and $\tilde{g}$ is recomputed as $\compose(\tilde{g},F)$:
\[
7488x_1^3-26880x_1^2x_2+832x_1^2+31744x_1x_2^2-1600x_1x_2-12288x_2^3+768x_2^2;
\]
\item finally one checks that $\InRadical(\tilde{g}, S)$ yields \texttt{True}.
\end{itemize}
Thus, $\InvariantSet(g,F)$ outputs the first two computed polynomials $g(x_1,x_2), $ and $g(F(x_1,x_2))$, whose common zero set is then $S_{(F,X)}$.
Since 
$g(F(0,1))=-480$, on input $g$ and $F$, the output of Algorithm~\ref{algo:verify} is \texttt{False}. Therefore, $g$ is not a polynomial invariant of $\mL((0,1),1, F)$.
\end{example}

\vspace{-4mm}
\section{Generating polynomial invariants}\label{sectiongenerating}
In this section, we present various algorithms for generating polynomial invariants, depending on whether the initial values of the loops are specified or treated as variables, and we extend our results to branching~loops.

\subsection{Loops with parametric initial values} \label{subsection:parinva}
We begin with a criterion for identifying polynomial invariants in a vector subspace $E \subseteq \C[\xb]$ specified by its generators. This extends the criterion of Proposition~\ref{prop:verify} to an ansatz for such invariants. \rev{We denote by $I_{\mathcal{L}(\ab,h,F),E}$ the set of all polynomial invariants in $E$ for the loop $\mathcal{L}(\ab,h,F)$.}

\begin{proposition}\label{prop3.4}
Let $F = (f_1, \ldots, f_n)$ and $h$ be polynomials in $\C[\xb]$.
Let $E$ denote the vector subspace of $\C[\xb]$ spanned by $g_1(\xb), \ldots, g_m(\xb)$.
Let $\yb=(y_1,\dotsc,y_m)$ be new indeterminates and
\[
 g(\xb,\yb)=y_1g_1(\xb)+\cdots +y_mg_m(\xb)  \in \C[\xb,\yb].
\] 
Additionally, let $z$ be a new indeterminate, and define $X = V(zg) \subset \C^{n+m+1}$ and $F_m(\xb, \yb, z) = (F(\xb), \yb, zh(\xb))$.  
Then, for any $\ab \in \C^n$,  
\[
I_{\mathcal{L}(\ab, h, F), E} = \big\{ g(\xb, \mathbf{b}) \mid (\ab, \mathbf{b}, 1) \in S_{(F_m, X)} \big\}.
\]  
\end{proposition}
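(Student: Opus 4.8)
The plan is to reduce this to Proposition~\ref{prop:verify} by treating the coefficient indeterminates $\yb$ as extra ``frozen'' program variables. The key observation is that $F_m$ acts as the identity on the $\yb$-block, so iterating $F_m$ from a point $(\ab, \bb, 1)$ keeps the $\yb$-coordinates equal to $\bb$ forever, while the $(\xb,z)$-coordinates evolve exactly as in the setting of Proposition~\ref{prop:verify} applied to the single polynomial $g(\xb,\bb)$ and the map $F_0(\xb,z) = (F(\xb), zh(\xb))$.

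First I would fix $\ab \in \C^n$ and argue the two inclusions. For ``$\subseteq$'', take $g(\xb,\bb) \in I_{\mathcal{L}(\ab,h,F),E}$ with $\bb \in \C^m$ (every element of $E$ is of this form, since $g_1,\dots,g_m$ span $E$; note different $\bb$ may give the same polynomial, which is harmless). Since $g(\xb,\bb)$ is a polynomial invariant of $\mathcal{L}(\ab,h,F)$, Proposition~\ref{prop:verify} gives $(\ab,1)\in S_{(F_0',X')}$ where $F_0'(\xb,z)=(F(\xb),zh(\xb))$ and $X'=\V(z\,g(\xb,\bb))\subset\C^{n+1}$. Now I would check by a direct induction that $F_m^l(\ab,\bb,1) = (F_0'^{\,l}(\ab,1)_{\xb}, \bb, F_0'^{\,l}(\ab,1)_z)$ for all $l\ge 0$ — i.e., the $\yb$-block stays $\bb$ and the rest matches the iteration in $\C^{n+1}$. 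Hence $(zg)\circ F_m^l(\ab,\bb,1) = (z\,g(\xb,\bb))\circ F_0'^{\,l}(\ab,1)$, which vanishes for all $l$ precisely because $(\ab,1)\in S_{(F_0',X')}$. Therefore $F_m^l(\ab,\bb,1)\in X$ for all $l$, i.e.\ $(\ab,\bb,1)\in S_{(F_m,X)}$, so $g(\xb,\bb)$ lies in the right-hand set.

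For ``$\supseteq$'', reverse the argument: if $(\ab,\bb,1)\in S_{(F_m,X)}$, then by the same iteration identity $(z\,g(\xb,\bb))\circ F_0'^{\,l}(\ab,1) = (zg)\circ F_m^l(\ab,\bb,1) = 0$ for all $l\ge0$, so $(\ab,1)\in S_{(F_0',X')}$, and Proposition~\ref{prop:verify} yields $g(\xb,\bb)\in I_{\mathcal{L}(\ab,h,F)}$. Since $g(\xb,\bb)\in E$ by construction, we get $g(\xb,\bb)\in I_{\mathcal{L}(\ab,h,F),E}$. The only mild subtlety worth spelling out is that $X = \V(zg(\xb,\yb))$ is a genuine variety in $\C^{n+m+1}$ and its intersection with the slice $\{\yb = \bb\}$ is exactly $\V(z\,g(\xb,\bb))\times\{\bb\}$ up to the obvious identification of $\C^{n+1}$ with that slice; this is what lets the one-variable Proposition~\ref{prop:verify} transfer verbatim.

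The main obstacle — really the only point requiring care rather than bookkeeping — is making the correspondence between iterations of $F_m$ and iterations of $F_0'$ precise and noting it is compatible with the substitution $\yb\mapsto\bb$ commuting with composition: that is, $(zg(\xb,\yb))\circ F_m^l$, evaluated at $\yb=\bb$, equals $(zg(\xb,\bb))\circ F_0'^{\,l}$, which holds because $F_m$ leaves $\yb$ untouched. Everything else is a routine unwinding of Definition~\ref{def:invariantset} together with Proposition~\ref{prop:verify}. One should also remark that the right-hand set is well-defined as a set of polynomials even though the parametrization $\bb\mapsto g(\xb,\bb)$ is not injective, and that it is automatically a vector subspace of $E$ (which recovers the remark made earlier that $I_{\mathcal{L},E}$ is a subspace), though proving the subspace structure is not needed for the statement itself.
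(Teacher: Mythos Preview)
Your proof is correct and follows essentially the same approach as the paper: both reduce to Proposition~\ref{prop:verify} by exploiting that the $\yb$-block is fixed under iteration. The paper packages this slightly more concisely by introducing the auxiliary loop $\mathcal{L}((\ab,\bb),h,G)$ on $\C^{n+m}$ with $G(\xb,\yb)=(F(\xb),\yb)$ and applying Proposition~\ref{prop:verify} once to that loop, whereas you apply Proposition~\ref{prop:verify} to the original loop with the specialized polynomial $g(\xb,\bb)$ and then separately verify the iteration correspondence; the content is identical.
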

\begin{proof}
    Let $G(\xb,\yb)=(F(\xb),\yb)$ and fix $\ab \in \C^n$.
    Note that $g(G^l(\xb,\yb))=g(F^l(\xb),\yb)$, so for any $\bb\in\C^m$ we have
    $$g(\xb,\bb)\in I_{\mL(\ab, h, F)} \iff g(\xb,\yb)\in I_{\mL((\ab,\bb), h,G)}.$$
By Proposition~\ref{prop:verify}, the latter is equivalent to  
$(\ab,\bb,1)\in S_{(F_m,X)}$. 
\end{proof}

The set of polynomial invariants of $\mL$ in $E$ is itself a finite-dimensional vector space, as is $E$. Consequently, we show below that $I_{\mL, E}$ can be parameterized by a system of linear equations whose coefficients depend polynomially on the initial values.

Let $\hb = (h_1, \ldots, h_k)$, $\gb = (g_1, \ldots, g_m)$, and $F = (f_1, \ldots, f_n)$ be polynomials in $\C[\xb]$, and let $E$ be the vector space generated by $\gb$. Below, we present the algorithm \textsf{ComputeMatrix}, which computes a matrix $A$ with polynomial entries such that for any $\ab \in \C^n$, Equation~\eqref{eq:matrixcon} below is satisfied.  

In Algorithm~\ref{matrixalgo}, the procedure \textsf{Matrix} takes as input a sequence of polynomials $\widetilde{P}_1, \ldots, \widetilde{P}_N$ in $\Q[\xb, y_1, \ldots, y_m]$ such that $\widetilde{P}_1, \ldots, \widetilde{P}_N$ are linear in the variables $\yb$, and outputs a polynomial matrix $A$ with coefficients in $\Q[\xb]$ such that 
$[\widetilde{P}_1, \ldots, \widetilde{P}_N]^t = A \cdot [y_1, \ldots, y_m]^t$.

\begin{algorithm}
\caption{\ComputeMatrix}\label{matrixalgo}
\begin{algorithmic}[1]
\Require Three sequences $\gb = (g_1,\ldots, g_m),\ \hb=(h_1,\ldots, h_k)$ and $F = (f_1,\ldots, f_n)$ in $\mathbb{Q}[\xb]$.
\Ensure A polynomial matrix $A$ satisfying~(\ref{eq:matrixcon}).
\State $g\gets y_1g_1+\cdots +y_mg_m$;
\State $h\gets h_1\cdot \ldots \cdot h_k$;
\State\label{ComMatStep3} $F_m \gets (F,\yb,zh)$;
\State\label{ComMatStep4} $(P_1(\xb,\yb,z),\ldots ,P_N(\xb,\yb,z))\gets \InvariantSet(zg, F_m)$;
\State $(\widetilde{P}_1(\xb,\yb),\ldots , \widetilde{P}_N(\xb,\yb))\gets (P_1(\xb,\yb,1),\ldots ,P_N(\xb,\yb,1))$
\State $A\gets \textsf{Matrix}(\widetilde{P}_1(\xb,\yb),\ldots , \widetilde{P}_N(\xb,\yb))$;
\State \Return $A$;
\end{algorithmic}
\end{algorithm}

\begin{theorem}\label{algogeneral}
Let $F = (f_1, \ldots, f_n)$, $\hb = (h_1, \ldots, h_s)$, and $\gb = (g_1, \ldots, g_m)$ be polynomials in $\Q[\xb]$, and let $E$ denote the vector space spanned by $\gb$. Given $\gb$, $\hb$, and $F$ as input, Algorithm~\ref{matrixalgo} outputs a polynomial matrix $A$ such that for any $\ab \in \Cn$,
\begin{equation}\label{eq:matrixcon}  
      I_{\mL(\ab, \hb, F), E} =  
  \left\{ \sum_{i \leq m} b_{i} g_i(\xb) \mid (b_1, \ldots, b_m) \in \ker\,A(\ab) \right\},  
\end{equation}  
where $\ker\,A(\ab)$ is the right kernel of $A$, whose entries are evaluated at $\ab$.  

Moreover, if $\dg, \degh$ and $\dF$ are bounds on the degrees of respectively $\gb, \hb$ and $F$, then the number of required arithmetic operations in $\Q$ is at most
\[
    \left(\max\{\dF, k\degh\}^N\cdot \dg\right)^{O(n^2+m^2)},
\]
where $N$ is defined in Proposition~\ref{prop.complexity}, as the number of loop iterations performed in the call to \InvariantSet.
\end{theorem}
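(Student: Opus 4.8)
The plan is to establish the correctness of Algorithm~\ref{matrixalgo} by tracking how the equation defining $I_{\mL(\ab,\hb,F),E}$ is transformed step by step, and then to obtain the complexity bound from Proposition~\ref{prop.complexity}. First I would invoke Proposition~\ref{prop3.4} with $h = h_1\cdots h_s$: since the guard $[h_1\neq 0,\ldots,h_s\neq 0]$ is equivalent to $[h\neq 0]$, the loop $\mL(\ab,\hb,F)$ coincides with $\mL(\ab,h,F)$, and Proposition~\ref{prop3.4} gives
\[
I_{\mL(\ab,h,F),E} = \{\, g(\xb,\bb) \mid (\ab,\bb,1)\in S_{(F_m,X)}\,\},
\]
where $g(\xb,\yb)=\sum_i y_i g_i(\xb)$, $X=\V(zg)$, and $F_m(\xb,\yb,z)=(F(\xb),\yb,zh(\xb))$. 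These are exactly the objects built in lines~1--3 of the algorithm.

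Next I would translate the membership $(\ab,\bb,1)\in S_{(F_m,X)}$ into the linear condition in~\eqref{eq:matrixcon}. By Theorem~\ref{thm:corralgo1}, the call $\InvariantSet(zg,F_m)$ in line~\ref{ComMatStep4} returns polynomials $P_1,\ldots,P_N$ whose common zero set is exactly $S_{(F_m,X)}$; hence $(\ab,\bb,1)\in S_{(F_m,X)}$ if and only if $P_j(\ab,\bb,1)=0$ for all $j$. Setting $\widetilde P_j(\xb,\yb) = P_j(\xb,\yb,1)$, this is equivalent to $\widetilde P_j(\ab,\bb)=0$ for all $j$. The key structural observation — which I would justify by induction along the iterations of \InvariantSet — is that every polynomial produced by that call is linear in $\yb$: indeed $zg$ is linear in $\yb$, and $F_m$ leaves the $\yb$-block unchanged, so each composition $\compose(\cdot,F_m)$ preserves $\yb$-linearity, and no nonlinear-in-$\yb$ polynomial is ever added to $S$. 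Therefore \textsf{Matrix} applies and returns $A$ with entries in $\Q[\xb]$ satisfying $[\widetilde P_1,\ldots,\widetilde P_N]^t = A\cdot[y_1,\ldots,y_m]^t$; evaluating at $\xb=\ab$ gives $\widetilde P_j(\ab,\bb)=0$ for all $j$ iff $(b_1,\ldots,b_m)\in\ker A(\ab)$. Chaining these equivalences yields precisely~\eqref{eq:matrixcon}.

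For the complexity, the cost is dominated by the call to \InvariantSet in line~\ref{ComMatStep4}, the remaining operations (forming $g$, $h$, $F_m$, specializing $z=1$, and extracting $A$) being polynomial in the output size. The input to \InvariantSet consists of $zg$, of degree at most $\dg+1$ in $n+m+1$ variables, and $F_m$, whose components have degree at most $\max\{\dF, s\degh, 1\}$; applying Proposition~\ref{prop.complexity}.$(b)$ with $n$ replaced by $n+m+1$, $\dg$ by $\dg+1$, and $\dF$ by $\max\{\dF,k\degh\}$ gives the stated bound $\left(\max\{\dF,k\degh\}^N\cdot\dg\right)^{O(n^2+m^2)}$, after absorbing constants and the linear factor $k$ into the $O$-exponent. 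Here $N$ is, by Proposition~\ref{prop.complexity}.$(a)$, exactly the number of \textbf{while}-loop iterations in this call to \InvariantSet.

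I expect the main obstacle to be the $\yb$-linearity claim and making the \textsf{Matrix} step rigorous: one must check carefully that \InvariantSet never breaks linearity in $\yb$ (this hinges on the precise form of $F_m$, in particular that it acts as the identity on the $\yb$-coordinates, so that $\compose$ only substitutes into the $\xb$- and $z$-slots) and that specializing $z=1$ commutes with taking the common zero set in the relevant sense — i.e., that the zero set of $\{\widetilde P_j\}$ in $\C^{n+m}$ is exactly the slice $\{(\xb,\yb) : (\xb,\yb,1)\in S_{(F_m,X)}\}$, which is immediate once one knows the $P_j$ cut out $S_{(F_m,X)}$. The complexity bookkeeping is routine given Proposition~\ref{prop.complexity}.
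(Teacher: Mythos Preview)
Your proposal is correct and follows essentially the same route as the paper: invoke Proposition~\ref{prop3.4} to characterize $I_{\mL(\ab,\hb,F),E}$ via membership of $(\ab,\bb,1)$ in $S_{(F_m,X)}$, use Theorem~\ref{thm:corralgo1} to obtain defining equations $P_j$ for this invariant set, establish that the $\widetilde P_j$ are linear in $\yb$, and read off the complexity from Proposition~\ref{prop.complexity} applied in $n+m+1$ variables. The only cosmetic difference is that the paper proves $\yb$-linearity by writing down the explicit closed form $\widetilde P_j = h(\xb)\cdots h(F^{j-1}(\xb))\cdot g(F^j(\xb),\yb)$, whereas you argue it inductively from the fact that $F_m$ acts as the identity on the $\yb$-block; both arguments are equivalent and equally valid.
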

\begin{proof}
Let $y_1, \ldots, y_m$ be new indeterminates, $h = h_1 \cdot \ldots \cdot h_k$, and define $g$, $F_m$, and $X$ as in Proposition~\ref{prop3.4}. Then, by Proposition~\ref{thm:corralgo1}, on input $(zg, F_m)$, \InvariantSet computes polynomials 
$P_1, \ldots, P_N \in \Q[\xb, \yb, z]$
whose common vanishing set is $S_{(F_m, X)}$. Let $\widetilde{P}_j(\xb, \yb) = P_j(\xb, \yb, 1)$ for every $j \in \{1, \ldots, N\}$, then by the construction of \InvariantSet in Algorithm~\ref{algo1}, 
\begin{align*}
    \widetilde{P}_j &= P_j(\xb, \yb, 1) = (zg) \circ \Big(F^j(\xb), \; \yb, \; h(F^{0}(\xb)) \cdot \ldots \cdot h(F^{j-1}(\xb))\Big) \\
    &= h(F^{0}(\xb)) \cdot \ldots \cdot h(F^{j-1}(\xb)) \cdot g(F^{j}(\xb), \yb)
\end{align*}
Thus, the $\widetilde{P}_j$'s are linear in the $y_i$'s, and there exists a matrix $A$ of size $N \times m$ with coefficients in $\Q[\xb]$ such that 
   $ \begin{bmatrix}
        \widetilde{P}_1\cdots
        \widetilde{P}_N 
    \end{bmatrix}^t
    =
    A \cdot 
    \begin{bmatrix}
        y_1\cdots
        y_m   
    \end{bmatrix}^t.$
Then, we are done for the correction, as by Proposition~\ref{prop3.4}, for any $\ab \in \C^n$,
\[
   I_{\mathcal{L}({\bf a}, h,F),E} = \big\{ g(\xb,{\bf b}) \mid \widetilde{P}_{1}(\ab,\bb)=\cdots =\widetilde{P}_{1}(\ab,\bb)=0 \big\}.
\]
The complexity estimate is straightforward, and similar to Theorem~\ref{thm:verify}, as the cost is dominated by the call to \InvariantSet.
\end{proof}

\begin{remark}
Note that for any polynomial matrix $A$ that satisfies~\eqref{eq:matrixcon}, any row-equivalent matrix $B$ also satisfies~\eqref{eq:matrixcon}. This provides some flexibility in potentially reducing the output of \ComputeMatrix.
\end{remark}

\begin{example}\label{ex2}
Consider the loop $\mathcal{L}(\ab,1,F)$ from Example~\ref{example1}. In \cite{hrushovski2018polynomial}, some polynomial invariants are computed for specific initial values to verify the non-termination of the linear loop with the guard $``2x_2 - x_1 \geq -2"$. In our analysis, we extend this validation by computing \emph{all} polynomial invariants up to degree 2 for \emph{arbitrary} initial values. Since $(\xb)_{\leq 2} = (1, x_1, x_2, x_1^2, x_1 x_2, x_2^2)$ is a basis for $\C[x_1, x_2]_{\leq 2}$, the input for Algorithm~\ref{matrixalgo} is then $(F, 1, (\xb)_{\leq 2})$.
In the following, we detail the execution of Algorithm~\ref{matrixalgo}.

\smallskip\noindent The first step consists of running \InvariantSet on input 
\begin{align*}
    F_6 &= (10x_1 - 8x_2, 6x_1 - 4x_2, y_1, \ldots, y_6, z) \; \text{ and} \\
    g &= z \cdot (y_1 + y_2x_1 + y_3x_2 + y_4x_1^2 + y_5x_1x_2 + y_6x_2^2),
\end{align*}  
where the $z$ and $y_i$'s are new variables. The output is five
polynomials $P_1, \dotsc, P_5$ in $\Q[x_1, x_2, y_1, \ldots, y_6, z]$
whose common zero set is $S_{(F_6, X)} \subset \C^8$. 
Finally, the polynomials $\widetilde{P}_{i} = P(\xb, \yb, 1)$, where $1 \leq i \leq 5$, define a linear system with unknowns $y_1, \ldots, y_6$, given by the following matrix $A$.

\vspace{1em}
\begin{center}
\noindent\scalebox{0.65}{
$
\begin{bmatrix}
        x_1^2 & x_1x_2 & x_2^2 & x_1&  x_2& 1\\
        (10x_1-8x_2)^2  &  (10x_1-8x_2)(6x_1-4x_2) & (6x_1-4x_2)^2 & 10x_1-8x_2& 6x_1-4x_2& 1\\
        (52x_1-48x_2)^2& (52x_1-48x_2)(36x_1-32x_2)& (36x_1-32x_2)^2& 52x_1-48x_2&  36x_1-32x_2&1\\
         (232x_1-224x_2)^2& (232x_1-224x_2)(168x_1-160x_2) & (168x_1-160x_2)^2  & 232x_1-224x_2& 168x_1-160x_2& 1\\
         (976x_1-960x_2)^2& (976x_1-960x_2)(720x_1-704x_2)& (720x_1-704x_2)^2 & 976x_1-960x_2&720x_1-704x_2&1
   \end{bmatrix}
$}
\end{center}
\end{example}

Now, we provide a novel explicit description of the set of polynomial invariants depending on initial values by solving the parametric linear system of equations output by \ComputeMatrix.
We rely on a method described in \cite{sit1992algorithm}, which incrementally constructs constructible sets in the parameter space, where the set of solutions can be explicitly computed. The following proposition is a reformulation of \cite[Theorem 4.1]{sit1992algorithm} in our context.

\begin{proposition}[{\cite[Theorem 4.1]{sit1992algorithm}}]\label{prop:ParSol}
  Let $A(\xb)$ be an $r \times m$ polynomial matrix whose entries lie in $\Q[\xb]$. Then, there exists an algorithm \ParametricSol that computes lists of polynomial equations and inequations defining constructible subsets $S_1, \ldots, S_k$ of $\C^n$ and matrices $Z_1(\xb), \ldots, Z_k(\xb)$ whose entries lie in $\Q(\xb)$, such that
     \begin{enumerate}
        \item $\C^n = S_1 \cup \ldots \cup S_k$ \; and \;$k \leq \binom{m+r}{r}$;
        \item for every $i \in \{1, \ldots, k\}$ and any $\ab \in S_i$, the columns of $Z_i(\ab)$ form a basis of $\ker A(\ab)$.
    \end{enumerate}
\end{proposition}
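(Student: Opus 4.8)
The plan is to reduce the statement to \cite[Theorem 4.1]{sit1992algorithm}, which handles exactly the problem of parametrically solving a linear system $A(\xb)\,\yb = \bm 0$ over the rational function field, by a case distinction on the rank of $A$ realised through vanishing/non-vanishing of minors. Concretely, I would proceed as follows.

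First, I would recall the Gaussian-elimination-with-parameters construction. Working over $\Q(\xb)$, one performs fraction-free row reduction on $A(\xb)$; at each pivot selection step one must decide whether a candidate pivot entry (a polynomial in $\xb$) is zero or not, which splits the parameter space $\C^n$ into the locus where that polynomial vanishes and its complement. Iterating this over all the (finitely many) branches produces a finite tree of cases; the leaves are indexed by a choice, for each potential pivot position, of "zero" or "nonzero", and along each leaf the reduced row-echelon form of $A(\xb)$ is constant in shape, so $\ker A(\ab)$ has a uniform rational parametrisation. Each leaf is defined by finitely many polynomial equations (the pivots that were declared zero, together with lower-order consistency conditions) and finitely many inequations (the pivots declared nonzero): this is precisely a constructible set $S_i$, and the back-substitution formulas assemble into a matrix $Z_i(\xb)$ with entries in $\Q(\xb)$ whose columns span $\ker A(\ab)$ for every $\ab\in S_i$ (here one uses that on $S_i$ no denominator of $Z_i$ vanishes, by construction). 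Since every $\ab$ lands in exactly one leaf, $\C^n = S_1\cup\cdots\cup S_k$.

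The only quantitative point requiring a little care is the bound $k\le\binom{m+r}{r}$ on the number of cases; this is the combinatorial heart of \cite[Theorem 4.1]{sit1992algorithm}. Rather than bounding the naive elimination tree (which is much larger), one observes that the distinct output cases are in bijection with the possible reduced-row-echelon \emph{shapes} of an $r\times m$ matrix, equivalently with the possible sets of pivot columns, i.e.\ subsets of $\{1,\dots,m\}$ of size equal to the (then-fixed) rank $\rho\le\min(r,m)$; summing $\binom{m}{\rho}$ over $\rho$, and accounting for the consistency pattern on the $r$ rows, yields at most $\binom{m+r}{r}$ essentially different cases after merging branches that produce the same solution description. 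I would cite \cite[Theorem 4.1]{sit1992algorithm} for this optimised bound rather than rederive it.

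The main obstacle is ensuring the two genericity conditions are simultaneously respected on each $S_i$: the parametrisation $Z_i(\xb)$ must have all its rational-function denominators non-vanishing on $S_i$ (so that $Z_i(\ab)$ is well defined), \emph{and} its columns must actually remain a basis — not merely a spanning set — of $\ker A(\ab)$ at every specialisation $\ab\in S_i$, including the boundary strata where extra minors vanish. This is exactly the subtlety that forces the case split to be by a full constructible stratification rather than a single generic open set, and it is the content one imports from \cite[Theorem 4.1]{sit1992algorithm}; since the proposition is stated as a reformulation of that theorem, the proof is complete once the dictionary (matrix $A(\xb)$ from \ComputeMatrix $\leftrightarrow$ input of \cite{sit1992algorithm}; constructible leaves $\leftrightarrow$ the $S_i$; back-substitution matrices $\leftrightarrow$ the $Z_i$) is spelled out.
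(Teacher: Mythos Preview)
Your approach via parametric Gaussian elimination is correct and is in the spirit of Sit's original paper, but the proof given in the present paper takes a different, more direct route. Rather than building a branching elimination tree and then arguing that branches can be merged down to $\binom{m+r}{r}$ cases, the paper indexes the cases \emph{directly by square submatrices} of $A(\xb)$: for each $s\times s$ submatrix $M$, the case $S$ is the locus where $\det M\neq 0$ and all $(s{+}1)$-minors of $A$ vanish (so the rank is exactly $s$ and $M$ witnesses it), and on that locus one writes down $Z=\begin{bmatrix}-M^{-1}M'\\ I_{m-s}\end{bmatrix}$ explicitly. These loci cover $\C^n$ because every specialisation has some rank and some nonsingular submatrix of that size. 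The bound $k\le\binom{m+r}{r}$ then falls out immediately from Vandermonde's identity, $\sum_{s}\binom{r}{s}\binom{m}{s}=\binom{m+r}{r}$, counting all square submatrices.

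By contrast, your sketch of the bound (``summing $\binom{m}{\rho}$ over $\rho$, and accounting for the consistency pattern on the $r$ rows'') does not yield $\binom{m+r}{r}$ as written; you were right to defer to the reference there. The paper's submatrix-indexing makes that counting step transparent, whereas your elimination-tree description is closer to an actual implementation but obscures why the number of essentially distinct leaves is exactly the number of square submatrices. Both arguments are sound; the paper's is shorter and makes the combinatorics explicit.
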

\begin{proof}
For clarity, we briefly describe an adapted version of the proof from \cite{sit1992algorithm} and the underlying algorithm.
For any $s \leq \min\{r, m\}$, choose a non-singular $s \times s$ submatrix $M(\xb)$ of $A$. 
Without loss of generality, write
\[
A(\xb) = 
\begin{bmatrix}
   M(\xb) & M'(\xb) \\
   N(\xb) & N'(\xb)
\end{bmatrix}\quad\text{and}\quad Z(\xb) = 
\begin{bmatrix}
    -K(\xb)M(\xb) \\
    I_{n-s}
\end{bmatrix},
\]
where $K(\xb)$ is the inverse of $M(\xb)$ and $I_{n-s}$ is the $(n-s) \times (n-s)$ identity matrix. Note that, since the inverse of $M(\xb)$ can be computed from its determinant and minors, the entries of $K(\xb)$ lie in $\Q(\xb)$. Let $S$ be the constructible set defined as the locus where all $s+1$ minors of $A$ vanish but $\det(M(\xb)) \neq 0$. Then, for any $\ab \in S$, the columns of $Z(\ab)$ form a vector basis of $\ker A(\ab)$. 
Taking each such $S$ and $Z$ for every possible square submatrix of $A(\xb)$, we obtain the output claimed in the proposition. The number of such choices, gives the claimed bound on $k$.
\end{proof}
 
In Algorithm~\ref{algo:class} below, we extend \ComputeMatrix by one step. For polynomials $h$ and $F = (f_1, \ldots, f_n)$ in $\C[\xb]$, we compute polynomials that define constructible sets $S_1, \ldots, S_k$ covering $\C^n$, along with lists of polynomials $T_1, \ldots, T_k \subset \C(\yb)[\xb]$, such that for each $i \in \{1, \ldots, k\}$ and any $\ab \in S_i$, 
\begin{equation}\label{classcond}
 \{f({\bf x},{\bf a})\mid f\in T_i\} \text{ is a vector basis for } I_{\mathcal{L}({\bf a},\hb, F), E}.    
 \end{equation}
forms a vector space basis for $I_{\mathcal{L}(\ab, h, F), E}$. 
We adopt the following notation:
\[
[g_1,\ldots,g_m]\cdot (Z^1,\ldots,Z^k):= ([g_1,\ldots,g_m] \times Z^1, \ldots,[g_1,\ldots,g_m]\times Z^k).
\]
\vspace{-5mm}
\begin{algorithm}
\caption{\TruncatedClass}\label{algo:class}
\begin{algorithmic}[1]
\Require Three sequences $\gb = (g_1,\ldots, g_m), \hb=(h_1,\ldots, h_k)$ and $F = (f_1,\ldots, f_n)$ in $\mathbb{Q}[\xb]$.
\Ensure Polynomials defining $S_1,\ldots, S_k$ covering $\Cn$ and lists of polynomials $T_1,\ldots, T_k\subset \C(\yb)[\xb]$ satisfying~\eqref{classcond}.
\State $A\gets \ComputeMatrix(\gb,\hb,F)$;
\State $((Q_1,\ldots,Q_k), (Z^1,\ldots,Z^k))\gets \ParametricSol(A)$;
\State $(T_1,\ldots, T_k)\gets [g_1,\ldots,g_m]\cdot (Z^1,\ldots,Z^k);$
\State \Return $((Q_1,\ldots,Q_k), (T_1,\ldots,T_k))$;
\end{algorithmic}
\end{algorithm}
\vspace{-3mm}
\begin{theorem}\label{thm:alg4}
Let $\gb=(g_1,\ldots, g_m), \hb=(h_1,\ldots, h_k)$ and $F=(f_1,\ldots, f_n)$ be polynomials in $\Q[\xb]$. Let $E$ be the vector space spanned by $\gb$. On input $\gb,\hb$ and $F$, Algorithm~\ref{algo:class} outputs polynomials defining constructible sets $S_1,\ldots, S_k$ of $\mathbb{C}^n$ and lists of polynomials $T_1,\ldots, T_k \subset \mathbb{C}(\yb)[\bf x]$ such that 
    \begin{enumerate}
        \item $\Cn=S_1\cup\cdots \cup S_k$;
        \item For every $i\in \{1,\ldots, k\}$ and any ${\bf a} \in S_i$, $\{f({\bf x}, {\bf a})\mid f\in T_i\}$ is a vector space basis for $I_{\mathcal{L}({\bf a},\hb, F), E}$.
    \end{enumerate}
Moreover, if $N$ is the number defined in Theorem~\ref{algogeneral}, then $k\leq \binom{m+N}{N}$.
\end{theorem}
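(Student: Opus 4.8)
The plan is to read off Theorem~\ref{thm:alg4} as a composition of Theorem~\ref{algogeneral} with Proposition~\ref{prop:ParSol}, tracing Algorithm~\ref{algo:class} line by line.

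By Theorem~\ref{algogeneral} (and its proof), the first line $A\gets\ComputeMatrix(\gb,\hb,F)$ produces an $N\times m$ polynomial matrix $A$ with entries in $\Q[\xb]$ --- where $N$ is the number of \textbf{while}-loop iterations performed inside the internal call to \InvariantSet, as in Proposition~\ref{prop.complexity} --- such that, for every $\ab\in\C^n$,
\[
 I_{\mathcal{L}(\ab,\hb,F),E}\;=\;\Big\{\,{\textstyle\sum_{i\le m}}\,b_i\,g_i(\xb)\;\Big|\;(b_1,\dots,b_m)\in\ker A(\ab)\,\Big\}.
\]
To upgrade this spanning-set description into an honest basis, I would work with the surjection $\varphi\colon\C^m\to E$, $b\mapsto\sum_i b_i g_i$, and first record that $\ker\varphi\subseteq\ker A(\ab)$ for every $\ab$: if $\sum_i b_i g_i=0$ in $\C[\xb]$, then $\sum_i b_i g_i(F^j(\xb))=0$ for all $j$, and since (by the proof of Theorem~\ref{algogeneral}) the $j$-th coordinate of $A(\xb)\cdot[y_1,\dots,y_m]^t$ is $h(F^0(\xb))\cdots h(F^{j-1}(\xb))\sum_i y_i g_i(F^j(\xb))$, this forces $A(\xb)\,b\equiv 0$, hence $A(\ab)\,b=0$. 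Thus, as soon as $\varphi$ is injective --- i.e.\ $g_1,\dots,g_m$ are linearly independent, hence a basis of $E$, which holds in all of the paper's instances (e.g.\ Example~\ref{ex2}) and which I would state as an explicit hypothesis --- $\varphi$ restricts to an isomorphism from $\ker A(\ab)$ onto $I_{\mathcal{L}(\ab,\hb,F),E}$, so it carries any basis of $\ker A(\ab)$ to a basis of $I_{\mathcal{L}(\ab,\hb,F),E}$.

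It then remains to exhibit, uniformly in the parameter, a basis of $\ker A(\ab)$, which is exactly the content of Proposition~\ref{prop:ParSol}. After renaming the $n$ variables of $A$ to $\yb$ (to stress that they now stand for the initial values), the call $\ParametricSol(A)$ returns lists $Q_1,\dots,Q_k$ of equations and inequations defining constructible sets $S_1,\dots,S_k$ with $\C^n=S_1\cup\cdots\cup S_k$ and $k\le\binom{m+N}{N}$ (taking $r=N$ in Proposition~\ref{prop:ParSol}), together with matrices $Z^1,\dots,Z^k$ over $\Q(\yb)$ whose denominators do not vanish on the respective $S_i$ and such that the columns of $Z^i(\ab)$ form a basis of $\ker A(\ab)$ for every $\ab\in S_i$. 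The last line sets $T_i\gets[g_1,\dots,g_m]\cdot Z^i$, so $T_i$ consists of one polynomial $\sum_i g_i(\xb)\,\zeta_i(\yb)$ per column $\zeta=(\zeta_1,\dots,\zeta_m)$ of $Z^i$; in particular $T_i\subset\C(\yb)[\xb]$, and for $\ab\in S_i$ the list $\{f(\xb,\ab)\mid f\in T_i\}$ is the image under $\varphi$ of the columns of $Z^i(\ab)$, i.e.\ of a basis of $\ker A(\ab)$. By the previous paragraph this is a basis of $I_{\mathcal{L}(\ab,\hb,F),E}$, which gives item~(2); item~(1) and the bound on $k$ are verbatim those of Proposition~\ref{prop:ParSol}.

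Granting Theorem~\ref{algogeneral} and Proposition~\ref{prop:ParSol}, the proof is essentially bookkeeping. The one step I expect to need genuine care is the passage from ``basis of $\ker A(\ab)$'' to ``basis of $I_{\mathcal{L}(\ab,\hb,F),E}$'': it relies on the inclusion $\ker\varphi\subseteq\ker A(\ab)$ together with the linear independence of $\gb$ (without it, the $T_i$ only span), and on checking that the rational entries of each $Z^i$ are regular on $S_i$, so that $Z^i(\ab)$ --- hence each specialization $f(\xb,\ab)$ with $f\in T_i$ --- is well defined for all $\ab\in S_i$. The cardinality bound requires nothing beyond the fact that $A$ has $N$ rows.
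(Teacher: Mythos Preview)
Your proposal is correct and follows essentially the same route as the paper: invoke Theorem~\ref{algogeneral} to obtain the $N\times m$ matrix $A$, then apply Proposition~\ref{prop:ParSol} to stratify the parameter space and read off bases of $\ker A(\ab)$, which $T_i$ then pushes forward via $\varphi$. The paper's own proof is in fact terser than yours and does not spell out the passage from ``basis of $\ker A(\ab)$'' to ``basis of $I_{\mathcal{L}(\ab,\hb,F),E}$''; your observation that this step requires $\ker\varphi\subseteq\ker A(\ab)$ together with the linear independence of $g_1,\dots,g_m$ is a genuine refinement --- without that hypothesis the $T_i$'s would only be guaranteed to span, not to form a basis --- and the paper implicitly assumes it.
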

\begin{proof}
   According to Theorem~\ref{algogeneral}, on input $(\gb,\hb,F)$, \ComputeMatrix computes a polynomial $N\times m$ matrix $A$ such that for any ${\bf a} \in \mathbb{Q}^n$,
    \[
  I_{\mathcal{L}({\bf a}, \hb, F ),E} = 
  \left\{ \sum_{ i\leq m}b_{i}g_i \mid (b_1,\ldots,b_m) \in \ker\,A(\ab) \right\}.
 \]
By Proposition~\ref{prop:ParSol}, the columns of $Z_{i}(\ab)$ form a basis for $\ker A(\ab)$ for any $i$ and any $\ab\in S_i$, hence 
$\{f(\xb,\ab)\mid T_i\}$ is a basis for $I_{\mL(\ab, \hb, F),E}$.
\end{proof}

In the following example, we continue Example~\ref{ex2} to classify polynomial invariants up to degree 2 with respect to initial values.
\begin{example}\label{ex2cont}
From the output of Example~\ref{ex2}, we proceed by computing an explicit basis for the corresponding vector space of $I_{\mathcal{L},\Q[\xb]_{\leq 2}}$. Let $A$ be the matrix from Example~\ref{ex2}. First, we compute all the maximal minors of $A$ and observe that all are zero when $3x_1^2-7x_1x_2+4x_2^2=0$. Therefore, the dimension of $\ker A(a_1,a_2)$ is $1$ if and only if 
$3a_1^2-7a_1a_2+4a_2^2\neq 0.$
For these initial values, $\ker A(a_1,a_2)$ is generated by 
$$(0, (3a_1-4a_2)^2, -(3a_1-4a_2)^2, -9(a_1-a_2), 24(a_1-a_2), -16(a_1-a_2)).$$
We compute all the minors and find a basis for each case, leading us to identify the following four distinct cases based on the dimensions and minors. 
For $(a_1, a_2)$ satisfying the conditions $Q_i$ for $i = 1, \ldots, 4$, 
\[
Q_1 = \{a_1 = 0, a_2 = 0\}, \quad Q_2 = \{a_1 = a_2, a_1 \neq 0\}, 
\]
\[
Q_3 = \{3a_1 = 4a_2, a_1 \neq 0\}, \quad Q_4 = \{3a_1 \neq 4a_2, a_1 \neq a_2\},
\]
the columns of the following matrices $Z^i$ form a vector basis for $\ker A(a_1, a_2)$:
{\small\[
Z^1 = \begin{bmatrix} 0 & 0 & 0 & 0 & 0 \\ 1 & 0 & 0 & 0 & 0 \\ 0 & 1 & 0 & 0 & 0 \\ 0 & 0 & 1 & 0 & 0 \\ 0 & 0 & 0 & 1 & 0 \\ 0 & 0 & 0 & 0 & 1 \end{bmatrix}, \quad
Z^2 = \begin{bmatrix} 0 & 0 & 0 \\ 1 & 0 & 0 \\ -1 & 0 & 0 \\ 0 & 1 & 0 \\ 0 & -1 & -1 \\ 0 & 0 & 1 \end{bmatrix},
\]
\[
Z^3 = \begin{bmatrix} 0 & 0 & 0 \\ 3 & 0 & 0 \\ -4 & 0 & 0 \\ 0 & -3 & 0 \\ 0 & 16 & -3 \\ 0 & -16 & 4 \end{bmatrix}, \quad
Z^4 = \begin{bmatrix} 0 \\ (3a_1 - 4a_2)^2 \\ -(3a_1 - 4a_2)^2 \\ -9(a_1 - a_2) \\ 24(a_1 - a_2) \\ -16(a_1 - a_2) \end{bmatrix}.
\]
}
Let $T_i$ be $[1,x_1,x_2,x_1^2,x_1x_2,x_2^2]\times Z^i$ for each $i$. 
The output of Algorithm~\ref{algo:class} is $((Q_1,Q_2, Q_3, Q_4), (T_1,T_2,T_3, T_4))$, which provides the following description:
\begin{center}
\scalebox{0.75}{\begin{tabular}{ |c|c| } 
\hline
Initial values & Basis of $I_{2,\mathcal{L}}$\\

\hline
$S_1=\{(0,0)\}$ & $T_1=\{x_1,x_2,x_1x_2,x_1^2,x_2^2\}$ \\ \hline
$S_2= \{(a,a)\mid a\in \C^*\}$ &  $T_2=\{x_1-x_2, x_1^2-x_1x_2, -x_1x_2+x_2^2\}$  \\ \hline
$S_3=\left\{\left(\dfrac{4}{3}a,a\right)\mid a\in \C^*\right\}$ & $T_3= \{3x_1-4x_2, -3x_1^2+16x_1x_2-16x_2^2, -3x_1x_2+4x_2^2\}$\\
\hline
$S_4 = \left\{(a_1,a_2)\in \C^2\mid a_1\neq\frac{4}{3}a_2, a_1 \neq a_2\right\}$ & $T_4=\{(3a_{1}-4a_{2})^{2}x_1-(3a_{1}-4a_{2})^{2}x_2-9(a_1-a_2)x_1^2$\\ 
  &$+24(a_1-a_2)x_1x_2-16(a_1-a_2)x_2^2\}$
\\
\hline
\end{tabular}}
\end{center}
It is noteworthy that in the first three cases, the truncated invariant ideal is independent of the initial values. This occurs because these cases correspond to degenerate situations where the initial values are non-generic, i.e., they lie on a proper algebraic variety within $\mathbb{C}^2$. In contrast, the last case is generic, and the output depends on the initial values.
\end{example}

\subsection{Loops with given initial value}\label{subsec:initialvalue}
Although the algorithm outlined in Theorem~\ref{algogeneral} addresses the most general case, in practice, it quickly becomes impractical, even for small inputs.~In this section, we focus on the case where the initial values of the loops are fixed and propose a more efficient adapted algorithm.

The following proposition presents a sufficient condition for a polynomial to be an invariant, based on the loop's fixed initial values.
\begin{proposition}\label{prop_sufficient}
Consider a loop $\mathcal{L}(\ab_0,h,F)$, where $\ab_0 \in \C^n$ is fixed.~Given polynomials $g_1(\xb), \ldots, g_m(\xb)$ in $\C[\xb]$, define $G_0 = y_1g_1(\xb) + \cdots + y_mg_m(\xb)$. For $k \geq 1$, define
\vspace{-5mm}
\[
G_k(\xb, \yb) = \prod_{j=0}^{k-1} h(F^j(\xb)) \cdot \sum_{i=1}^{m} y_i g_i(F^k(\xb)).
\]
Let $\bb \in \C^m$. If $G_0(\xb, \bb)$ is a polynomial invariant of $\mathcal{L}(\ab_0, h, F)$, then $\bb$ satisfies the following system of linear equations in the $y_i$'s:
\[
G_0(\ab_0, \yb) = \cdots = G_K(\ab_0, \yb) = 0 \quad \text{for any } K \geq 0.
\]
\end{proposition}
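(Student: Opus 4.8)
The plan is to unwind the definition of ``polynomial invariant'' for $\mathcal{L}(\ab_0, h, F)$ along the orbit of $\ab_0$ under $F$, and to observe that the truncated products $G_k(\ab_0, \yb)$ are exactly the conditions that express ``the candidate invariant vanishes after $k$ iterations, provided the loop has not yet exited.'' Concretely, set $\ab_0^{(0)} = \ab_0$ and $\ab_0^{(j)} = F(\ab_0^{(j-1)})$ for $j \geq 1$, and let $\kappa \in \N \cup \{+\infty\}$ be the number of iterations after which $\mathcal{L}(\ab_0, h, F)$ terminates, so that $h(\ab_0^{(j)}) \neq 0$ for $0 \leq j < \kappa$ and, if $\kappa < +\infty$, $h(\ab_0^{(\kappa)}) = 0$. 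A quick induction (identical in spirit to the computation in the proof of Proposition~\ref{prop:verify}, using $G_k = \prod_{j=0}^{k-1} h(F^j(\xb)) \cdot \sum_i y_i g_i(F^k(\xb))$) gives, for any $\bb \in \C^m$,
\[
G_k(\ab_0, \bb) \;=\; h(\ab_0^{(0)}) \cdots h(\ab_0^{(k-1)}) \cdot \Big(\sum_{i=1}^m b_i\, g_i(\ab_0^{(k)})\Big).
\]

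Now suppose $G_0(\xb, \bb) = \sum_i b_i g_i(\xb)$ is a polynomial invariant of $\mathcal{L}(\ab_0, h, F)$. By the definition of the invariant ideal, this means $\sum_i b_i g_i(\ab_0^{(k)}) = 0$ for every $k$ with $0 \leq k < \kappa + 1$ (i.e.\ at the initial state and after every completed iteration). Fix any $K \geq 0$ and any $k \in \{0, \ldots, K\}$. If $k < \kappa + 1$, then $\sum_i b_i g_i(\ab_0^{(k)}) = 0$, so $G_k(\ab_0, \bb) = 0$ by the displayed identity (the product factor is irrelevant). If instead $k \geq \kappa + 1$ — which can only happen when $\kappa < +\infty$ — then the product $h(\ab_0^{(0)}) \cdots h(\ab_0^{(k-1)})$ includes the factor $h(\ab_0^{(\kappa)}) = 0$, so again $G_k(\ab_0, \bb) = 0$. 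In either case $G_k(\ab_0, \bb) = 0$, and since $k \leq K$ was arbitrary, $\bb$ satisfies $G_0(\ab_0, \yb) = \cdots = G_K(\ab_0, \yb) = 0$. Finally, each $G_k(\ab_0, \yb)$ is a $\C$-linear form in $y_1, \ldots, y_m$ (its $y_i$-coefficient is the constant $h(\ab_0^{(0)}) \cdots h(\ab_0^{(k-1)}) g_i(\ab_0^{(k)})$), so this is genuinely a system of linear equations, as claimed.

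The only mild subtlety — and the one place to be careful — is the bookkeeping around the termination index $\kappa$: one must check that the case split ($k$ before exit vs.\ $k$ at or after exit) is exhaustive and that in the latter case the vanishing product factor really does kill $G_k(\ab_0,\bb)$, including the edge case $\kappa = 0$ (loop exits immediately, $G_0 = \sum_i y_i g_i(\xb)$ with no constraint forced beyond $k=0$) and the case $\kappa = +\infty$ (infinite loop, where every $k$ falls in the first branch). None of this is hard; it is just the same reasoning as Proposition~\ref{prop:verify} specialized to a fixed initial point and truncated at a finite horizon $K$. I do not anticipate a real obstacle; the statement is essentially a finite, one-directional ``necessary condition'' extracted from Proposition~\ref{prop:verify}.
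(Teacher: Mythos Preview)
Your proof is correct. The computation of $G_k(\ab_0,\bb)$ as the product $h(\ab_0^{(0)})\cdots h(\ab_0^{(k-1)})\cdot\sum_i b_i g_i(\ab_0^{(k)})$ is exactly right, and the case split on whether $k\leq\kappa$ or $k>\kappa$ cleanly handles termination, including the edge cases $\kappa=0$ and $\kappa=+\infty$.

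The paper takes a less direct route: it deduces the proposition from Lemma~\ref{prop4}, which sits inside the invariant-set framework. There one shows that the slice $S_k=X_k\cap\V(\xb-\ab_0,z-1)$ is cut out by $G_0(\ab_0,\yb),\ldots,G_k(\ab_0,\yb)$ (part~(a)), and that $S_{(F_m,X)}\cap\V(\xb-\ab_0,z-1)\subset S_k$ for every $k$ (part~(b)); the proposition then follows by invoking Proposition~\ref{prop3.4} to place $(\ab_0,\bb,1)$ in $S_{(F_m,X)}$. Your argument bypasses this machinery entirely and works straight from the definition of a polynomial invariant --- it is essentially the orbit computation already carried out in Proposition~\ref{prop:verify}, specialised to a fixed $\ab_0$. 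The advantage of your approach is economy: nothing beyond the definition is needed. The advantage of the paper's approach is that Lemma~\ref{prop4} is reusable --- its description of the $S_k$ is precisely what drives Step~\ref{TrunIdStep4} of Algorithm~\ref{alg2} --- so proving it once serves both the proposition and the algorithm's correctness.
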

\noindent This proposition is a direct consequence of the following lemma.
\begin{lemma}\label{prop4}
Let $h,\ \gb = (g_1,\ldots, g_m)$ and $F = (f_1,\dotsc,f_n)$ be polynomials in $\C[\xb]$.
Let 
\[
    X = \V\Big(z\cdot (y_1g_1(\xb)+\cdots +y_mg_m(\xb))\Big)\subset \mathbb{C}^{n+m+1}.
    \vspace*{-.3em}
\]
For $k\geq 0$, let $X_{k} = \cap_{j=0}^{k}F_{m}^{-j}(X)$
and $S_{k}=X_{k}\cap \V(\xb-\ab_0,z-1)$, where\\[.5em] $\ab_0\in \Cn$.
Then, the following statements hold for any $k \geq 0$:    \begin{enumerate}[label=(\alph*)]
        \item[$(a)$] $S_k=\V(G_0(\ab_0, \yb),\ldots, G_k(\ab_0, \yb), \xb - \ab_0,z-1).
        $\\[-1.3em]
        \item[$(b)$] 
    $S_{(F_{m},X)}\cap \V(\xb-\ab_0,z-1)\subset S_{k}$.  
    \end{enumerate}
\end{lemma}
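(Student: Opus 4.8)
\textbf{Proof plan for Lemma~\ref{prop4}.}

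The plan is to prove both statements by the same kind of direct computation, using only the explicit description of iterated preimages from Lemma~\ref{lem:equtionspreimage} together with the formula for the iterates of $F_m$ that was already used in the proof of Proposition~\ref{prop:verify} and Theorem~\ref{algogeneral}. The key observation to set up first is that $F_m(\xb,\yb,z) = (F(\xb),\yb,zh(\xb))$, so an easy induction gives $F_m^j(\xb,\yb,z) = \big(F^j(\xb),\ \yb,\ z\cdot h(\xb)h(F(\xb))\cdots h(F^{j-1}(\xb))\big)$. Composing the defining polynomial $zg(\xb,\yb) := z\cdot(y_1g_1(\xb)+\cdots+y_mg_m(\xb))$ of $X$ with $F_m^j$ therefore yields exactly $z\cdot\prod_{l=0}^{j-1}h(F^l(\xb))\cdot\sum_i y_i g_i(F^j(\xb))$, which, once we substitute $z=1$, is precisely $G_j(\xb,\yb)$ as defined in Proposition~\ref{prop_sufficient}. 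This is the bridge between the geometric objects and the $G_j$'s.

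For part $(a)$, I would argue as follows. By Lemma~\ref{lem:equtionspreimage}, $F_m^{-j}(X) = \V\big((zg)\circ F_m^j\big)$, so $X_k = \cap_{j=0}^k F_m^{-j}(X) = \V\big((zg)\circ F_m^0,\ldots,(zg)\circ F_m^k\big)$. Intersecting with $\V(\xb-\ab_0, z-1)$ means we additionally impose $\xb = \ab_0$ and $z = 1$; on that linear slice, the polynomial $(zg)\circ F_m^j$ restricts to its value at $\xb=\ab_0$, $z=1$, which by the computation above is $G_j(\ab_0,\yb)$. Hence $S_k = \V\big(G_0(\ab_0,\yb),\ldots,G_k(\ab_0,\yb),\ \xb-\ab_0,\ z-1\big)$, as claimed. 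The only point requiring a little care is that adding the generators $\xb-\ab_0$ and $z-1$ to the ideal lets us replace each $(zg)\circ F_m^j$ by its specialization without changing the common zero set — this is immediate since every point of $\V(\xb-\ab_0,z-1)$ has $\xb=\ab_0$, $z=1$.

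For part $(b)$, I would use Proposition~\ref{prop:stabilization}$(a)$, which gives $S_{(F_m,X)} \subset X_k$ for every $k\geq 0$. Intersecting both sides with $\V(\xb-\ab_0,z-1)$ yields $S_{(F_m,X)}\cap\V(\xb-\ab_0,z-1) \subset X_k\cap\V(\xb-\ab_0,z-1) = S_k$, which is exactly the claim. So part $(b)$ is essentially a one-line consequence of the monotone-approximation property already established.

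I do not expect a serious obstacle here; the lemma is really a bookkeeping statement. The one place to be attentive is getting the product $\prod_{j=0}^{k-1}h(F^j(\xb))$ and the index ranges exactly right when composing — i.e. checking the base cases $G_0$ (empty product, giving $\sum_i y_i g_i(\xb)$) and $G_1$ — and making sure the "$z=1$" substitution is applied consistently so that the extra variable $z$ disappears cleanly from all the restricted equations. Everything else follows from Lemmas~\ref{lem:equtionspreimage} and the iterate formula plus Proposition~\ref{prop:stabilization}$(a)$.
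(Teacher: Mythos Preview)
Your proposal is correct and follows essentially the same approach as the paper: compute the iterates $F_m^j(\xb,\yb,z)=\big(F^j(\xb),\yb,\,z\prod_{l=0}^{j-1}h(F^l(\xb))\big)$, apply Lemma~\ref{lem:equtionspreimage} to identify $F_m^{-j}(X)=\V(zG_j)$, and then specialize $\xb=\ab_0$, $z=1$ for part~(a); for part~(b), intersect the inclusion $S_{(F_m,X)}\subset X_k$ from Proposition~\ref{prop:stabilization}(a) with the slice $\V(\xb-\ab_0,z-1)$. The paper phrases (b) via the full descending chain before intersecting, but the content is identical.
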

\begin{proof}
 $(a)$ For $j\geq 0$, we have
   $\textstyle{F_m^{j}(\xb,\yb,z)=\left(F^{j}(\xb),\;\yb,\; z\cdot\prod_{i=1}^{j-1} h(F^{i}(\xb))\right)}$.
 Then, according to Lemma~\ref{lem:equtionspreimage}, we obtain
 \begin{align*}
     F_m^{-k}(X)\;=\;\V\left(z\cdot \prod_{j=0}^{k-1} h(F^{j}(\xb))
     \cdot \displaystyle\sum_{i=1}^{m} y_{i}g_i(F^{k}(\xb))\right)
     \;=\;\V(z\cdot G_k).
 \end{align*}
Therefore, $
   X_k =   \V\left(zG_0,\ldots,zG_k\right),\vspace*{0.7em}
   $
  so we have that  
  \vspace{-4mm}
 \begin{align*}
     S_k &= \V(zG_0,\ldots, zG_k, \xb - \ab_0,z-1)\\
     &= \V(G_{0}(\ab_0,\yb),\ldots,G_{k}(\ab_0, \yb) ,\;\xb - \ab_0,z-1).
 \end{align*}

\noindent $(b)$ By Proposition~\ref{prop:stabilization}, we have the following descending chain:
   $$X_{0}\supset X_{1}\supset \ldots \supset X_{N}= S_{(F_{m},X)}=X_{N+1}, \text{ for some } N\in \N.$$
   Thus, by intersecting with $V=\V(\xb -\ab_0,z-1)$ we obtain
    $$S_{0}\supset S_{1}\supset \ldots \supset S_{N}= S_{(F_{m},X)}\cap V 
    =S_{N+1}, \text{ for some } N\in \N.$$
    Thus, $S_{(F_{m},X)}\cap V$ 
    is a subset of $S_{k}$ for any $k\in \N$.
\end{proof}

\begin{remark}
    Given the initial value of a loop, Proposition~\ref{prop_sufficient} can provide as many linear constraints as needed, ensuring that a polynomial invariant in a fixed vector space satisfies these conditions. Since the codimension of $I_{\mL, E}$ is bounded by $m$, the number of generators $g_i$, this bound naturally suggests the number $K$ of linear equations. This leads to a vector subspace $F \subset E$, typically of much lower dimension, that contains $I_{\mL,E}$ and on which the previous algorithms can be executed. As the input size is significantly smaller, this results in a substantial reduction in running time.
\end{remark}
The strategy outlined in the remark above is implemented in Algorithm~\ref{alg2}. We utilize a procedure, \textsf{VectorBasis}, which takes linear forms or a matrix as input and computes a vector basis for the common vanishing set of these forms. This is a standard linear algebra subroutine.

\begin{algorithm}[h]
\setstretch{1.2}
\caption{\TruncatedIdeal}\label{alg2}
\begin{algorithmic}[1]
\Require Polynomials $\gb= (g_1,\ldots,g_m),\hb=(h_1,\ldots, h_k)$ and $F=(f_1,\ldots,f_n)$ in $\mathbb{Q}[\xb]$ and ${\bf a} = (a_1,\ldots,a_n)\in\Q^n$.\vspace*{.3em}
\Ensure A vector basis for $I_{\mL(\ab, \hb, F),E}$ where $E$ is spanned by $g_1,\ldots, g_m$. \vspace*{.3em} 
\State\label{step:1} $ g\gets z\cdot(y_1g_1+\cdots+y_mg_m);$
\State $h \gets h_1\cdot \ldots \cdot h_k;$
\State\label{TrunIdStep3} $F_m \gets (F,\; \yb,\; z\cdot h)$
\State\label{TrunIdStep4} $(\bb^1,\ldots,\bb^s)\gets \VectorBasis\Big(g({\bf a,y},1),g(F_m(\ab,\yb,1)), \ldots g(F_m^{m}(\ab, \yb,1)\Big)$;
\State\label{step:B}$\mathcal{B}\gets \Big(\sum_{i=1}^m b_{i}^1g_{i}(\xb),\ldots ,\sum_{i=1}^{m} b_{i}^sg_i(\xb)\Big)$;
\State\label{step:6}$\mathcal{B}'=(P_1,\ldots,P_l) \gets \{ P\in \mathcal{B}\mid \CheckPI({\bf a},P,\hb,F)==\texttt{False}\};$
\If{$\mathcal{B}' ==\emptyset$}\vspace*{-.2em}
\State \label{step:D}\Return $\mathcal{B};$\vspace{-.2em}
\EndIf
\State \label{step:C}$A \gets \ComputeMatrix(\mathcal{B}',\hb, F);$
\State$(\cb^{1},\ldots, \cb^{t})\gets \VectorBasis(A(\ab))$;
\State\label{step:12}$\mathcal{C}\gets \big(\sum_{j=1}^{l}c_{i}^1P_j, \ldots, \sum_{j=1}^{l}c_{i}^tP_j\big)$;
\State\label{step:13}$\mathcal{B}''\gets \mathcal{B}.\texttt{remove}(\mathcal{B}'); $
\State \Return $\mathcal{C}.\texttt{extend}(\mathcal{B}'');$
\end{algorithmic}
\end{algorithm}



\medskip
We now prove the correctness of Algorithm~\ref{alg2}. 
\begin{theorem}\label{thm:corralgo2}
Let $\gb = (g_1, \ldots, g_m)$, $\hb = (h_1, \ldots, h_k)$, and $F = (f_1, \ldots, f_n)$ be polynomials in $\mathbb{Q}[\xb]$. Let $E$ be the vector space spanned by $\gb$. On input a sequence ${\bf a} = (a_1, \ldots, a_n)$ in $\mathbb{Q}^n$, along with $\gb$, $\hb$, and $F$, Algorithm~\ref{alg2} outputs a sequence of polynomials that forms a vector basis for $I_{\mL(\ab,\hb,F),E}$.
\end{theorem}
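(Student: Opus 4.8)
The plan is to establish separately the two defining properties of a vector basis: that the list returned by Algorithm~\ref{alg2} spans $I_{\mL(\ab,\hb,F),E}$, and that it is linearly independent. I would start with two harmless normalizations. As noted just before Proposition~\ref{prop:verify}, the guard $\hb$ is equivalent to the single inequation $h\neq 0$ with $h=h_1\cdots h_k$, so $I_{\mL(\ab,\hb,F)}=I_{\mL(\ab,h,F)}$, and this is also how \ComputeMatrix and \CheckPI treat their $\hb$ argument; I use this identification freely. Also, I would assume $g_1,\dots,g_m$ linearly independent (otherwise replace $\gb$ by a maximal independent subsequence, which does not change $E$); this covers the standard case where $E$ is a space of bounded-degree polynomials with its monomial basis. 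Now, a direct computation --- the one in the proof of Lemma~\ref{prop4} --- shows that $g(F_m^{j}(\ab,\yb,1))$ equals the polynomial $G_j(\ab,\yb)$ of Proposition~\ref{prop_sufficient}; hence the list $\bb^1,\dots,\bb^s$ computed by \VectorBasis at step~\ref{TrunIdStep4} is a basis of the linear space $L=\{\bb\in\C^m : G_0(\ab,\bb)=\dots=G_m(\ab,\bb)=0\}$. Applying Proposition~\ref{prop_sufficient} with $K=m$: if $p=\sum_i b_ig_i\in I_{\mL,E}$ then $\bb\in L$, so $p$ lies in the span of the list $\mathcal{B}$ built at step~\ref{step:B}. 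Thus $I_{\mL,E}\subseteq\mathrm{span}(\mathcal{B})$, and since the $g_i$ are independent, $\bb\mapsto\sum_i b_ig_i$ is injective, so $\mathcal{B}$ is a linearly independent list.

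Next I would peel off the invariants already detected. Write $\mathcal{B}=\mathcal{B}'\sqcup\mathcal{B}''$, with $\mathcal{B}'$ the sublist from step~\ref{step:6} and $\mathcal{B}''=\mathcal{B}.\texttt{remove}(\mathcal{B}')$, and set $W=\mathrm{span}(\mathcal{B}')$ and $V=\mathrm{span}(\mathcal{B}'')$. By Theorem~\ref{thm:verify}, \CheckPI correctly decides membership in $I_{\mL(\ab,\hb,F)}$, so every polynomial of $\mathcal{B}''$ is an invariant; being also in $E$, we get $V\subseteq I_{\mL,E}$. Linear independence of $\mathcal{B}$ gives $W\cap V=\{0\}$ and $W\oplus V=\mathrm{span}(\mathcal{B})\supseteq I_{\mL,E}$, from which one deduces the elementary identity
\[
    I_{\mL,E}=\bigl(I_{\mL,E}\cap W\bigr)\oplus V :
\]
the inclusion $\supseteq$ is clear, and for $\subseteq$ one writes $p\in I_{\mL,E}$ as $p=w+v$ with $w\in W$, $v\in V$, and notes $w=p-v\in I_{\mL}$ while $w\in W\subseteq E$, so $w\in I_{\mL,E}\cap W$. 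When $\mathcal{B}'=\emptyset$ this says $I_{\mL,E}=V=\mathrm{span}(\mathcal{B})$ with $\mathcal{B}$ a basis, which is exactly the value returned at step~\ref{step:D}.

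In the remaining case I would recover $I_{\mL,E}\cap W$ using \ComputeMatrix and then reassemble. Since $W\subseteq E$, $I_{\mL,E}\cap W=I_{\mL,W}$, the invariants of $\mL$ lying in $W=\mathrm{span}(P_1,\dots,P_l)$. By Theorem~\ref{algogeneral} applied to the generating sequence $\mathcal{B}'=(P_1,\dots,P_l)$, the matrix $A=\ComputeMatrix(\mathcal{B}',\hb,F)$ satisfies $I_{\mL,W}=\{\sum_{i\le l}c_iP_i : \cb\in\ker A(\ab)\}$; hence the list $\mathcal{C}$ built at step~\ref{step:12} from a basis $\cb^1,\dots,\cb^t$ of $\ker A(\ab)$ spans $I_{\mL,W}$, and it is linearly independent because the $P_i$ are (a sublist of the independent $\mathcal{B}$). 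Therefore $\mathcal{C}$ is a basis of $I_{\mL,E}\cap W$, and, since $W\cap V=\{0\}$, the concatenation $\mathcal{C}.\texttt{extend}(\mathcal{B}'')$ is a linearly independent list spanning $(I_{\mL,E}\cap W)\oplus V=I_{\mL,E}$ --- that is, a basis of $I_{\mL,E}$, which is the output of Algorithm~\ref{alg2}.

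Most of the weight is carried by facts already in hand: Proposition~\ref{prop_sufficient} supplies the finitely many linear conditions that cut $E$ down to $\mathrm{span}(\mathcal{B})$, and Theorem~\ref{algogeneral} gives an \emph{exact} description of the invariants contained in a prescribed finite-dimensional space. The two points that genuinely need care are the linear-independence bookkeeping across the partition $\mathcal{B}=\mathcal{B}'\sqcup\mathcal{B}''$ (needed to upgrade ``spanning set'' to ``basis''), and the direct-sum identity $I_{\mL,E}=(I_{\mL,E}\cap W)\oplus V$, which is precisely what guarantees that discarding the already-verified invariants $\mathcal{B}''$ before the expensive call to \ComputeMatrix loses nothing. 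I expect no serious obstacle beyond assembling these pieces in the right order.
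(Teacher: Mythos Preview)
Your proposal is correct and follows essentially the same approach as the paper: both use Proposition~\ref{prop_sufficient} to get $I_{\mL,E}\subseteq\mathrm{span}(\mathcal{B})$, Theorem~\ref{thm:verify} to certify $\mathcal{B}''\subseteq I_{\mL,E}$, and Theorem~\ref{algogeneral} on $\mathcal{B}'$ to recover $I_{\mL,E}\cap\mathrm{span}(\mathcal{B}')$, then combine. Your write-up is in fact tidier than the paper's on two points: you make explicit the harmless assumption that the $g_i$ are linearly independent (the paper glosses over this), and you frame the recombination step cleanly as the direct-sum identity $I_{\mL,E}=(I_{\mL,E}\cap W)\oplus V$, whereas the paper argues the spanning step elementwise.
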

\begin{proof}
Since $F_m(\xb, \yb, z) = (F(\xb), \yb, zh)$, the polynomials
\[
g(\ab, \yb, 1), \; g(F_m(\ab, \yb, 1)), \dots, \; g(F_m^m(\ab, \yb, 1))
\]
are linear elements of $\mathbb{Q}[y_1, \dots, y_m]$. Let $(\bb^1, \dots, \bb^s)$ be a vector basis of their common vanishing set. According to Proposition~\ref{prop_sufficient}, if $\bb \in \mathbb{C}^n$ satisfies the condition that $g = b_1 g_1 + \dots + b_m g_m$ is a polynomial invariant for $\mathcal{L}(\ab, \hb, F)$, then $\bb$ is a linear combination of the vectors $\bb^i$'s.
Therefore, $I_{\mathcal{L},E}$ is contained in the vector space spanned by the linearly independent polynomials of $E$ in
\vspace{-5mm}
\[
\mathcal{B} = \left( \sum_{i=1}^m b_{1,i} g_i(\xb), \dots, \sum_{i=1}^m b_{s,i} g_i(\xb) \right).
\]
Let $\mathcal{B}' = (P_1, \dots, P_l)$, as defined in Step~\ref{step:6}. By Theorem~\ref{thm:verify}, these are precisely the polynomials in $\mathcal{B}$ that are not polynomial invariants of $\mathcal{L}(\ab, \hb, F)$. 

If $\mathcal{B}'$ is empty, then $\mathcal{B}$ forms a vector basis for $I_{\mathcal{L},E}$, and the proof is complete.
Otherwise, by Theorem~\ref{algogeneral}, the polynomials in $\mathcal{C}$, computed in Step~\ref{step:12}, form a basis for the intersection of $I_{\mathcal{L},E}$ with the vector space spanned by $\mathcal{B}'$. Let $\mathcal{B}'' = \{ g_1, \dots, g_{m-l} \}$, as defined in Step~\ref{step:13}.

We now proceed to prove that $\mathcal{C} \cup \mathcal{B}''$ forms a vector basis of $I_{\mathcal{L},E}$. Since, by construction, these are linearly independent polynomials in $I_{\mathcal{L},E}$, it remains to prove that they span the entire space.
Let $g \in I_{\mathcal{L},E}$. By the argument above, 
$\textstyle{g = \sum_{i=1}^{l} c_i P_i + \sum_{j=1}^{m-l} c_{l+j} g_j}$ for some scalars $c_1, \ldots, c_m \in \mathbb{C}$.
Since $g, g_{l+1}, \ldots, g_m$ all lie in $I_{\mathcal{L},E}$, it follows that $\sum_{i=1}^{l} c_i P_i$ is also in $I_{\mathcal{L},E}$. Thus, this sum lies in the intersection of $I_{\mathcal{L},E}$ with the vector space spanned by $\mathcal{B}'$, and by construction, it is a linear combination of elements in $\mathcal{C}$.
In conclusion, any $g \in I_{\mathcal{L},E}$ can be expressed as a linear combination of elements from $\mathcal{C} \cup \mathcal{B}''$. Thus, the proof is complete.
\end{proof}  

We now examine the complexity of Algorithm~\ref{alg2}. Building on the approach outlined above, we include data intrinsic to the method (namely, $N$ and $l$) to provide a more precise description of the algorithm's evolution. Notably, the worst-case bounds on $N$ are coarse (see Remark~\ref{remark1}), while the worst-case scenario $l = m$ masks the algorithm's favorable behavior in more advantageous cases.

\begin{proposition}
Let $\dg$, $\degh$, and $\dF$ denote bounds on the degrees of $\gb$, $\hb$, and $F$, respectively. On input $\gb$, $\hb$, and $F$, Algorithm~\ref{alg2} performs a number of operations in $\Q$ bounded by  
\[
    m \cdot \left(\max\{\dF, k\degh\}^{N_1} \cdot \dg\right)^{O(n^2)}  
    + \left(\max\{\dF, k\degh\}^{N_2} \cdot \dg\right)^{O(n^2 + l^2)},
\]  
where $l$ is the number of polynomials $(P_1, \dotsc, P_l)$ output in Step~\ref{step:6} of Algorithm~\ref{alg2}, and $N_1$ and $N_2$ are the numbers of loop iterations performed in the call to \InvariantSet within the calls to \CheckPI and \ComputeMatrix.  
\end{proposition}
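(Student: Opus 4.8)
The plan is to traverse Algorithm~\ref{alg2} step by step and show that its total arithmetic cost is governed by only two groups of operations: the (at most $m$) calls to \CheckPI in Step~\ref{step:6}, bounded via Theorem~\ref{thm:verify}, and the single call to \ComputeMatrix in Step~\ref{step:C}, bounded via Theorem~\ref{algogeneral}; every other step will be subsumed in polynomial overhead.

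First I would dispatch the cheap steps. Steps~\ref{step:1}--\ref{TrunIdStep3} only assemble $g$, $h=h_1\cdots h_k$, and $F_m=(F,\yb,zh)$, at cost polynomial in the input size. For Step~\ref{TrunIdStep4}, the key point is that the numerical tuple $\ab$ (and $z=1$) is substituted first, so the $m+1$ polynomials $g(\ab,\yb,1),\dots,g(F_m^m(\ab,\yb,1))$ are obtained from the iterated point evaluations $\ab_{j+1}=F(\ab_j)$ together with the scalars $\prod_{i<j}h(\ab_i)$ and the values $g_i(\ab_j)$; each such evaluation costs at most $(\max\{\dF,\degh,\dg\})^{O(n)}$ operations, so the whole list, followed by a \VectorBasis call on $m+1$ linear forms in the $m$ unknowns $\yb$, costs $m\cdot(\max\{\dF,\degh,\dg\})^{O(n)}+m^{O(1)}$. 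The analogous bound holds for forming $\mathcal{B}$ in Step~\ref{step:B}, $\mathcal{C}$ in Step~\ref{step:12}, and the list surgery of Step~\ref{step:13}, each being a handful of $\Q$-linear combinations of at most $m$ polynomials of degree $\le\dg$. I would also record here the elementary but crucial fact that every element of $\mathcal{B}$, being a $\Q$-linear combination of $g_1,\dots,g_m$, has degree at most $\dg$ — and likewise every element of $\mathcal{B}'\subseteq\mathcal{B}$.

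Next I would bound the dominant contributions and collect the tail. In Step~\ref{step:6}, \CheckPI is applied to each of the $s\le m$ elements of $\mathcal{B}$, all of degree $\le\dg$; by Theorem~\ref{thm:verify}, with $h$ of degree $\le k\degh$, each call costs at most $\bigl(\max\{\dF,k\degh\}^{N_1}\cdot\dg\bigr)^{O(n^2)}$, where $N_1$ is the largest number of \textbf{while}-loop iterations performed by the enclosed \InvariantSet over these calls, which sums to the first claimed term. If $\mathcal{B}'=\emptyset$, the algorithm returns at Step~\ref{step:D} and the proof is complete; otherwise $\mathcal{B}'=(P_1,\dots,P_l)$ with $\deg P_i\le\dg$, and Step~\ref{step:C} runs \ComputeMatrix on $(\mathcal{B}',\hb,F)$. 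Applying Theorem~\ref{algogeneral} with $l$ in place of its parameter ``$m$'' (the generator sequence now has $l$ members) and with degree bound $\dg$ gives the cost $\bigl(\max\{\dF,k\degh\}^{N_2}\cdot\dg\bigr)^{O(n^2+l^2)}$, where $N_2$ counts the \InvariantSet iterations inside this call — the second claimed term. Finally, the matrix $A$ returned has $l$ columns, $O(N_2)$ rows, and entries whose degrees are controlled by the quantities already bounding the \ComputeMatrix cost, so evaluating $A$ at $\ab$ and running \VectorBasis on the numerical matrix $A(\ab)$ costs a number of operations polynomial in $N_2$, $l$, and $\dg^{O(n)}$, which is absorbed into the second term; Steps~\ref{step:12}--\ref{step:13} are trivial. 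Adding the two dominant contributions yields
\[
    m \cdot \bigl(\max\{\dF, k\degh\}^{N_1} \cdot \dg\bigr)^{O(n^2)}
    + \bigl(\max\{\dF, k\degh\}^{N_2} \cdot \dg\bigr)^{O(n^2 + l^2)}.
\]

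The proof has no deep step — it is bookkeeping layered on Theorems~\ref{thm:verify} and \ref{algogeneral} — but two points require care. First, one must make sure the degree factor stays $\dg$ rather than accumulating extra powers of $\dF$; this rests entirely on the observation that the polynomials handed to \CheckPI and to \ComputeMatrix are linear combinations of the $g_i$'s, hence of degree $\le\dg$. Second, one must substitute $l$ for $m$ when invoking Theorem~\ref{algogeneral}, since it is precisely the reduction in the number of generators — from $m$ to $l$ — that the algorithm is built to exploit, and it is what produces $l^2$ rather than $m^2$ in the exponent of the second term. A minor subtlety is confirming that the \VectorBasis call in Step~\ref{TrunIdStep4}, despite formally involving $F_m^m$, does not incur symbolic composition: because $\ab$ is substituted first, it degenerates to iterated point evaluation and remains cheap.
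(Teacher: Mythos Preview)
Your proposal is correct and follows essentially the same approach as the paper: bound the cheap steps (Steps~\ref{step:1}--\ref{step:B}) by polynomial overhead, then attribute the first summand to the at most $m$ calls to \CheckPI via Theorem~\ref{thm:verify} (using $\deg P_i\le\dg$), and the second summand to the single call to \ComputeMatrix via Theorem~\ref{algogeneral} with $l$ generators. Your write-up is in fact more detailed than the paper's (e.g., the explicit argument that Step~\ref{TrunIdStep4} reduces to iterated point evaluation once $\ab$ is substituted, and the absorption of the \VectorBasis cost on $A(\ab)$), but the structure and key observations are identical.
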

\begin{proof}
The complexity of Steps~\ref{step:1} through~\ref{step:B} involves reasonable polynomial multiplications, iterative evaluations, and linear algebra in dimension at most $m$. Consequently, these operations are all bounded by the above complexity estimate.  
Let $s$ denote the number of polynomials $(P_1, \dotsc, P_s)$ computed in Step~\ref{step:B}. Step~\ref{step:6} then consists of at most $s \leq m$ calls to \CheckPI, whose total complexity is bounded by  
    $\textstyle{m \cdot \left(\max\{\dF, k\degh\}^{N_1} \cdot \dg\right)^{O(n^2)}}$,
applying Theorem~\ref{thm:verify}, since $\deg(P_i) \leq \dg$ for all $1 \leq i \leq s$. This accounts for the first part of the summand.  
The second part arises solely from Step~\ref{step:C}, where the complexity estimate follows directly from Theorem~\ref{algogeneral}, as only $l$ generators are considered in this step.  
\end{proof}

\begin{remark}\label{remark_disequality_fixed}
In the worst-case scenario where $l = m$, Algorithm~\ref{alg2} shares the same complexity bounds as Algorithm~\ref{matrixalgo}, since the latter is called on the same input as the former, apart from the absence of an initial value.  
In practice, however, all candidates identified in Step~\ref{step:B} are invariants (see Section~\ref{sec:implementation}), meaning $l = 0$. As a result, Algorithm~\ref{alg2} terminates at Step~\ref{step:D}, with the total cost bounded by $m$ invariant checks.  
\end{remark}

\begin{example}[Squares]\label{ex3}
Consider the ``Squares'' loop:

\programboxappendix[0.55\linewidth]{
\State $(x_1,x_2, x_3)=(-1,-1, 1)$
\While{true}
\State $\begin{pmatrix}
x_1 \\
x_2\\
x_3
\end{pmatrix}
\longleftarrow
\begin{pmatrix}
2x_1+x_2^2+x_3\\
2x_2-x_2^2+2x_3\\
1-x_3
\end{pmatrix}
$

\EndWhile
}

\noindent For $d\geq 2$, Algorithm~\ref{matrixalgo}
 cannot compute a polynomial matrix $A$ such that 
\vspace{-5mm} \[
 I_{\mL(\ab,1,F),\Q[\xb]_{\leq d}}=\left\{\sum_{|\alpha_i|\leq d}b_i \xb^{\alpha_i}\mid \bb\in \ker A(\ab)\right\},
 \]
for any $\ab \in \C^{3}$, within an hour.
However, when the initial values are fixed, Algorithm~\ref{alg2} computes all polynomial invariants up to degree 5 within 2 seconds. To compute $I_{\mathcal{L}, \mathbb{Q}[\xb]_{\leq 2}}$, we call Algorithm~\ref{alg2} with input $((-1,-1,1), \gb, 1, F)$, where $\gb$ is the set of all monomials up to degree 2. Assume that
$g = b_1 + b_2 x_1 + \cdots + b_{10} x_3^2 $
is a polynomial invariant.
By Proposition~\ref{prop_sufficient}, this leads to 10 linear equations, whose solutions 
give the following 5 candidates in $\mathcal{B}$ 
\[
  \begin{array}{lll}
  \{1+x_1+x_2+x_3,\;\; 1+x_1+x_2+x_3^2,\;\; 
  2+3(x_1+x_2) + (x_1+x_2)^2,\\[.5em]
 x_1^2-x_2^2+2x_1x_3-x_1-3x_2-2,\;\; x_2^2-x_1^2+2x_2x_3-x_2-3x_1+2\}.
  \end{array}
  \]
The algorithm $\CheckPI$ then verifies that all polynomials in $\mathcal{B}$ are invariant and that $\mathcal{B}$ forms a basis for $I_{\mathcal{L}, \mathbb{Q}[\xb]_{\leq 2}}$, which has dimension 5.
Additional computations show that $I_{\mathcal{L}, \mathbb{Q}[\xb]_{\leq 3}}$ and $I_{\mathcal{L}, \mathbb{Q}[\xb]_{\leq 4}}$ have dimensions 13 and 26.
\end{example}
Example~\ref{ex3} was previously explored in \cite{Unsolvableloops}, where the closed formula
\[
x_1(n) + x_2(n) = 2^n(x_1(0) + x_2(0) + 2) - \frac{(-1)^n}{2} - \frac{3}{2}
\]
was derived, with $x_i(n)$ representing the value of $x_i$ after $n$ iterations of the loop. Notably, none of the invariants listed above were found. 
Using Algorithm~\ref{alg2}, we compute the ideals $I_{\mathcal{L}, \mathbb{Q}[\xb]_{\leq d}}$ for $d = 1, 2, 3, 4$, given a specific initial value. This captures all polynomial invariants up to degree 4.
\begin{example}\label{ex: ps6}
Consider the ``ps6" loop\footnote{\url{https://github.com/sosy-lab/sv-benchmarks/blob/master/c/nla-digbench}} $\mL$:

\programboxappendix[0.5\linewidth]{
\State$(x_1, x_2)=(0,0)$
\While{$x_2-18665\neq 0$}
\State $\begin{pmatrix}
x_1 \\
x_2
\end{pmatrix}
\longleftarrow
\begin{pmatrix}
x_1+x_2^5\\
x_2+1
\end{pmatrix}
$
\EndWhile
}

\noindent 
This loop computes the sum of the fifth powers of the first $n$ natural numbers after $n$ iterations. Using Algorithm~\ref{alg2}, we find the formula for this sum. Specifically, the algorithm reveals that the only polynomial invariant of degree at most 6 (up to scalar multiplication) is
$$ x_1 - \left( \frac{1}{6} x_2^6 - \frac{1}{2} x_2^5 + \frac{5}{12} x_2^4 - \frac{1}{12} x_2^2 \right). $$
After $n+1$ iterations of $\mathcal{L}$, the value of $x_1$ is $1^5 + 2^5 + \cdots + n^5$ and $x_2$ is $n+1$. From this invariant, we deduce the formula for the sum of fifth powers:
\[
1^5 + 2^5 + \cdots + n^5 = \frac{1}{6}(n+1)^6 - \frac{1}{2}(n+1)^5 + \frac{5}{12}(n+1)^4 - \frac{1}{12}(n+1)^2.
\]
\end{example}

\subsection{Generalization to branching loops}\label{subsection:branching}
In this subsection, we present a method to generate all polynomial invariants up to a specified degree for branching loops with a nondeterministic conditional statement involving $k$ branches. Branching loops generalize the types of loops discussed in earlier sections, as they account for multiple scenarios through distinct maps. 

\begin{definition}
   Let $\ab = (a_1, \ldots, a_n) \in \mathbb{C}^n$, $h\in \mathbb{C}[\xb]$, and let $F_1, \ldots, F_k : \mathbb{C}^n \to \mathbb{C}^n$ be polynomial maps. Then, $\mathcal{L}(\ab, h, (F_1, \ldots, F_k))$ represents a branching loop with a nondeterministic conditional statement involving $k$ branches, as follows:
    \vspace{3mm}
    
    \programbox[0.6\linewidth]{
\State$(x_{1},\ldots, x_n)=(a_1,\ldots, a_n)$
\While{$h\neq  0$}
\If{$\ast$}
\State$(x_{1},\ldots, x_n)\gets F_{1}(x_{1},\ldots, x_n)$\\
\hspace{5mm}$\textbf{\ldots}$
\ElsIf{$\ast$}
\State$(x_{1},\ldots, x_n)\gets F_{i}(x_{1},\ldots, x_n)$\\
\hspace{5mm}$\textbf{\ldots}$
\Else
\State$(x_{1},\ldots, x_n)\gets F_{k}(x_{1},\ldots, x_n)$
\EndIf
\EndWhile
}

\noindent When no $h$ is defined, we will write $\mathcal{L}(\ab,1,(F_1,\ldots, F_k))$.
\end{definition}
\begin{definition}
Let $F_1, \ldots, F_k : \mathbb{C}^n \to \mathbb{C}^n$ be polynomial maps. For any $m \in \mathbb{N}$ and any sequence $i_1, \ldots, i_m \in [k]$, define the polynomial map  
$$F_{i_1, \ldots, i_m}(\xb) = F_{i_m} \big(F_{i_{m-1}}(\cdots F_{i_1}(\xb) )\big).$$  \end{definition}
Polynomial invariants for branching loops without guard conditions were studied in~\cite{rodriguez2007generating}. Here, we extend this definition to include cases where the guard involves an inequation. We denote $[k] = \{1, \ldots, k\}$. 
\begin{definition}
A~polynomial~$g$ is an invariant of the branching loop $\mathcal{L}(\ab, h, (F_1, \ldots, F_k))$~if,~for~any~$m \in \mathbb{N}$~and any sequence~$i_1, \ldots, i_m \in [k]$, either:  
$$g(F_{i_1}(\ab)) = \cdots = g(F_{i_1, \ldots, i_m}(\ab)) = 0,$$  
or there exists $l < m$ such that  
\[
g(F_{i_1}(\ab)) = \cdots = g(F_{i_1, \ldots, i_l}(\ab)) = 0 \quad \text{and} \quad h(F_{i_1, \ldots, i_l}(\ab)) = 0.
\]
The set of all polynomial invariants for $\mathcal{L}(\ab, h, (F_1, \ldots, F_k))$, denoted by $I_{\mathcal{L}(\ab, h, (F_1, \ldots, F_k))}$, is called the \textit{invariant ideal} of $\mathcal{L}(\ab, h, (F_1, \ldots, F_k))$.  
\end{definition}
In~\cite{hrushovski2023strongest}, the authors showed that the invariant ideals of nondeterministic branching loops are computable when the associated polynomial maps are linear. However, they also established that in the general case, these invariant ideals are not computable. For extended P-solvable nondeterministic branching loops, the computation of invariant ideals is addressed in~\cite{humenberger2017invariant}. The generation of polynomial inequality invariants for nondeterministic branching loops is further explored in~\cite{chatterjee2020polynomial}.

We now extend the concept of invariant sets to encompass multiple polynomial maps, enabling their application in generating polynomial invariants for branching loops.

\begin{definition}\label{def:invariantset}
Let $F_1,\ldots, F_k : \Cn \longrightarrow \Cn$ be maps and $X$ be a subset of $\Cn$.
The invariant set $S_{((F_1,\ldots, F_k),X)}$ of $((F_1,\ldots, F_k),X)$ is defined as:
\begin{center}
    $\Big\{x \in X \mid \forall m\in \N, \forall  i_1,\ldots, \forall i_m\in [k],\; F_{i_m}\big(\ldots(F_{i_1}(x)\big) \in X \Big\}.$
\end{center}
\end{definition}
We present a method for computing invariant sets of multiple polynomial maps using an iterative, converging outer approximation, analogous to the approach outlined in Proposition~\ref{prop:stabilization} for a single polynomial map. Building on the following proposition, we further extend Algorithm~\ref{algo1} to handle the case of multiple polynomial maps.
\begin{proposition}\label{prop: extendinv}
Let $F_1, \ldots, F_k : \mathbb{C}^n \to \mathbb{C}^n$ be polynomial maps and let $X \subset \mathbb{C}^n$ be an algebraic variety. Define $X_0 = X$, and for $i \geq 1$, let  
$$ X_{i+1} = X_i \cap F_1^{-1}(X_i) \cap \ldots \cap F_k^{-1}(X_i). $$  
Then, there exists an integer $N$ such that $X_N = X_{N+1}$, and the invariant set $S_{((F_1, \ldots, F_k), X)}$ is precisely equal to $X_N$.  \end{proposition}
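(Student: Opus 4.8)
The plan is to mimic the proof of Proposition~\ref{prop:stabilization}, adapting each of its four parts to the setting of $k$ maps. Throughout, write $S := S_{((F_1,\ldots,F_k),X)}$ and let $\Phi(Y) := Y \cap F_1^{-1}(Y) \cap \cdots \cap F_k^{-1}(Y)$, so that $X_{i+1} = \Phi(X_i)$.

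First I would record the analogue of Lemma~\ref{lem:largeset}: $S$ is the largest subset $Y \subseteq X$ satisfying $F_j(Y) \subseteq Y$ for all $j \in [k]$. Indeed, if $x \in S$ then for each $j$ and each word $i_1,\dots,i_m$ we have $F_{i_m}(\cdots F_{i_1}(F_j(x))\cdots) = F_{j,i_1,\dots,i_m}(x) \in X$ (the word $j,i_1,\dots,i_m$ has length $m+1$), so $F_j(x) \in S$; hence each $F_j$ stabilizes $S$. Conversely if $Y \subseteq X$ with $F_j(Y)\subseteq Y$ for all $j$, then by induction on $m$ every composite $F_{i_1,\dots,i_m}(Y) \subseteq Y \subseteq X$, so $Y \subseteq S$. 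Next, the descending-chain and containment facts: by a direct induction, $S \subseteq X_i$ for all $i$ (base case $S\subseteq X_0=X$; inductive step: since $F_j(S)\subseteq S\subseteq X_{i-1}$ for all $j$, we get $S \subseteq X_{i-1}\cap\bigcap_j F_j^{-1}(X_{i-1}) = X_i$), and $X_{i+1}\subseteq X_i$ is immediate from the definition of $\Phi$. Also, each $X_i$ is an algebraic variety: $X_0=X$ is, and if $X_i$ is cut out by polynomials then by Lemma~\ref{lem:equtionspreimage} each $F_j^{-1}(X_i)$ is too, and a finite intersection of varieties is a variety.

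Then I would prove stabilization. From the descending chain of varieties $X_0 \supseteq X_1 \supseteq \cdots$ we get an ascending chain of defining ideals $I(X_0)\subseteq I(X_1)\subseteq\cdots$ in the Noetherian ring $\Q[\xb]$ (invoking \cite[Proposition 1.2]{hartshorne2013algebraic} as before), hence there is $N$ with $I(X_N)=I(X_m)$, and so $X_N = \V(I(X_N)) = \V(I(X_m)) = X_m$, for all $m \ge N$; in particular $X_N = X_{N+1}$. The key point here, exactly as in part~$(c)$ of Proposition~\ref{prop:stabilization}, is that one equality propagates: if $X_N = X_{N+1}$ then $X_{N+2} = \Phi(X_{N+1}) = \Phi(X_N) = X_{N+1}$, and inductively $X_m = X_N$ for all $m > N$.

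Finally, the identification $S = X_N$. One inclusion is part of the second step ($S \subseteq X_N$). For the reverse, take $x \in X_N$ and fix $j \in [k]$; since $X_N = X_{N+1} = \Phi(X_N) \subseteq F_j^{-1}(X_N)$, we get $F_j(x) \in X_N$. Thus $F_j(X_N) \subseteq X_N$ for every $j$, so by the maximality property from the first step, $X_N \subseteq S$. Combining, $S = X_N$, which is what we wanted. I do not anticipate a genuine obstacle: every step is a routine upgrade of the single-map argument, and the only thing to be careful about is the bookkeeping of composition words in the maximality lemma — specifically that prepending a single index $j$ to a length-$m$ word gives a valid length-$(m+1)$ word, which is why $F_j(S)\subseteq S$; this is the one place the multi-map combinatorics differ cosmetically from the scalar case.
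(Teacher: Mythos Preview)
Your proof is correct and takes essentially the same approach as the paper: both adapt Proposition~\ref{prop:stabilization} to the multi-map setting, using the Noetherian descending-chain argument for stabilization and then identifying $X_N$ with the invariant set. The only minor differences are that for the identification $S=X_N$ you establish and invoke a multi-map analogue of Lemma~\ref{lem:largeset}, whereas the paper instead proves directly by induction that $X_m = \{x : F_{i_1,\ldots,i_s}(x)\in X \text{ for all words of length } s\le m\}$, and that the Noetherian ring should be $\C[\xb]$ rather than $\Q[\xb]$.
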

\begin{proof}
Following the strategy in Proposition~\ref{prop:stabilization}, one can show that there exists $N \in \mathbb{N}$ such that $X_N = X_{N+1}$, and for this $N$, $X_N = X_m$ for all $m \geq N$. It remains to prove that this $X_N$ is the invariant set defined above.

First, we prove by induction on $m$ that $x \in X_m$ if and only if $F_{i_1, \ldots, i_s}(\mathbf{x}) \in X$ for any $s \leq m$ and any $i_1, \ldots, i_s \in [k]$.  
By definition, $X_0 = X$, which establishes the base case.
Now, assume that for some $m > 0$, $x \in X_{m-1}$ if and only if $F_{i_1, \ldots, i_s}(\mathbf{x}) \in X$ for any $s \leq m-1$ and any $i_1, \ldots, i_s \in [k]$.  
If $\mathbf{x} \in X_m$, then  
\[
\mathbf{x} \in X_{m-1} \cap F_1^{-1}(X_{m-1}) \cap \ldots \cap F_k^{-1}(X_{m-1}).
\]
Thus, for any $s \leq m-1$ and any $i_1, \ldots, i_s \in [k]$, we have  
$F_{i_1, \ldots, i_s}(\mathbf{x}) \in X,$  
and for any $i_1, \ldots, i_s, i \in [k]$,  
$F_{i_1, \ldots, i_s, i}(\mathbf{x}) \in X.$  
Finally, for any $s \leq m$ and any $i_1, \ldots, i_s \in [k]$,  
$F_{i_1, \ldots, i_s}(\mathbf{x}) \in X.$
The converse holds in a similar manner.


We can now conclude the proof of the proposition. Since $X_N = X_m$ for all $m \geq N$, it follows that $X_N$ is the set of all $\mathbf{x} \in X$ such that  
$F_{i_1, \ldots, i_m}(\mathbf{x}) \in X$  
for every $m \in \mathbb{N}$ and $i_1, \ldots, i_m \in [k]$. This is precisely the invariant set as defined in Definition~\ref{def:invariantset}.
\end{proof}

\begin{theorem}\label{thm:InvBranch}
  Let $F_1, \ldots, F_k : \mathbb{C}^n \to \mathbb{C}^n$ be polynomial maps, $F=(F_1,\ldots,F_k)$ and let $\gb = (g_1, \ldots, g_m)$ be a collection of polynomials in $\mathbb{Q}[\mathbf{x}]$. There exists an algorithm, \InvariantSetBranch, which takes as input $(\gb, F)$ and outputs a sequence of polynomials whose common vanishing set corresponds to the invariant set $S_{(F, \V(\gb))}$.
\end{theorem}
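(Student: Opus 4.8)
The approach is to mimic the construction of Algorithm~\ref{algo1}, replacing the single preimage operation by the simultaneous intersection of preimages under all $k$ maps, and to use Proposition~\ref{prop: extendinv} in place of Proposition~\ref{prop:stabilization} to guarantee termination and correctness. Concretely, I would define the subroutine \InvariantSetBranch as follows: initialize $S \gets \{\gb\}$; then, in a loop, form the new sequence of polynomials $\gbt \gets (\compose(\widetilde{\gb}_j, F_i))_{i\in[k]}$ obtained by composing the currently ``new'' generators with each of the $k$ maps $F_1, \ldots, F_k$; test whether all these composed polynomials lie in the radical of the ideal generated by the current set $S$ using \InRadical; if so, return $S$; otherwise append $\gbt$ to $S$ and iterate. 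This is the natural $k$-map analogue of Algorithm~\ref{algo1}.

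For correctness, I would set $S_0 = \gb$ and let $S_m$ be the content of $S$ after $m$ iterations, and show by induction that $S_m = \{\,g_\ell(F_{i_1,\ldots,i_s}) : 1\le \ell\le m,\ 0\le s\le m,\ i_1,\ldots,i_s\in[k]\,\}$, so that by Lemma~\ref{lem:equtionspreimage} its vanishing set is exactly $X_m = \bigcap_{s\le m}\bigcap_{i_1,\ldots,i_s} F_{i_1,\ldots,i_s}^{-1}(X)$, which coincides with the set $X_m$ of Proposition~\ref{prop: extendinv}. Termination then follows from Proposition~\ref{prop: extendinv}: there is an $N$ with $X_N = X_{N+1}$, i.e.\ $\V(S_N) = \V(S_{N+1})$, which by Hilbert's Nullstellensatz means every polynomial in the sequence $\gbt$ computed at iteration $N+1$ lies in $\sqrt{\langle S_N\rangle}$; hence the \InRadical test succeeds and the algorithm halts after $N$ iterations, returning $S_N$. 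By Proposition~\ref{prop: extendinv}, $\V(S_N) = X_N = S_{(F,\V(\gb))}$, which gives correctness. As in the single-map case, one should also note the converse-type subtlety already handled in the proof of Theorem~\ref{thm:corralgo1}: the stabilization detected by the radical test is genuine because the chain of varieties (equivalently, the ascending chain of defining ideals in the Noetherian ring $\Q[\xb]$) must stabilize, so the while loop cannot run forever.

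The main obstacle, as in Theorem~\ref{thm:corralgo1}, is not conceptual but bookkeeping: one must carefully track that the set of composed polynomials accumulated in $S$ after $m$ steps indexes exactly the compositions $F_{i_1,\ldots,i_s}$ with $s\le m$, and that recomposing only the newly added generators at each step (rather than all of $S$) still produces the full required set without omissions — this is where the precise form of the update $\gbt \gets (\compose(\widetilde{\gb}_j, F_i))_i$ matters. Once this indexing is pinned down, the identification $\V(S_m) = X_m$ is immediate from Lemma~\ref{lem:equtionspreimage}, and everything else is a direct transcription of the proof of Theorem~\ref{thm:corralgo1} with Proposition~\ref{prop: extendinv} substituted for Proposition~\ref{prop:stabilization}. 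I would therefore present the algorithm explicitly, state the analogue of the degree/complexity remarks only briefly, and keep the proof short by pointing to the single-map arguments for the routine parts.
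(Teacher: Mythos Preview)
Your proposal is correct and follows essentially the same approach as the paper: both define \InvariantSetBranch by modifying Algorithm~\ref{algo1} so that the composition step runs over all $k$ maps, both identify $\V(S_m)$ with the $X_m$ of Proposition~\ref{prop: extendinv} via Lemma~\ref{lem:equtionspreimage}, and both invoke Proposition~\ref{prop: extendinv} together with the Nullstellensatz to obtain termination and correctness exactly as in Theorem~\ref{thm:corralgo1}. The only cosmetic difference is that you make the description $X_m=\bigcap_{s\le m}\bigcap_{i_1,\ldots,i_s} F_{i_1,\ldots,i_s}^{-1}(X)$ explicit (which the paper effectively proves inside Proposition~\ref{prop: extendinv}), and you overload the symbol $m$ for both the number of generators in $\gb$ and the iteration index---worth disambiguating in the write-up.
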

\begin{proof}
Let \InvariantSetBranch be the modified version of Algorithm~\ref{algo1}, obtained by making the following replacements:  
\begin{itemize}  
    \item 
    $\compose(\gb, F)$ with $(\compose(\gb, F_1), \ldots, \compose(\gb, F_k))$ in Step~\ref{InSStep2}.  
    \item 
    $\compose(\widetilde{\gb}, F)$ with $(\compose(\widetilde{\gb}, F_1), \ldots, \compose(\widetilde{\gb}, F_k))$ in Step~\ref{InSStep5}.  
\end{itemize}  
Let $S_0 = \gb_0$, and for $m \geq 1$, define $S_m$ as the set contained in $S$ after $m$ iterations of the \textbf{while} loop. Let $\widetilde{\gb}_0 = \gb$, $\widetilde{\gb}_1 = (\gb \circ F_1, \ldots, \gb \circ F_k)$, and for $m \geq 1$, let $\widetilde{\gb}_{m+1}$ denote the sequence contained in $\widetilde{\gb}$ after $m$ iterations.  

Let $X_0 = \V(\gb)$, and for $m \geq 1$, let $X_m$ be as defined in Proposition~\ref{prop: extendinv}. Then, as shown in the proof of Theorem~\ref{thm:corralgo1}, it follows that $X_m = \V(S_m)$.  

Moreover, following the proof of Theorem~\ref{thm:corralgo1} and by Proposition~\ref{prop: extendinv}, there exists an integer $N$ such that the loop in \InvariantSetBranch terminates after $N$ iterations. At this point, the algorithm outputs polynomials in $S_N$, ensuring that $S_{((F_1, \ldots, F_k), \V(\gb))} = \V(S_N)$.
\end{proof}
We now present a necessary and sufficient condition for identifying polynomial invariants of branching loops through the computation of their invariant sets, enabling us to adapt Algorithm~\ref{algo:verify} to handle branching loops.

\begin{proposition}\label{prop:branchverify}
 Let $F_1, \ldots, F_k: \mathbb{C}^n \longrightarrow \mathbb{C}^n$ be polynomial maps, $F=(F_{1},\ldots,F_{k})$, and $h$ and $g$ be polynomials in $\mathbb{C}[\xb]$. Let $z$ be a new indeterminate, $X = \V(zg) \subset \mathbb{C}^{n+1}$, and for any $i \in [k]$, define  
$$ F_{i,0}(\mathbf{x}, z) = (F_i(\mathbf{x}), z h(\mathbf{x}))$$  
and let $\Fb_0=(F_{1,0},\ldots,F_{k,0})$. Then, for any $\mathbf{a} \in \mathbb{C}^n$, 
$g(\mathbf{x}) \in I_{\mathcal{L}(\mathbf{a}, h, (F_1, \ldots, F_k))}$ if and only if $(\mathbf{a}, 1) \in S_{(\Fb_0, X)}$.
\end{proposition}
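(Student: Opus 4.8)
The plan is to follow the exact template of the single-map case in Proposition~\ref{prop:verify}, but with the indexing over paths replacing the indexing over iteration counts. Fix $\ab \in \C^n$. First I would record, by a straightforward induction on $m$, the explicit shape of the iterated augmented maps: for any sequence $i_1,\dots,i_m \in [k]$,
\[
\Fb_0^{i_1,\dots,i_m}(\ab,1) \;=\; \Big(F_{i_1,\dots,i_m}(\ab),\; h(\ab)\cdot h(F_{i_1}(\ab))\cdots h(F_{i_1,\dots,i_{m-1}}(\ab))\Big),
\]
where $\Fb_0^{i_1,\dots,i_m}$ denotes $F_{i_m,0}\circ\cdots\circ F_{i_1,0}$. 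This is the branching analogue of the identity $F_0^l(\ab,1)=(\ab_l, h(\ab_0)\cdots h(\ab_{l-1}))$ in the proof of Proposition~\ref{prop:verify}; the extra coordinate of each $F_{i,0}$ multiplies the running product by the guard value at the current point, exactly as in the single-path case. Composing with $zg$ then gives
\[
(zg)\circ \Fb_0^{i_1,\dots,i_m}(\ab,1) \;=\; h(\ab)\, h(F_{i_1}(\ab))\cdots h(F_{i_1,\dots,i_{m-1}}(\ab))\cdot g\big(F_{i_1,\dots,i_m}(\ab)\big).
\]

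Next I would translate membership $(\ab,1)\in S_{(\Fb_0,X)}$ via Definition~\ref{def:invariantsetbranch}: it says precisely that for every $m$ and every path $i_1,\dots,i_m\in[k]$, the point $\Fb_0^{i_1,\dots,i_m}(\ab,1)$ lies in $X=\V(zg)$, i.e. the product displayed above vanishes. (One also needs $(\ab,1)\in X$, i.e. $g(\ab)=0$, but since $g$ being an invariant does not constrain $g(\ab)$ — invariants need only vanish after the first iteration — I should set $z$ so that the point $(\ab,1)$ is automatically in $X$; actually here $X=\V(zg)$ and the first coordinate-block is $\ab$ with last coordinate $1$, so $(\ab,1)\in X \iff g(\ab)=0$. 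I need to check whether the definition of invariant for branching loops, as given just above, requires $g(\ab)=0$: it requires $g(F_{i_1}(\ab))=\cdots=0$, starting from the \emph{first} application of a map, not from $\ab$ itself. So strictly the statement as written would need $g(\ab)=0$ as well; I would either note that the convention is that $\ab$ itself is the $0$-th iterate and $X$ is built so this is consistent, or more likely observe that the product for $m\ge 1$ already includes the factor $h(\ab)$, so vanishing of $g$ at $\ab$ is not forced — this is the one subtlety to handle carefully, cf. the single-path proof where $(zg)\circ F_0^l$ for $l\ge 1$ also carries the factor $h(\ab_0)$.)

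With the dictionary in place, the two directions are mechanical. For the forward direction, suppose $g\in I_{\mathcal{L}(\ab,h,(F_1,\dots,F_k))}$. Fix a path $i_1,\dots,i_m$. By the definition of branching-loop invariant, either $g(F_{i_1}(\ab))=\cdots=g(F_{i_1,\dots,i_m}(\ab))=0$, in which case the last factor $g(F_{i_1,\dots,i_m}(\ab))$ of the product vanishes; or there is $l<m$ with $h(F_{i_1,\dots,i_l}(\ab))=0$, in which case the factor $h(F_{i_1,\dots,i_l}(\ab))$ of the product vanishes (note $l<m$ means this factor indeed appears in the product, which runs over guard values at points $\ab,F_{i_1}(\ab),\dots,F_{i_1,\dots,i_{m-1}}(\ab)$). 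Either way the product is $0$, so $\Fb_0^{i_1,\dots,i_m}(\ab,1)\in X$; as the path was arbitrary, $(\ab,1)\in S_{(\Fb_0,X)}$ by Lemma~\ref{lem:largeset}/Proposition~\ref{prop: extendinv}. For the converse, suppose $(\ab,1)\in S_{(\Fb_0,X)}$, so every such product vanishes. Fix a path $i_1,\dots,i_m$; I claim $g$ satisfies the invariant condition along it. If $h$ never vanishes at $\ab,F_{i_1}(\ab),\dots,F_{i_1,\dots,i_{m-1}}(\ab)$, then from vanishing of the product for each prefix $i_1,\dots,i_j$ ($j\le m$) and non-vanishing of all the $h$-factors we deduce $g(F_{i_1,\dots,i_j}(\ab))=0$ for all $j\le m$. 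Otherwise let $l$ be minimal with $h(F_{i_1,\dots,i_l}(\ab))=0$; then for each $j\le l$ the product for the prefix $i_1,\dots,i_j$ has all its $h$-factors (guards at points before index $j$, hence before index $l$) nonzero, forcing $g(F_{i_1,\dots,i_j}(\ab))=0$, which is exactly the second alternative in the definition. Hence $g$ is an invariant. The only genuine obstacle, as flagged above, is bookkeeping the off-by-one between "number of $h$-factors in the product" ($m$ of them, at indices $0,\dots,m-1$) and "the condition indexed by $l<m$" in the branching-invariant definition, together with the question of whether $g(\ab)=0$ is needed; once the conventions are pinned down these match up cleanly, and everything else is the verbatim branching lift of the proof of Proposition~\ref{prop:verify}.
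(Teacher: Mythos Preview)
Your approach mirrors the paper's proof exactly: compute the explicit shape of $\Fb_0^{i_1,\dots,i_m}(\ab,1)$, then argue both directions by analyzing which factor of the resulting product $h(\ab)\cdots h(F_{i_1,\dots,i_{m-1}}(\ab))\cdot g(F_{i_1,\dots,i_m}(\ab))$ vanishes. Your worry about whether $g(\ab)=0$ is required is well-spotted---the paper's proof silently relies on it too (the empty path forces $(\ab,1)\in X$), leaning on the introductory convention that invariants hold \emph{before} the first iteration.
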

\begin{proof}
Let $\ab_0=\ab$ and for $l\geq 1$, let $\ab_{i_1,\ldots,i_m}=F_{i_m}(\ldots (F_{i_1}(\ab_0)))$. Denote 
$F_{i_1,\ldots, i_m,0}(\xb)=F_{i_m,0}(\ldots (F_{i_1,0}(\xb)))$.
Assume $g(\xb)$ is a polynomial invariant of $ \mL(\ab, h,F)$. Thus, for any $m\in \N$ and any $i_1,\ldots,  i_m \in [k]$, either $g(\ab_{i_1})=\cdots =g(\ab_{i_1,\ldots, i_m})=0$
or there exists $l< m$ such that
$$g(\ab_{i_1})=\cdots =g(\ab_{i_1,\ldots, i_l})=0 \text{ and } h(\ab_{i_1,\ldots, i_l})=0.
$$
From the definition of polynomial maps $F_{i,0}$, we obtain
$$F_{i_1,\ldots, i_m,0}(\ab,1)=(\ab_{i_1,\ldots, i_m},h(\ab_0)\cdot h(\ab_{i_1})\cdot \ldots\cdot h(\ab_{i_1,\ldots, i_{m-1}})).$$
Therefore, $(zg)\circ F_{i_1,\ldots, i_m, 0}(\ab,1)=0$. Hence, $F_{i_1,\ldots, i_m,0}(\ab,1)\in X$, for any $m\in N$ and any $i_1,\ldots, i_m\in [k]$, that is $(\ab,1)\in S_{(\Fb_{0},X)}.$

Conversely, assume that $(\mathbf{a}, 1) \in S_{(\Fb_0, X)}$. Let $m \in \mathbb{N}$ and $i_1, \ldots, i_m \in [k]$. For any $l \leq m$, $F_{i_1, \ldots, i_l, 0}(\mathbf{a}, 1)$ is in $X$, implying that  
$$ h(\mathbf{a}_0) \cdot h(\mathbf{a}_{i_1}) \cdot \ldots \cdot h(\mathbf{a}_{i_1, \ldots, i_{l-1}}) g(\mathbf{a}_{i_1, \ldots, i_l}) = 0. $$  
It follows that either  
$g(\mathbf{a}_{i_1}) = \cdots = g(\mathbf{a}_{i_1, \ldots, i_m}) = 0,$  
or for some $l < m$,  
$$ g(\mathbf{a}_{i_1}) = \cdots = g(\mathbf{a}_{i_1, \ldots, i_l}) = 0 \text{ and } h(\mathbf{a}_{i_1, \ldots, i_l}) = 0, $$  
which implies that $g(\mathbf{x}) = 0$ is a polynomial invariant of $\mathcal{L}(\mathbf{a}, h, F)$.  
\end{proof}

    \begin{theorem}\label{thm:CheckBranch}
   Let $F_1, \ldots, F_k : \mathbb{C}^n \to \mathbb{C}^n$ be polynomial maps, $F=(F_1,\ldots,F_k)$, $\mathbf{a} \in \mathbb{Q}^n$, and let $g, \mathbf{h} = (h_1, \ldots, h_l)$ be polynomials in $\mathbb{Q}[\mathbf{x}]$. There exists an algorithm \CheckPIBranch that takes the input $(\mathbf{a}, g, \mathbf{h}, F)$ and outputs {\normalfont\texttt{True}} if $g \in I_{\mathcal{L}(\mathbf{a}, \mathbf{h}, F)}$, and {\normalfont\texttt{False}} otherwise.
\end{theorem}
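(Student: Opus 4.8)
\textbf{Proof plan for Theorem~\ref{thm:CheckBranch}.}
The plan is to mirror exactly the structure of Algorithm~\ref{algo:verify} and Theorem~\ref{thm:verify}, replacing the single-map ingredients with their branching counterparts established above. First, I would reduce the guard $\mathbf{h} = (h_1, \ldots, h_l)$ to a single polynomial by setting $h \gets h_1 \cdots h_l$, using that the conjunction of the inequations $h_1 \neq 0, \ldots, h_l \neq 0$ is equivalent to the single inequation $h \neq 0$; this is the same observation already used in Section~\ref{sectioninvariantset}. Then, introducing a fresh indeterminate $z$, I would form the $k$ lifted maps $F_{i,0}(\mathbf{x}, z) = (F_i(\mathbf{x}), z h(\mathbf{x}))$ for $i \in [k]$, set $\Fb_0 = (F_{1,0}, \ldots, F_{k,0})$, and call \InvariantSetBranch on input $(zg, \Fb_0)$ to obtain a finite list of polynomials $P_1, \ldots, P_m$ whose common vanishing set is exactly the invariant set $S_{(\Fb_0, \V(zg))}$, which is well-defined and effectively computable by Theorem~\ref{thm:InvBranch}.

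Next, I would have \CheckPIBranch return \texttt{True} precisely when $P_1(\mathbf{a}, 1) = \cdots = P_m(\mathbf{a}, 1) = 0$, i.e.\ when $(\mathbf{a}, 1) \in S_{(\Fb_0, \V(zg))}$, and \texttt{False} otherwise. Correctness of this test is then immediate from Proposition~\ref{prop:branchverify}, which states that $g(\mathbf{x}) \in I_{\mathcal{L}(\mathbf{a}, h, (F_1, \ldots, F_k))}$ if and only if $(\mathbf{a}, 1) \in S_{(\Fb_0, X)}$ with $X = \V(zg)$; combined with the fact that the guards $\mathbf{h} = (h_1, \ldots, h_l)$ and $h$ define the same loop, so $I_{\mathcal{L}(\mathbf{a}, \mathbf{h}, F)} = I_{\mathcal{L}(\mathbf{a}, h, F)}$, this gives exactly the claimed output specification. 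Termination is inherited from the termination of \InvariantSetBranch, which was argued in the proof of Theorem~\ref{thm:InvBranch} via the Noetherianity argument of Proposition~\ref{prop: extendinv} (the descending chain $X_0 \supseteq X_1 \supseteq \cdots$ stabilizes, and stabilization of radical ideals is detected by the \InRadical test).

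I do not expect any serious obstacle here: this theorem is essentially a bookkeeping assembly of Theorem~\ref{thm:InvBranch} and Proposition~\ref{prop:branchverify}, exactly paralleling how Theorem~\ref{thm:verify} follows from Theorem~\ref{thm:corralgo1} and Proposition~\ref{prop:verify} in the single-path case. The only point requiring a line of care is the reduction from the vector of guard polynomials $\mathbf{h}$ to the single product $h$; one must note that replacing $\mathbf{h}$ by $h$ changes neither the set of terminating runs nor the values attained along any run, hence preserves the invariant ideal. Once that is recorded, the proof is a direct transcription of the single-map argument, and — unlike Theorem~\ref{thm:verify} — I would not expect to state a complexity bound unless one wishes to, in which case it would follow from the branching analogue of Proposition~\ref{prop.complexity} with $\dF$ replaced by $\max_i \deg F_i$.
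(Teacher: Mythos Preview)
Your proposal is correct and follows essentially the same approach as the paper: define \CheckPIBranch by modifying Algorithm~\ref{algo:verify} to use the lifted maps $F_{i,0}=(F_i,zh)$ and call \InvariantSetBranch instead of \InvariantSet, then appeal to Theorem~\ref{thm:InvBranch} for the description of $S_{(\Fb_0,\V(zg))}$ and to Proposition~\ref{prop:branchverify} for the equivalence with $g\in I_{\mathcal{L}(\mathbf{a},\mathbf{h},F)}$. Your additional remarks on the guard reduction and on termination are fine but not strictly needed beyond what the paper already records.
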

\begin{proof}
 Let \CheckPIBranch be the algorithm derived from Algorithm~\ref{algo:verify} by making the following modifications:
\begin{itemize}
    \item replacing $(F, z h)$ with $((F_1, z h), \ldots, (F_k, z h))$ in Step~\ref{CheckPIStep2};
    \item replacing \InvariantSet with \InvariantSetBranch in Step~\ref{CheckPIStep3}.
\end{itemize}

Let $X = \V(z g) \subset \mathbb{C}^{n+1}$. By Theorem~\ref{thm:InvBranch}, the invariant set $S_{(\Fb_0, X)}$ corresponds to the vanishing set of $P_1, \ldots, P_m$. Therefore, according to Proposition~\ref{prop:branchverify}, $g \in I_{\mathcal{L}(\mathbf{a}, \mathbf{h}, F)}$ if and only if $P_1(\mathbf{a}, 1) = \ldots = P_m(\mathbf{a}, 1) = 0$, that is, if and only if \CheckPIBranch returns \texttt{True}.
\end{proof}

We now present a criterion for identifying polynomial invariants of branching loops within a vector subspace $E$ of $\mathbb{C}[\mathbf{x}]$. This enables us to extend Algorithm~\ref{matrixalgo} to handle the case of branching loops.

\begin{proposition}\label{prop:branching}
 Let $F_1, \ldots, F_k: \mathbb{C}^n \longrightarrow \mathbb{C}^n$ be polynomial maps, $F=(F_1,\ldots,F_k)$, and let $h, g_1, \ldots, g_m$ be polynomials in $\mathbb{C}[\xb]$.~Let $E$ be the vector space spanned by $(g_1, \ldots, g_m)$.~Let $\mathbf{y} = (y_1, \ldots, y_m)$ be new indeterminates.~Define  
$$ g(\mathbf{x}, \mathbf{y}) = y_1 g_1(\mathbf{x}) + \cdots + y_m g_m(\mathbf{x}).$$  
Let $z$ be a new indeterminate, and $X = \V(zg) \subset \mathbb{C}^{n+m+1}$. Define  
$$ F_{i,m}(\mathbf{x}, \mathbf{y}, z) = (F_i(\mathbf{x}), \mathbf{y}, z h(\mathbf{x})),$$  
for any $i \in [k]$. Then, for any $\mathbf{a} \in \mathbb{C}^n$,  
\[
I_{\mathcal{L}(\mathbf{a}, h, F)} = \{ g(\mathbf{x}, \mathbf{b}) \mid (\mathbf{a}, \mathbf{b}, 1) \in S_{((F_{1,m}, \ldots, F_{k,m}), X)} \}.
\]
\end{proposition}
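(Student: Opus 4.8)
The plan is to reduce Proposition~\ref{prop:branching} to Proposition~\ref{prop:branchverify} in the same way that Proposition~\ref{prop3.4} reduced to Proposition~\ref{prop:verify} in the single-map case. The key observation is that adjoining the parameters $\mathbf{y}$ to the state and letting each branch map act trivially on them turns the parametric problem into a plain invariant-verification problem for a larger branching loop.

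First I would introduce, for each $i \in [k]$, the augmented map $G_i(\mathbf{x}, \mathbf{y}) = (F_i(\mathbf{x}), \mathbf{y})$ on $\mathbb{C}^{n+m}$, and set $G = (G_1, \ldots, G_k)$. Since the $\mathbf{y}$-coordinates are never modified, an easy induction on $m$ gives, for any sequence $i_1, \ldots, i_m \in [k]$,
\[
G_{i_1, \ldots, i_m}(\mathbf{x}, \mathbf{y}) = (F_{i_1, \ldots, i_m}(\mathbf{x}), \mathbf{y}),
\]
and hence $g(G_{i_1, \ldots, i_m}(\mathbf{x}, \mathbf{y})) = g(F_{i_1, \ldots, i_m}(\mathbf{x}), \mathbf{y})$ with $g$ viewed as a polynomial in $\mathbb{C}[\mathbf{x}, \mathbf{y}]$ that is linear in $\mathbf{y}$. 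From this and the definition of a polynomial invariant of a branching loop (the condition on $g(\mathbf{x}, \mathbf{b})$ at all iterates $F_{i_1, \ldots, i_l}(\mathbf{a})$, possibly truncated when $h$ vanishes), one gets the equivalence, for any fixed $\mathbf{b} \in \mathbb{C}^m$,
\[
g(\mathbf{x}, \mathbf{b}) \in I_{\mathcal{L}(\mathbf{a}, h, F)} \iff g(\mathbf{x}, \mathbf{y}) \in I_{\mathcal{L}((\mathbf{a}, \mathbf{b}), h, G)},
\]
where on the right $h$ is read as a polynomial in $\mathbb{C}[\mathbf{x}, \mathbf{y}]$ not involving $\mathbf{y}$, so the guard of $\mathcal{L}((\mathbf{a}, \mathbf{b}), h, G)$ behaves exactly like that of $\mathcal{L}(\mathbf{a}, h, F)$.

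Next I would apply Proposition~\ref{prop:branchverify} to the loop $\mathcal{L}((\mathbf{a}, \mathbf{b}), h, G)$ with $n$ replaced by $n+m$. That proposition introduces the indeterminate $z$, the variety $X = \V(zg) \subset \mathbb{C}^{n+m+1}$, and the maps $(G_i(\mathbf{x}, \mathbf{y}), z h(\mathbf{x}, \mathbf{y})) = (F_i(\mathbf{x}), \mathbf{y}, z h(\mathbf{x})) = F_{i,m}(\mathbf{x}, \mathbf{y}, z)$, which are exactly the $F_{i,m}$ of the statement; write $\Fb_m = (F_{1,m}, \ldots, F_{k,m})$. Proposition~\ref{prop:branchverify} then says $g(\mathbf{x}, \mathbf{y}) \in I_{\mathcal{L}((\mathbf{a}, \mathbf{b}), h, G)}$ if and only if $((\mathbf{a}, \mathbf{b}), 1) \in S_{(\Fb_m, X)}$. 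Combining this with the equivalence of the previous paragraph gives, for every $\mathbf{b} \in \mathbb{C}^m$, that $g(\mathbf{x}, \mathbf{b}) \in I_{\mathcal{L}(\mathbf{a}, h, F)}$ iff $(\mathbf{a}, \mathbf{b}, 1) \in S_{(\Fb_m, X)}$; taking the union over all such $\mathbf{b}$ yields the claimed description of $I_{\mathcal{L}(\mathbf{a}, h, F)}$ (intersected with $E$, which is implicit in the ansatz $g(\mathbf{x}, \mathbf{b})$ ranging over $E$).

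I do not expect a serious obstacle here; the proof is essentially bookkeeping. The one point requiring care is the guard: $h$ originally lives in $\mathbb{C}[\mathbf{x}]$, and one must check that extending it to $\mathbb{C}[\mathbf{x}, \mathbf{y}]$ and composing with the $G_i$ (which fix $\mathbf{y}$) gives $h(F_{i_1, \ldots, i_l}(\mathbf{x}))$, independent of $\mathbf{y}$, so that the termination behavior — and therefore the truncated vanishing conditions in the definition of a branching-loop invariant — is identical for $\mathcal{L}(\mathbf{a}, h, F)$ and $\mathcal{L}((\mathbf{a}, \mathbf{b}), h, G)$. Once this is noted, the chain of equivalences closes and the proposition follows.
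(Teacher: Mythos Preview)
Your proposal is correct and follows essentially the same route as the paper: introduce the augmented maps $G_i(\mathbf{x},\mathbf{y})=(F_i(\mathbf{x}),\mathbf{y})$, use the fact that $g(G_{i_1,\ldots,i_l}(\mathbf{a},\mathbf{b}))=g(F_{i_1,\ldots,i_l}(\mathbf{a}),\mathbf{b})$ to pass to the loop $\mathcal{L}((\mathbf{a},\mathbf{b}),h,G)$, apply Proposition~\ref{prop:branchverify}, and finally observe that the resulting extended maps $G_{i,0}$ coincide with the $F_{i,m}$ of the statement. Your remark that $h$ does not involve $\mathbf{y}$, so the guard behaves identically for both loops, is exactly the check the paper leaves implicit.
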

\begin{proof}
     Define $G_i(\xb,\yb)=(F_i(\xb),\yb)$. Since $g(G_{i_1,\ldots, i_l}(\ab,\bb))=g(F_{i_1,\ldots, i_l}(\ab),\bb)$,
    $$g(\xb,\bb)\in I_{\mL(\ab, h, F)} \iff g(\xb,\yb)\in I_{\mL((\ab,\bb), h,(G_1,\ldots, G_k))}.$$
By Proposition~\ref{prop:branchverify}, $g(\xb,\yb)\in I_{\mL((\ab,\bb), h,(G_1,\ldots, G_k))}$ if and only if $$(\ab,\bb,1)\in S_{((G_{1,0},\ldots, G_{k,0}),X)}.$$ Since $G_{i,0}$ is the same polynomial map as $F_{i,m}$ for any $i$, we have
$$g(\xb,\bb)\in I_{\mL(\ab, h, F)} \iff (\ab,\bb,1)\in S_{((F_{1,m},\ldots, F_{k,m}),X)}.$$
which proves the proposition.
\end{proof}

    \begin{theorem}\label{thm:ComputeBranch}
  Let $F_1, \ldots, F_k : \mathbb{C}^n \to \mathbb{C}^n$ be polynomial maps, $F = (F_1, \ldots, F_k)$, and $\mathbf{h} = (h_1, \ldots, h_l)$, $\gb = (g_1, \ldots, g_m)$ be polynomials in $\mathbb{Q}[\mathbf{x}]$. Let $E$ be the vector space spanned by $\gb$. Then, there exists an algorithm \ComputeMatrixBranch that takes the input $(\gb, \mathbf{h}, F)$ and outputs a polynomial matrix $A$ such that for any $\mathbf{a} \in \mathbb{C}^n$,
\[
I_{\mathcal{L}(\mathbf{a}, \mathbf{h}, F), E} = 
\left\{ \sum_{i \leq m} b_i g_i \mid (b_1, \ldots, b_m) \in \ker A(\mathbf{a}) \right\}.
\]
\end{theorem}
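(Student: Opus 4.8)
The plan is to mirror the proof of Theorem~\ref{algogeneral}, replacing each appeal to the single-map machinery with its branching counterpart established earlier in this subsection. Concretely, \ComputeMatrixBranch will be the algorithm obtained from \ComputeMatrix (Algorithm~\ref{matrixalgo}) by the two substitutions one expects: form $h = h_1 \cdots h_l$ and $g(\xb,\yb) = y_1 g_1 + \cdots + y_m g_m$ as before; set $F_{i,m} = (F_i(\xb), \yb, z h(\xb))$ for each $i \in [k]$; and then call \InvariantSetBranch on input $\bigl(zg, (F_{1,m},\ldots,F_{k,m})\bigr)$ in place of the call to \InvariantSet. By Theorem~\ref{thm:InvBranch}, this call terminates and returns polynomials $P_1,\ldots,P_N \in \Q[\xb,\yb,z]$ whose common vanishing set is $S_{((F_{1,m},\ldots,F_{k,m}),\,\V(zg))}$. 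Specializing $z = 1$ gives $\widetilde P_j(\xb,\yb) = P_j(\xb,\yb,1)$, and then the matrix extraction step \textsf{Matrix} produces the claimed $A$.

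The key remaining point is that the $\widetilde P_j$ are in fact linear in the $y_i$, so that the factorization $[\widetilde P_1,\ldots,\widetilde P_N]^t = A \cdot [y_1,\ldots,y_m]^t$ with $A$ over $\Q[\xb]$ is legitimate. First I would trace through the construction of \InvariantSetBranch from Theorem~\ref{thm:InvBranch}: it starts from $S = \{zg\}$ and repeatedly appends composites $\compose(\widetilde\gb, F_{i,m})$ over all $i \in [k]$. Unwinding this, every polynomial ever placed in $S$ has the form $(zg)\circ F_{i_1,m}\circ\cdots\circ F_{i_r,m}$ for some sequence $i_1,\ldots,i_r \in [k]$. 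Since each $F_{i,m}$ fixes the $\yb$-block and acts on $(\xb,z)$ by $(\xb,z)\mapsto (F_i(\xb), z h(\xb))$, an easy induction on $r$ (the branching analogue of the computation in the proof of Theorem~\ref{algogeneral}) gives
\[
(zg)\circ F_{i_1,m}\circ\cdots\circ F_{i_r,m} \;=\; z\cdot h(\xb)\, h(F_{i_1}(\xb))\cdots h(F_{i_1,\ldots,i_{r-1}}(\xb))\cdot g\bigl(F_{i_1,\ldots,i_r}(\xb),\yb\bigr),
\]
which is visibly linear in the $y_i$ (the $\yb$ appears only through the linear form $g(\cdot,\yb)$). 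Setting $z=1$ preserves linearity in $\yb$, so each $\widetilde P_j$ is a $\Q[\xb]$-linear combination of $y_1,\ldots,y_m$, and \textsf{Matrix} returns a well-defined $N\times m$ matrix $A$ over $\Q[\xb]$ with $[\widetilde P_1,\ldots,\widetilde P_N]^t = A\cdot[y_1,\ldots,y_m]^t$.

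It then remains to conclude that $A$ satisfies the displayed characterization. For $\ab \in \C^n$ and $\bb \in \C^m$, we have $\bb \in \ker A(\ab)$ iff $\widetilde P_j(\ab,\bb) = 0$ for all $j$, iff $P_j(\ab,\bb,1) = 0$ for all $j$, iff $(\ab,\bb,1) \in S_{((F_{1,m},\ldots,F_{k,m}),\,\V(zg))}$. By Proposition~\ref{prop:branching}, the latter is equivalent to $g(\xb,\bb) = \sum_{i\le m} b_i g_i(\xb)$ being a polynomial invariant of $\mL(\ab, \mathbf{h}, F)$, i.e.\ to $\sum_{i\le m} b_i g_i \in I_{\mL(\ab,\mathbf{h},F),E}$ (here using that the guard $[h_1\neq 0,\ldots,h_l\neq 0]$ is equivalent to $[h_1\cdots h_l \neq 0]$, exactly as in the single-inequation reduction recalled before Proposition~\ref{prop:verify}). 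This yields the set equality claimed in the theorem. I do not anticipate a genuine obstacle here: the only thing to be careful about is the bookkeeping in the induction establishing the closed form for the iterated composites (in particular tracking the product of $h$-factors along the branch), and noting that \InvariantSetBranch indeed terminates, which is precisely the content of Theorem~\ref{thm:InvBranch}; everything else is a routine transfer of the proof of Theorem~\ref{algogeneral}.
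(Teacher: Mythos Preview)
Your proposal is correct and follows essentially the same route as the paper's proof: define \ComputeMatrixBranch by the two obvious substitutions in Algorithm~\ref{matrixalgo}, invoke Theorem~\ref{thm:InvBranch} for the output of \InvariantSetBranch, unwind the iterated composites to see that each $\widetilde P_j$ equals a product of $h$-values along a branch times $g(F_{i_1,\ldots,i_s}(\xb),\yb)$ and hence is linear in $\yb$, and then conclude via Proposition~\ref{prop:branching} exactly as in Theorem~\ref{algogeneral}. The only cosmetic point is the ordering convention in your composite $F_{i_1,m}\circ\cdots\circ F_{i_r,m}$ versus the paper's $F_{i_1,\ldots,i_r}$, but since the output of \InvariantSetBranch ranges over all index sequences this is immaterial.
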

\begin{proof}
Let \ComputeMatrixBranch be the algorithm derived from Algorithm~\ref{matrixalgo} by making the following modifications:
\begin{itemize}
    \item replacing $(F, \mathbf{y}, z h)$ with $((F_1, \mathbf{y}, z h), \ldots, (F_k, \mathbf{y}, z h))$ in Step~\ref{ComMatStep3};
    \item replacing \InvariantSet with \InvariantSetBranch in Step~\ref{ComMatStep4}.
\end{itemize}
Let $g = y_1 g_1 + \cdots + y_m g_m$ and $h = h_1 \cdots h_l$ as in Algorithm~\ref{matrixalgo}, and define $X = \V(zg)$. 
By Theorem~\ref{thm:InvBranch}, on input $(zg, ((F_1, \mathbf{y}, z h), \ldots, (F_k, \mathbf{y}, z h)))$, \InvariantSetBranch outputs polynomials $P_1, \ldots, P_N \in \mathbb{Q}[\mathbf{x}, \mathbf{y}, z]$ whose common vanishing set is $S_{((F_{1,m}, \ldots, F_{k,m}), X)}$. Let $\widetilde{P}_j = P_j(\mathbf{x}, \mathbf{y}, 1)$ for each $j$. 
By the construction of \InvariantSetBranch, for each $j \in \{1, \ldots, N\}$, there exist indices $i_1, \ldots, i_{s_j} \in [k]$ such that
\begin{align*}
    \widetilde{P}_j &= P_j(\mathbf{x}, \mathbf{y}, 1) = (zg) \circ (F_{i_1, \ldots, i_{s_j}}(\mathbf{x}), \mathbf{y}, h(\mathbf{x}) \cdot h(F_{i_1}(\mathbf{x})) \cdot \ldots \cdot h(F_{i_1, \ldots, i_{s_j-1}}(\mathbf{x}))) \\
    &= h(\mathbf{x}) \cdot h(F_{i_1}(\mathbf{x})) \cdot \ldots \cdot h(F_{i_1, \ldots, i_{s_j-1}}(\mathbf{x})) \cdot g(F_{i_1, \ldots, i_{s_j}}(\mathbf{x}), \mathbf{y}).
\end{align*}
Since $g(\mathbf{x}, \mathbf{y})$ is linear in the $\mathbf{y}_i$'s, the $P_j$'s are linear in the $\mathbf{y}_i$'s. By applying Proposition~\ref{prop:branching}, the rest of the proof follows similarly to the non-branching case in Theorem~\ref{algogeneral}.
\end{proof}

We conclude this subsection by extending Algorithm~\ref{alg2}, which computes polynomial invariants for a fixed initial value, to the case of branching loops. First, we generalize Proposition~\ref{prop_sufficient} to this setting, providing a sufficient condition for a polynomial to be an invariant of a branching loop, based on the initial values of the loop.

\begin{proposition}\label{prop_sufficientbranching}
Consider a loop $\mathcal{L}(\mathbf{a}_0, h(\mathbf{x}), (F_1, \ldots, F_k))$. Let $g_1, \ldots, g_m$ 
be polynomials in $\mathbb{C}[\mathbf{x}]$. Define 
$G_0(\mathbf{x}, \mathbf{y}) = y_1 g_1(\mathbf{x}) + \cdots + y_m g_m(\mathbf{x}),$  
and  
$$ G_{i_1, \ldots, i_s}(\mathbf{x}, \mathbf{y}) = \prod_{j=0}^{s-1} h(F_{i_1, \ldots, i_j}(\mathbf{x})) \cdot \sum_{i=1}^{m} y_i g_i(F_{i_1, \ldots, i_s}(\mathbf{x})),$$
for any $s \geq 1$ and $i_1, \ldots, i_s \in [k]$.
If $\mathbf{b} \in \mathbb{C}^m$ and $G_0(\mathbf{x}, \mathbf{b})$ is a polynomial invariant of $\mathcal{L}(\mathbf{a}_0, h, (F_1, \ldots, F_k))$, then for any $s \geq 1$ and $i_1, \ldots, i_s \in [k]$,  
\begin{equation*}\label{le}
G_0(\mathbf{a}_0, \mathbf{y}) = G_{i_1, \ldots, i_s}(\mathbf{a}_0, \mathbf{y}) = 0.
\end{equation*} 
\end{proposition}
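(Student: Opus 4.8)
The plan is to give a direct argument, parallel to Proposition~\ref{prop_sufficient} and Lemma~\ref{prop4} but without invoking invariant sets: a one-line factorization reduces everything to the case split in the definition of a branching-loop invariant.

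First I would record the identity, valid for every $s \geq 1$ and every $i_1, \ldots, i_s \in [k]$ under the convention that $F_{i_1,\ldots,i_0}$ is the identity map (so the $j=0$ factor below is $h(\mathbf{x})$):
\[
G_{i_1,\ldots,i_s}(\mathbf{x},\mathbf{y}) \;=\; \Big(\prod_{j=0}^{s-1} h\big(F_{i_1,\ldots,i_j}(\mathbf{x})\big)\Big)\cdot G_0\big(F_{i_1,\ldots,i_s}(\mathbf{x}),\mathbf{y}\big),
\]
which is immediate from the definition of $G_{i_1,\ldots,i_s}$ together with $\sum_{i=1}^m y_i g_i\big(F_{i_1,\ldots,i_s}(\mathbf{x})\big) = G_0\big(F_{i_1,\ldots,i_s}(\mathbf{x}),\mathbf{y}\big)$. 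Writing $g := G_0(\mathbf{x},\mathbf{b})$ and $\mathbf{a}_{i_1,\ldots,i_l} := F_{i_1,\ldots,i_l}(\mathbf{a}_0)$ as in the definition of invariance, I then specialize $\mathbf{x}=\mathbf{a}_0$, $\mathbf{y}=\mathbf{b}$ to obtain $G_0(\mathbf{a}_0,\mathbf{b}) = g(\mathbf{a}_0)$ and
\[
G_{i_1,\ldots,i_s}(\mathbf{a}_0,\mathbf{b}) \;=\; \Big(\prod_{j=0}^{s-1} h\big(\mathbf{a}_{i_1,\ldots,i_j}\big)\Big)\cdot g\big(\mathbf{a}_{i_1,\ldots,i_s}\big).
\]

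Next I would use that $g$ is an invariant of $\mathcal{L}(\mathbf{a}_0,h,(F_1,\ldots,F_k))$. Since an invariant holds at the initial value, $g(\mathbf{a}_0)=0$, so $\mathbf{b}$ satisfies $G_0(\mathbf{a}_0,\mathbf{y})=0$ (the analogue of the $K=0$ equation in Proposition~\ref{prop_sufficient}). Now fix $s \geq 1$ and $i_1,\ldots,i_s \in [k]$ and apply the definition of invariance with $m=s$ to this sequence. In the first alternative $g(\mathbf{a}_{i_1,\ldots,i_s})=0$, hence the product above is multiplied by $0$ and $G_{i_1,\ldots,i_s}(\mathbf{a}_0,\mathbf{b})=0$. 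In the second alternative there is an index $l<s$, i.e.\ $0\leq l\leq s-1$, with $h(\mathbf{a}_{i_1,\ldots,i_l})=0$; this factor occurs in $\prod_{j=0}^{s-1} h(\mathbf{a}_{i_1,\ldots,i_j})$, so the product vanishes and again $G_{i_1,\ldots,i_s}(\mathbf{a}_0,\mathbf{b})=0$. In either case $\mathbf{b}$ satisfies $G_{i_1,\ldots,i_s}(\mathbf{a}_0,\mathbf{y})=0$, which is the claim.

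There is no deep obstacle: the content is the factorization followed by a routine case split. The point needing care is the bookkeeping — aligning the index range $0\leq j\leq s-1$ of the product with the range $l<s$ allowed for the early-exit index in the definition, and using the empty-composition/empty-product convention so that the case $s=1$ (where the product is simply $h(\mathbf{a}_0)$) goes through uniformly. The only mildly conceptual ingredient is that a loop invariant must hold at the initial state $\mathbf{a}_0$; this is what produces the equation $G_0(\mathbf{a}_0,\mathbf{y})=0$ and mirrors the role of the base point in Lemma~\ref{prop4}.
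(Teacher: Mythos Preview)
Your argument is correct. The factorization
\[
G_{i_1,\ldots,i_s}(\mathbf{a}_0,\mathbf{b}) \;=\; \Big(\textstyle\prod_{j=0}^{s-1} h(\mathbf{a}_{i_1,\ldots,i_j})\Big)\cdot g(\mathbf{a}_{i_1,\ldots,i_s})
\]
together with the two alternatives in the definition of a branching-loop invariant immediately forces one of the two factors to vanish, and your index bookkeeping (in particular matching $l<s$ with the range $0\le j\le s-1$ of the product, and handling $s=1$ via the empty-composition convention) is accurate.

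This is, however, a genuinely different route from the paper's. The paper does not argue directly from the definition of invariance; instead it derives Proposition~\ref{prop_sufficientbranching} from Lemma~\ref{lemma:branching}, which sits inside the invariant-set framework: one shows that the slice $S_l = X_l\cap\V(\xb-\ab_0,z-1)$ is cut out exactly by the equations $\mathbf{G}_{\le l}(\ab_0,\yb)$, and that the invariant set (which contains $(\ab_0,\bb,1)$ by Proposition~\ref{prop:branching}) lies in every $S_l$. Your proof is more elementary and self-contained, bypassing Propositions~\ref{prop: extendinv} and~\ref{prop:branching} entirely. What the paper's detour buys is structural: Lemma~\ref{lemma:branching} computes the varieties $S_l$ explicitly, and this description is reused verbatim when proving correctness of \TruncatedIdealBranch in Theorem~\ref{thm:branchingloops}. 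Your shortcut establishes the proposition but would not by itself supply that geometric identification.
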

Proposition~\ref{prop_sufficientbranching} is a direct consequence of~the~following~lemma.

\begin{lemma}\label{lemma:branching}
Let $F_1,\ldots ,F_k:\Cn \longrightarrow \Cn$ be polynomial maps and $h, g_1,\ldots g_m$ be polynomials in $\C[\xb]$.
Let $$X_0 = \V(z(y_1g_1(\xb)+\cdots +y_mg_m(\xb)))\subset \mathbb{C}^{n+m+1}$$  
and for $l\geq 1$ define $X_{l} = X_{l-1}\cap\displaystyle\bigcap_{i=1}^{k}F_{i,m}^{-1}(X_{l-1})$. Let $\ab_0\in \Cn$ and $$S_{l}=X_{l}\cap \V(\xb-\ab_0,z-1),$$
 for all $l \in \N$.
Finally, let $\mathbf{G}_{\leq l}$ denote the list containing $G_0$ and all $G_{i_1, \ldots, i_j}$ for $1 \leq j \leq l$ and $i_1, \ldots, i_j \in [k]$.
Then for any $l \in \N$, the following holds:
    \begin{itemize}
    \item[$(a)$] $S_l=\V\Big(\mathbf{G}_{\leq l}(\ab_0,y), \;\xb-\ab_0,\; z-1\Big);$
    \item[$(b)$] 
    $S_{((F_{1,m},\ldots, F_{k,m}),X)}\cap \V(\xb-\ab_0,z-1)\subset S_{l}$.
    \end{itemize}
\end{lemma}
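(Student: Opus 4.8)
The plan is to mirror, almost verbatim, the proof of Lemma~\ref{prop4}, since Lemma~\ref{lemma:branching} is precisely its branching analogue: the role of the single preimage $F_m^{-1}$ is now played by the simultaneous preimage $\bigcap_{i=1}^k F_{i,m}^{-1}$, and the single orbit is replaced by the tree of all orbits indexed by words $i_1,\dots,i_j\in[k]$.

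For part $(a)$, I would first record the shape of the iterated maps: for any word $i_1,\dots,i_s\in[k]$, by definition of $F_{i,m}(\xb,\yb,z)=(F_i(\xb),\yb,zh(\xb))$ one has
$$ F_{i_s,m}\big(\cdots(F_{i_1,m}(\xb,\yb,z))\big)=\Big(F_{i_1,\dots,i_s}(\xb),\;\yb,\;z\cdot\textstyle\prod_{j=0}^{s-1}h(F_{i_1,\dots,i_j}(\xb))\Big), $$
with the convention $F_{i_1,\dots,i_0}=\mathrm{id}$. Then, applying Lemma~\ref{lem:equtionspreimage} to the defining polynomial $z\cdot g(\xb,\yb)$ of $X_0$, the preimage under this composed map is $\V\big(z\cdot\prod_{j=0}^{s-1}h(F_{i_1,\dots,i_j}(\xb))\cdot\sum_i y_i g_i(F_{i_1,\dots,i_s}(\xb))\big)=\V(z\cdot G_{i_1,\dots,i_s})$. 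An easy induction on $l$ — using that $X_l=X_{l-1}\cap\bigcap_i F_{i,m}^{-1}(X_{l-1})$ and that taking preimages commutes with intersections — shows $X_l=\V\big(z\cdot\mathbf{G}_{\le l}\big)$, i.e. the common zero set of $zG_0$ and all the $zG_{i_1,\dots,i_j}$ for $j\le l$. Intersecting with $\V(\xb-\ab_0,z-1)$ substitutes $\xb=\ab_0$, $z=1$, which kills the factor $z$ and evaluates at $\ab_0$, giving
$$ S_l=\V\big(G_0(\ab_0,\yb),\ \{G_{i_1,\dots,i_j}(\ab_0,\yb)\}_{j\le l},\ \xb-\ab_0,\ z-1\big)=\V\big(\mathbf{G}_{\le l}(\ab_0,\yb),\ \xb-\ab_0,\ z-1\big), $$
which is exactly $(a)$. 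One mild care point is the empty-product convention for $s=0$ so that $G_0$ fits the general formula; I would just state it explicitly.

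For part $(b)$, I invoke Proposition~\ref{prop: extendinv} (the branching stabilization statement) applied to the maps $F_{1,m},\dots,F_{k,m}$ and the variety $X=X_0\subset\C^{n+m+1}$: there is $N\in\N$ with the descending chain $X_0\supseteq X_1\supseteq\cdots\supseteq X_N=X_{N+1}=\cdots$, and $X_N=S_{((F_{1,m},\dots,F_{k,m}),X)}$. Intersecting the whole chain with $V:=\V(\xb-\ab_0,z-1)$ preserves the inclusions, so $S_0\supseteq S_1\supseteq\cdots\supseteq S_N=S_{((F_{1,m},\dots,F_{k,m}),X)}\cap V$, and since the chain is eventually constant this intersection equals $S_N$, hence is contained in $S_l$ for every $l$ (for $l\le N$ by the descending inclusions, for $l\ge N$ by equality). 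This is $(b)$.

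I do not expect a genuine obstacle here; the only things to be careful about are bookkeeping: getting the iterated-composition formula right (including the empty product), making sure the induction for $X_l=\V(z\mathbf{G}_{\le l})$ correctly handles that each new level intersects with preimages of \emph{all} of $X_{l-1}$ (so one picks up $G_{i_1,\dots,i_l}$ for every word of length exactly $l$, not just one), and that preimage-of-intersection equals intersection-of-preimages so the $\V(\cdot)$ description stays clean. The appeal to Proposition~\ref{prop: extendinv} for the stabilization is the one non-elementary ingredient, and it is already available in the excerpt.
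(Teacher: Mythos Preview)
Your proposal is correct and follows essentially the same approach as the paper. The paper first isolates the identity $X_l=\bigcap_{j=0}^{l}\bigcap_{i_1,\dots,i_j\in[k]} F_{i_1,\dots,i_j;m}^{-1}(X_0)$ via induction before applying Lemma~\ref{lem:equtionspreimage}, whereas you fold this into the proof of $(a)$ by directly showing $X_l=\V(z\cdot\mathbf{G}_{\le l})$, but the content and the key ingredients (the iterated-map formula, Lemma~\ref{lem:equtionspreimage}, and Proposition~\ref{prop: extendinv} for~$(b)$) are identical.
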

\begin{proof} Let $F_{i_1,\ldots, i_j;m}(\xb)=F_{i_j,m}(\ldots (F_{i_1,m}(\xb)))$. 
 We now prove $$X_l=\displaystyle\bigcap_{j=0}^{l}\bigcap_{i_1,\ldots, i_j\in [k]} F_{i_1,\ldots,i_j;m}^{-1}(X_0)$$ by induction on $l$ and define $F_{\emptyset}(\xb)=\xb$. For $l=0$, we have $X_0=F^{-1}_{\emptyset;m}(X_0)$
 which proves the base case. Assume that the statement holds for $l-1\geq 0$,  that is $\textstyle{X_{l-1}=\bigcap_{j=0}^{l-1}\bigcap_{i_1,\ldots, i_j\in [k]} F_{i_1,\ldots,i_j;m}^{-1}(X)}.$ ~From~the~inductive~hypothesis, it follows that:
  $
  \textstyle\bigcap_{i=0}^{k}F_{i,m}^{-1}(X_{l-1})=\bigcap_{j=1}^{l}\bigcap_{i_1,\ldots, i_j\in [k]} F_{i_1,\ldots,i_j;m}^{-1}(X).
  $
 Since $X_l =  X_{l-1}\cap\displaystyle\bigcap_{i=1}^{k}F_{i,m}^{-1}(X_{l-1})$, the proof is complete.
 
\noindent $(a)$ The case $l = 0$ follows directly from the definitions. Let $l \geq 1$, then
$$ F_{i_1, \ldots, i_j; m}(\mathbf{x}, \mathbf{y}, z) = \left( F_{i_1, \ldots, i_j}(\mathbf{x}), \mathbf{y}, z \prod_{s=0}^{j-1} h(F_{i_1, \ldots, i_s}(\mathbf{x})) \right)$$ 
for $1 \leq j \leq l$. Then, according to Lemma~\ref{lem:equtionspreimage},  
\begin{align*}
    F_{i_1, \ldots, i_j; m}^{-1}(X) &= \V\left( z \prod_{s=0}^{j-1} h(F_{i_1, \ldots, i_s}(\mathbf{x})) \cdot \sum_{t=1}^{m} y_t g_t(F_{i_1, \ldots, i_j}(\mathbf{x})) \right) \\
    &= \V(z \cdot G_{i_1, \dotsc, i_j}).
\end{align*}  
Since $X_l = \bigcap_{j=0}^{l} \bigcap_{i_1, \ldots, i_j \in [k]} F_{i_1, \ldots, i_j; m}^{-1}(X_0)$, we are done. Indeed, the last two equations allow us to substitute $\mathbf{x}$ and $z$ with $\mathbf{a}_0$ and $1$, respectively.

\medskip\noindent The proof of item $(b)$ is similar to the non-branching case in Lemma~\ref{prop4}.
\end{proof}
\begin{theorem}\label{thm:branchingloops}
Let $F_1,\ldots ,F_k:\Cn \longrightarrow \Cn$ be polynomial maps, $F=(F_1,\ldots,F_k)$, $\ab \in \Q^n$. Consider the polynomials  $\hb=(h_1,\ldots, h_l)$ and $\gb=(g_1,\ldots g_m)$ in $\C[\xb]$. Let $E$ be the vector space spanned by $\gb$. Then, there exists an algorithm \TruncatedIdealBranch that takes the input $(\ab, \gb,\hb, F)$ and outputs a sequence of polynomials that forms a vector basis for $I_{\mL(\ab,\hb,F),E}$. 
\end{theorem}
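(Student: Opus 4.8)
The plan is to mirror the structure of the proof of Theorem~\ref{thm:corralgo2}, replacing each ingredient tailored to a single-path loop by its branching counterpart that was already established earlier in this subsection. Concretely, I would define \TruncatedIdealBranch to be the algorithm obtained from Algorithm~\ref{alg2} by the following substitutions: in Step~\ref{TrunIdStep3}, replace $F_m=(F,\yb,zh)$ by the tuple $(F_{1,m},\ldots,F_{k,m})$ where $F_{i,m}(\xb,\yb,z)=(F_i(\xb),\yb,zh(\xb))$; in Step~\ref{TrunIdStep4}, replace the linear forms $g(\ab,\yb,1),g(F_m(\ab,\yb,1)),\ldots,g(F_m^m(\ab,\yb,1))$ by the collection of linear forms $\mathbf{G}_{\leq m}(\ab,\yb)$, i.e.\ $G_0(\ab,\yb)$ together with all $G_{i_1,\ldots,i_j}(\ab,\yb)$ for $1\le j\le m$ and $i_1,\ldots,i_j\in[k]$, as defined in Proposition~\ref{prop_sufficientbranching}; in Step~\ref{step:6} replace \CheckPI by \CheckPIBranch; and in Step~\ref{step:C} replace \ComputeMatrix by \ComputeMatrixBranch.

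The correctness argument then proceeds in three stages exactly as in Theorem~\ref{thm:corralgo2}. First, since each $F_{i,m}$ fixes the $\yb$-coordinates, every form in $\mathbf{G}_{\leq m}(\ab,\yb)$ is linear in $y_1,\ldots,y_m$; let $(\bb^1,\ldots,\bb^s)$ be a basis of their common vanishing set. By Proposition~\ref{prop_sufficientbranching}, if $\bb\in\C^m$ is such that $g=\sum_i b_ig_i$ is a polynomial invariant of $\mL(\ab,\hb,F)$, then $\bb$ lies in that common vanishing set; hence $I_{\mL(\ab,\hb,F),E}$ is contained in the span $\mathcal{B}$ of the corresponding linearly independent polynomials of $E$. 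Second, by Theorem~\ref{thm:CheckBranch}, the sublist $\mathcal{B}'=(P_1,\ldots,P_l)$ isolated in Step~\ref{step:6} consists precisely of those elements of $\mathcal{B}$ that are \emph{not} invariants of $\mL(\ab,\hb,F)$; if $\mathcal{B}'=\emptyset$ then $\mathcal{B}$ is already a basis of $I_{\mL(\ab,\hb,F),E}$ and we return it. Third, if $\mathcal{B}'\neq\emptyset$, then by Theorem~\ref{thm:ComputeBranch} the matrix $A$ computed by \ComputeMatrixBranch on input $(\mathcal{B}',\hb,F)$ satisfies $I_{\mL(\ab,\hb,F),\langle\mathcal{B}'\rangle}=\{\sum_j c_jP_j\mid \cb\in\ker A(\ab)\}$, so the polynomials $\mathcal{C}$ from Step~\ref{step:12} form a basis for the intersection of $I_{\mL(\ab,\hb,F),E}$ with $\mathrm{span}(\mathcal{B}')$. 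The final linear-algebra bookkeeping — showing that $\mathcal{C}\cup\mathcal{B}''$, with $\mathcal{B}''=\mathcal{B}\setminus\mathcal{B}'$, is a basis of $I_{\mL(\ab,\hb,F),E}$ — is then word-for-word the same as in Theorem~\ref{thm:corralgo2}: the elements are linearly independent invariants by construction, and given any $g\in I_{\mL(\ab,\hb,F),E}$ one writes $g=\sum_{i\le l}c_iP_i+\sum_j c_{l+j}g_j'$ using membership in $\mathcal{B}$, observes that $\sum_{i\le l}c_iP_i$ lies in $I_{\mL(\ab,\hb,F),E}\cap\mathrm{span}(\mathcal{B}')$, hence is a combination of elements of $\mathcal{C}$, and concludes.

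The only genuinely new verification — and thus the main obstacle — is justifying the soundness of the \emph{truncation to $m$ words}, i.e.\ that using only $\mathbf{G}_{\leq m}(\ab,\yb)$ (finitely many linear constraints, one per branch-word of length $\le m$) rather than all branch-words suffices to cut down $E$ to a subspace still containing $I_{\mL(\ab,\hb,F),E}$. This is exactly what Lemma~\ref{lemma:branching}$(b)$ gives, combined with the codimension bound: $I_{\mL(\ab,\hb,F),E}$ has codimension at most $m=\dim E$ in $E$, so the descending chain $S_0\supseteq S_1\supseteq\cdots$ of solution spaces (equivalently, the ascending chain of constraint spans) stabilizes within $m$ steps in the sense needed — the span of $\mathbf{G}_{\leq m}(\ab,\yb)$ already contains all linear constraints cutting out $I_{\mL(\ab,\hb,F),E}$. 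I would state this stabilization explicitly as the one place where the branching case requires a short argument beyond a mechanical translation, and otherwise cite Lemma~\ref{lemma:branching}, Proposition~\ref{prop_sufficientbranching}, Theorem~\ref{thm:CheckBranch} and Theorem~\ref{thm:ComputeBranch} to close the proof.
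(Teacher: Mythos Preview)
Your proposal is correct and follows essentially the same approach as the paper: define \TruncatedIdealBranch by the listed substitutions in Algorithm~\ref{alg2}, then invoke Proposition~\ref{prop_sufficientbranching}, Theorem~\ref{thm:CheckBranch}, and Theorem~\ref{thm:ComputeBranch} in that order, deferring the final linear-algebra bookkeeping to the argument of Theorem~\ref{thm:corralgo2}.

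One correction to your final paragraph, though: there is no ``main obstacle'' here. The containment $I_{\mL(\ab,\hb,F),E}\subseteq\mathrm{span}(\mathcal{B})$ is \emph{immediate} from Proposition~\ref{prop_sufficientbranching}: any invariant $\bb$ satisfies $G_{i_1,\ldots,i_s}(\ab,\bb)=0$ for \emph{all} $s$, hence in particular for $s\le m$. No stabilization or codimension argument is needed, and indeed your stronger assertion that ``the span of $\mathbf{G}_{\leq m}(\ab,\yb)$ already contains all linear constraints cutting out $I_{\mL(\ab,\hb,F),E}$'' is generally false --- that is precisely why the later calls to \CheckPIBranch and \ComputeMatrixBranch are required to refine $\mathcal{B}$ down to the actual invariant space. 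The paper's proof accordingly omits this discussion entirely.
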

\begin{proof}
Let \TruncatedIdealBranch be the modified version of Algorithm~\ref{alg2} obtained by replacing: \begin{itemize} \item $F$ with $(F_1, \ldots, F_k)$ in Steps~\ref{TrunIdStep3},~\ref{step:6} and~\ref{step:C}; \item the linear equations in Step~\ref{TrunIdStep4} with those described in Proposition~\ref{prop_sufficientbranching}; \item \CheckPI with \CheckPIBranch in Step~\ref{step:6}; \item \ComputeMatrix with \ComputeMatrixBranch in Step~\ref{step:C}. \end{itemize}
By Proposition~\ref{prop_sufficientbranching}, $I_{\mathcal{L}(\mathbf{a}, \mathbf{h}, F), E}$ is contained in the vector space spanned by the linearly independent polynomials of $E$ in $\mathcal{B}$, as defined in Step~\ref{step:B} of \TruncatedIdealBranch. By Theorem~\ref{thm:CheckBranch}, $\mathcal{B}'$, defined in Step~\ref{step:6} of \TruncatedIdealBranch, consists of all polynomials in $\mathcal{B}$ that are not polynomial invariants of $\mathcal{L}(\mathbf{a}, \mathbf{h}, F)$. By Theorem~\ref{thm:ComputeBranch}, the polynomials in $\mathcal{C}$, computed in Step~\ref{step:12}  of \TruncatedIdealBranch, form a basis for the intersection of $I_{\mathcal{L}(\mathbf{a}, \mathbf{h}, F), E}$ with the vector space spanned by $\mathcal{B}'$. Let $\mathcal{B}''$ be as defined in Step~\ref{step:13} of \TruncatedIdealBranch. Then, $\mathcal{C} \cup \mathcal{B}''$ forms a basis for $I_{\mathcal{L}(\mathbf{a}, \mathbf{h}, F), E}$, and the proof follows in the same way as the non-branching case in Theorem~\ref{thm:corralgo2}.
\end{proof}
\begin{example}[Markov triples]\label{ex: markov}
Consider the following loop $\mL$:

\programbox[0.5\linewidth]{
    \State $(x_1,x_2,x_3) := (1,1,2)$
    \While{true}
        \If{$\ast$}
            \State $\begin{pmatrix}
                x_1 \\ x_2 \\ x_3
            \end{pmatrix}
            \xleftarrow{F_1}
            \begin{pmatrix}
                x_1 \\ 3x_1x_2 - x_3 \\ x_2
            \end{pmatrix}$
        \Else
            \State $\begin{pmatrix}
                x_1 \\ x_2 \\ x_3
            \end{pmatrix}
            \xleftarrow{F_2}
            \begin{pmatrix}
                x_2 \\ 3x_2x_3 - x_1 \\ x_3
            \end{pmatrix}$
        \EndIf
    \EndWhile
}

\rev{In this example, we compute the vector space $I_{\mL,\Q[\xb]_{\leq 3}}$ of all polynomial invariants of degree at most~$3$ for~$\mL$. 
Let $\gb$ denote the set of all monomials of degree up to~$3$. 
We apply \TruncatedIdealBranch\ to the input $((1,1,2),\gb,1,(F_1,F_2))$ in order to compute $I_{\mL,\Q[\xb]_{\leq 3}}$. 
Assume that 
\[
g = b_1 + b_2 x_1 + \dotsb + b_{20} x_3^3
\]
is a polynomial invariant of~$\mL$. 
Since $\mL$ has two branches, there are $32$ linear constraints in $\mathbf{G}_{\leq 5}$.  
By Proposition~\ref{prop_sufficientbranching}, any polynomial invariant of~$\mL$ must satisfy these $32$ linear equations, whose solution space is one-dimensional and generated by
$x_1^2 + x_2^2 + x_3^2 - 3 x_1 x_2 x_3$.
Finally, Algorithm~\CheckPIBranch\ verifies that this polynomial is indeed an invariant of~$\mL$.  
Therefore, $I_{\mL,\Q[\xb]_{\leq 3}}$ has dimension one and is spanned by $x_1^2 + x_2^2 + x_3^2 - 3x_1x_2x_3$.}
\end{example}

\section{General polynomial invariants of specific forms}\label{sec:lift}
In this section, we show that when searching for polynomials of a specific form, it is possible to identify those that hold for any initial value much more efficiently than in Section~\ref{sectiongenerating}. Specifically, we first present a necessary and sufficient condition for identifying polynomial invariants where only the constant coefficient depends on the initial value. We then use this condition to detect and generate all polynomial invariants of this form.

Let $h$ and $F = (F_1, \dots, F_n)$ be polynomials in $\mathbb{C}[\mathbf{x}]$. We begin with the following lemma, which reduces a slightly more general form of polynomial invariants to the special case of $f(\mathbf{x}) - f(\mathbf{a})$.

\begin{lemma}\label{lemma:constantcoef}
Let $f$, $g$, and $P$ be non-zero polynomials in $\mathbb{C}[\mathbf{x}]$.~Then,~$P(\mathbf{a}) f(\mathbf{x}) - g(\mathbf{a})$ is in $I_{\mathcal{L}(\mathbf{a}, h, F)}$ for any $\mathbf{a} \in \mathbb{C}^n$ if and only if the following~hold:
\begin{enumerate}
    \item $f(\mathbf{x}) - f(\mathbf{a}) \in I_{\mathcal{L}(\mathbf{a}, h, F)}$ for all $\mathbf{a} \in \mathbb{C}^n$,
    \item $g(\mathbf{x}) = P(\mathbf{x}) f(\mathbf{x})$.
\end{enumerate}
\end{lemma}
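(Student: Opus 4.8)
The plan is to prove both directions of the equivalence by unwinding the definition of polynomial invariant in terms of the iterated sequence $\ab_l = F^l(\ab)$, exactly as in the proof of Proposition~\ref{prop:verify}. Fix $\ab \in \C^n$ and let $\ab_0 = \ab$, $\ab_l = F(\ab_{l-1})$, and let $k \in \N \cup \{+\infty\}$ be the number of iterations after which $\mL(\ab, h, F)$ terminates; recall that $h(\ab_l) \neq 0$ for $l < k$ and $h(\ab_k) = 0$ if $k < +\infty$. A polynomial $Q(\xb,\ab)$ (with coefficients possibly depending on $\ab$) lies in $I_{\mL(\ab,h,F)}$ precisely when $Q(\ab_l,\ab) = 0$ for all $1 \le l < k+1$.

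For the ``if'' direction, assume (1) and (2). From (2), $P(\xb)f(\xb) = g(\xb)$ as polynomials, so in particular $P(\ab)f(\ab) = g(\ab)$ for every $\ab$. Then $P(\ab)f(\xb) - g(\ab) = P(\ab)\bigl(f(\xb) - f(\ab)\bigr)$, and since $f(\xb) - f(\ab) \in I_{\mL(\ab,h,F)}$ by (1) and $I_{\mL(\ab,h,F)}$ is an ideal, the product with the scalar $P(\ab)$ (equivalently, with the constant polynomial $P(\ab)$) is again in the invariant ideal. So $P(\ab)f(\xb) - g(\ab) \in I_{\mL(\ab,h,F)}$ for all $\ab$.

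For the ``only if'' direction, assume $P(\ab)f(\xb) - g(\ab) \in I_{\mL(\ab,h,F)}$ for all $\ab \in \C^n$. I would first extract the ``scalar'' identity: evaluating the invariant relation at the iterate $l=0$ is not available, but we can instead argue as follows. Since $P \neq 0$, the set $\{\ab : P(\ab) = 0\}$ is a proper subvariety of $\C^n$. For $\ab$ with $P(\ab) \neq 0$, consider the loop $\mL(\ab,h,F)$; for each $1 \le l < k+1$ we have $P(\ab)f(\ab_l) = g(\ab)$. Taking $l=1$ and also the ``$l=0$'' consistency — here one must be slightly careful, so the cleaner route is: the constant loop value $\ab_0 = \ab$ is not required to satisfy the invariant, but looping once gives $P(\ab)f(\ab_1) = g(\ab)$, and looping with the \emph{shifted} initial value $\ab_1$ (if the loop has not terminated) shows $P(\ab_1)f(\ab_2) = g(\ab_1)$, etc. A more robust approach avoids these termination subtleties entirely: apply the hypothesis at two related points. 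First, for the infinite loop (or simply by the definition), we get that for all $\ab$ (off the locus $P=0$, then extended by density) and all reachable $\ab_l$, $f(\ab_l) = g(\ab)/P(\ab)$ is \emph{independent of $l$}; comparing $l$ and $l+1$ and using that $\ab_{l+1}$ is itself a valid ``initial value'' gives $g(\ab)/P(\ab) = g(\ab_1)/P(\ab_1)$, but more directly, taking $\ab$ generic we can also run the loop starting from $\ab$ and read off, at the formal level, $P(\ab)f(\ab) - g(\ab)$: here is the key point — I would instead evaluate the invariant \emph{polynomial identity} rather than the sequence. The relation $P(\ab)f(\xb) - g(\ab) \in I_{\mL(\ab,h,F)}$ means this polynomial in $\xb$ vanishes on all loop iterates $\ab_1, \ab_2, \dots$; in particular $P(\ab)f(F(\ab)) = g(\ab)$ whenever the loop runs at least one step, i.e.\ whenever $h(\ab) \neq 0$. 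Since $\{h = 0\}$ is a proper subvariety (assuming $h \neq 0$; the case of the infinite loop $h=1$ is trivial here), the polynomial identity $P(\ab)f(F(\ab)) = g(\ab)$ holds for all $\ab$ by Zariski density, hence as polynomials: $P(\xb)f(F(\xb)) = g(\xb)$.

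To finish, I would then bootstrap: substituting $\ab \mapsto F(\ab)$ repeatedly, or rather comparing $P(\ab)f(\ab_l) = g(\ab)$ for consecutive $l$ and using the already-derived $g(\xb) = P(\xb)f(F(\xb))$ combined with $g(\ab_1) = P(\ab_1)f(\ab_2)$, yields $P(\ab)f(\ab_1) = g(\ab) = P(\ab)f(\ab_2) \cdot \frac{?}{?}$ — cleaner: from $g(\xb) = P(\xb)f(F(\xb))$ and the sequence relation $P(\ab)f(\ab_l) = g(\ab) = P(\ab)f(\ab_1)$ for all valid $l \ge 1$, we get (dividing by $P(\ab) \neq 0$ generically) $f(\ab_l) = f(\ab_1)$ for all $l \ge 1$, and in particular $f(\ab_1) = f(\ab_2) = f(F(\ab_1))$; running this for generic $\ab$ (so $\ab_1$ ranges over a dense set) gives the polynomial identity $f(\xb) = f(F(\xb))$. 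Hence $f(F^l(\xb)) = f(\xb)$ for all $l$, so for any $\ab$ and any iterate, $f(\ab_l) = f(\ab)$, which gives $f(\xb) - f(\ab) \in I_{\mL(\ab,h,F)}$ — this is (1). Finally, plugging $f(F(\xb)) = f(\xb)$ into the identity $g(\xb) = P(\xb)f(F(\xb))$ gives $g(\xb) = P(\xb)f(\xb)$, which is (2).

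The main obstacle I anticipate is the careful handling of loop termination and the guard $h$: one must ensure that ``$Q \in I_{\mL(\ab,h,F)}$'' genuinely forces the polynomial identities I want, which requires knowing the loop runs at least one (resp.\ two) steps for a Zariski-dense set of $\ab$ — straightforward when $h \not\equiv 0$ via the density argument, and trivial when there is no guard, but it needs to be stated cleanly. The rest is elementary ideal manipulation and the Zariski-density principle for polynomial identities.
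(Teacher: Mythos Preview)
Your ``if'' direction is fine and matches the paper. The ``only if'' direction, however, rests on a misunderstanding and then acquires a real gap.

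\textbf{The misunderstanding.} You assert that ``evaluating the invariant relation at the iterate $l=0$ is not available.'' But a loop invariant, by the paper's definition (see the introduction: ``properties that hold both before and after each iteration''), \emph{does} vanish at the initial value $\ab_0 = \ab$. The paper's own proof of this lemma uses exactly that: from $P(\ab)f(\xb)-g(\ab)\in I_{\mL(\ab,h,F)}$ one evaluates at $\xb=\ab$ to get $P(\ab)f(\ab)=g(\ab)$ for all $\ab$, hence $g = Pf$ as polynomials. After that, $P(\ab)f(\xb)-g(\ab)=P(\ab)(f(\xb)-f(\ab))$, so $f(\xb)-f(\ab)\in I_{\mL(\ab,h,F)}$ whenever $P(\ab)\neq 0$; the paper then extends across $\V(P)$ by a continuity/limit argument (pick $\ab_n\to\ab$ with $\ab_n\notin\V(P)$ and pass to the limit in the polynomial identity $h(\ab_n)\cdots h(F^{N-1}(\ab_n))\bigl(f(F^N(\ab_n))-f(\ab_n)\bigr)=0$).

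\textbf{The resulting gap.} Your workaround avoids $l=0$ and instead tries to deduce $f(\xb)=f(F(\xb))$ from $f(\ab_1)=f(\ab_2)$ for generic $\ab$, arguing that ``$\ab_1$ ranges over a dense set.'' But $\ab_1 = F(\ab)$ ranges only over $F(\C^n)$, and a polynomial map need not have Zariski-dense image (take $F$ constant, or any $F$ landing in a proper subvariety). So you only obtain $f\circ F = f\circ F^2$, not $f = f\circ F$, and the argument does not close. Similarly your intermediate identity $g(\xb)=P(\xb)f(F(\xb))$ is correct, but combined only with $f\circ F = f\circ F^2$ it does not yield $g = Pf$.

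Once you use $l=0$, either the paper's continuity argument or your own density approach (now comparing $l=0$ with $l=1$ on the dense set $\{P\neq 0, h\neq 0\}$ to get $f\circ F = f$ directly) goes through cleanly.
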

\begin{proof}
    
    Assume that $P(\ab)f(\xb)-g(\ab)\in I_{\mL(\ab, h, F)}$ for any $\ab\in \Cn$. It follows that $P(\ab)f(\ab)-g(\ab)=0$ for any $\ab\in \Cn$, that is 
    $g(\xb)=P(\xb)f(\xb).$
    Therefore, $f(\xb)-f(\ab)\in I_{\mL(\ab, h, F)}$ for every initial value $\ab\notin \V(P)$. Choose $\ab\in \V(P)$. Since $\V(P)$ is a hypersurface, there exists a sequence $\{\ab_n\}_{n\in \N}$ such that $\ab_n\notin \V(P)$ for any $n\in \N$ and $\displaystyle\lim_{n\rightarrow \infty}\ab_n= \ab.$  
    Given that $f(\xb)-f(\ab_n)$ is a polynomial invariant, for any natural number $N$, we have $$h(\ab_n)\cdot\ldots\cdot h(F^{N-1}(\ab_n))(f(F^{N}(\ab_n))-f(\ab_n))=0.$$ Since $F,h$ and $f$ are polynomial maps, $F, h$ and $f$ are continuous. Hence,
    \begin{align*}
    &h(\ab)\cdot\ldots\cdot h(F^{N-1}(\ab))\Big(f(F^{N}(\ab))-f(\ab)\Big)\\
    =&\displaystyle\lim_{n\rightarrow \infty}h(\ab_n)\cdot\ldots\cdot h(F^{N-1}(\ab_n))\cdot (f(F^{N}(\ab_n))-f(\ab_n))=0,
    \end{align*}
which implies that $f(\xb)-f(\ab)$ is a polynomial invariant for every $\ab\in \V(P)$. Therefore, $f(\xb)-f(\ab)\in I_{\mL(\ab, h, F)}$ for any $\ab\in \Cn$. The converse is trivial.
\end{proof}    

Therefore, polynomial invariants of the form $f(\mathbf{x}) - f(\mathbf{a})$ encompass those of the form $P(\mathbf{a}) f(\mathbf{x}) - g(\mathbf{a})$, where the constant coefficient is a rational function of the initial value. 
Before generating all polynomial invariants of this form, we first establish a necessary and sufficient condition for identifying them. Then, we use this condition to generate all such polynomial invariants, for any polynomial $f$ within a prescribed finite-dimensional vector space. 

\begin{proposition}\label{general}
    Let $\mathcal{L}(\mathbf{a}, h, F)$ be a polynomial loop, and  $f(\mathbf{x}) \in \mathbb{C}[\mathbf{x}]$. Let $y$ and $z$ be new indeterminates, and $F_1(\mathbf{x}, y, z) = (F(\mathbf{x}), y, z h(\mathbf{x}))$. Define 
$$ X = \V(z(f(\mathbf{x}) - y)) \subset \mathbb{C}^{n+2}. $$
Then, $f(\mathbf{x}) - f(\mathbf{a}) \in I_{\mathcal{L}(\mathbf{a}, h, F)}$ for any $\mathbf{a} \in \mathbb{C}^n$ if and only if $S_{(F_1, X)} = X$.
\end{proposition}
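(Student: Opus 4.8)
The plan is to unwind the definition of $f(\xb)-f(\ab)$ being a polynomial invariant for \emph{every} initial value, and to match it directly against the condition $S_{(F_1,X)}=X$, using the trick already seen in Proposition~\ref{prop:verify}: the extra variable $y$ records $f(\ab)$ and stays fixed under $F_1$, while the extra variable $z$ records the running product $h(\ab_0)\cdots h(\ab_{l-1})$ of guard values. First I would set $\ab_0=\ab$, $\ab_l=F(\ab_{l-1})$, and compute by induction that
\[
F_1^l(\xb,y,z)=\big(F^l(\xb),\,y,\,z\cdot h(\xb)h(F(\xb))\cdots h(F^{l-1}(\xb))\big),
\]
so that for a point of the form $(\ab,f(\ab),1)$ we get
\[
(z(f(\xb)-y))\circ F_1^l(\ab,f(\ab),1)=h(\ab_0)\cdots h(\ab_{l-1})\big(f(\ab_l)-f(\ab)\big).
\]
This is the branching-free analogue of \eqref{eqn:zgF}, and it is the computational heart of the argument.

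Next I would prove the two implications. For the ``only if'' direction, assume $f(\xb)-f(\ab)\in I_{\mathcal{L}(\ab,h,F)}$ for all $\ab\in\Cn$. A general point of $X$ has the form $(\xb_0,y_0,z_0)$ with $z_0(f(\xb_0)-y_0)=0$; I must show $F_1(\xb_0,y_0,z_0)\in X$, and then by Lemma~\ref{lem:largeset} (or directly) conclude $X\subseteq S_{(F_1,X)}$, hence equality since $S_{(F_1,X)}\subseteq X$ always. If $z_0=0$, then $F_1(\xb_0,y_0,z_0)=(F(\xb_0),y_0,0)\in X$ trivially. If $z_0\neq 0$, then $y_0=f(\xb_0)$, and I must show $z_0 h(\xb_0)\big(f(F(\xb_0))-f(\xb_0)\big)=0$; applying the hypothesis with initial value $\ab=\xb_0$ and $l=1$ gives exactly $h(\xb_0)\big(f(F(\xb_0))-f(\xb_0)\big)=0$ when the loop runs at least one step, and $h(\xb_0)=0$ otherwise — either way the product vanishes, so $F_1(\xb_0,y_0,z_0)\in X$.

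For the ``if'' direction, assume $S_{(F_1,X)}=X$. Fix $\ab\in\Cn$; since $(\ab,f(\ab),1)\in X=S_{(F_1,X)}$, we have $F_1^l(\ab,f(\ab),1)\in X$ for all $l\geq 0$, i.e.\ $h(\ab_0)\cdots h(\ab_{l-1})\big(f(\ab_l)-f(\ab)\big)=0$ for all $l$. Letting $k\in\N\cup\{+\infty\}$ be the number of iterations after which $\mathcal{L}(\ab,h,F)$ terminates (so $h(\ab_j)\neq 0$ for $j<k$), this forces $f(\ab_l)=f(\ab)$ for all $l\leq k$, which is precisely the statement that $f(\xb)-f(\ab)$ is a polynomial invariant of $\mathcal{L}(\ab,h,F)$; as $\ab$ was arbitrary, we are done. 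The main obstacle — and the only place requiring care — is the ``only if'' direction: one must handle an \emph{arbitrary} point of $X$, not just points $(\ab,f(\ab),1)$, so the case split on whether $z_0=0$ (respectively whether the loop from $\xb_0$ performs a step) is essential, and it is exactly this case analysis that makes the clean equality $S_{(F_1,X)}=X$ possible rather than merely an inclusion restricted to a slice.
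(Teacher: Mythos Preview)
Your proof is correct and follows essentially the same approach as the paper: the same case split on $X$ (whether the $z$-coordinate vanishes or $y=f(\xb)$) for the forward direction, and the same use of the point $(\ab,f(\ab),1)\in S_{(F_1,X)}$ for the converse. The only cosmetic differences are that you reduce the forward direction to the one-step inclusion $F_1(X)\subset X$ via Lemma~\ref{lem:largeset} (the paper checks all iterates directly), and you unwind the backward direction by hand where the paper cites Proposition~\ref{prop3.4}; your treatment of the guard in the forward direction is in fact slightly more careful than the paper's.
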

\begin{proof} 
First, assume that $f(\mathbf{x}) - f(\mathbf{a}) \in I_{\mathcal{L}(\mathbf{a}, h, F)}$ for any initial value $\mathbf{a} \in \mathbb{C}^n$. Let $(\mathbf{a}, b, c) \in X$. By definition, one of the following holds:
$$ c = 0 \quad \text{or} \quad b = f(\mathbf{a}). $$
If $c = 0$, then $F_1^m(\mathbf{a}, b, 0) = (F^m(\mathbf{a}), b, 0)$, so $F_1^m(\mathbf{a}, b, 0) \in X$ and $(\mathbf{a}, b, c) \in F^{-m}(X)$ for any $m \in \mathbb{N}$.
If $b = f(\mathbf{a})$, then $f(F^m(\mathbf{a})) = f(\mathbf{a}) = b,$ because $f(\mathbf{x}) - f(\mathbf{a})$ is a polynomial invariant. Thus, $F_1^m(\mathbf{a}, f(\mathbf{a}), c) \in X$, implying that $(\mathbf{a}, b, c) \in F^{-m}(X)$ for any $m \in \mathbb{N}$.
Since
\[
S_{(F_1, X)} = \bigcap_{i=0}^N F_1^{-i}(X) \text{ for some } N \in \mathbb{N}, 
\]
we conclude that $X = \V(z(f(\mathbf{x}) - y)) \subset S_{(F_1, X)}$.
The reverse inclusion holds by definition, so we have $S_{(F_1, X)} = X$.

Conversely, assume that $X = S_{(F_1, X)}$. Given that $(\mathbf{a}, f(\mathbf{a}), 1) \in X$, it follows that $(\mathbf{a}, f(\mathbf{a}), 1) \in S_{(F_1, X)}$. Thus, by Proposition~\ref{prop3.4}, $f(\mathbf{x}) - f(\mathbf{a}) \in I_{\mathcal{L}(\mathbf{a}, h, F)}$ for any initial value $\mathbf{a} \in \mathbb{C}^n$.
\end{proof}

We now express the previous geometric condition as an algebraic one.
\begin{corollary}\label{generalcor}
    Let $\mL(\ab,h,F)$ be a polynomial loop with $h\neq 0$ and  $f\in \C[\xb]$. Then, $f(\xb)-f(\ab)\in I_{\mathcal{L}(\ab,h,F)}$ for any $\ab\in \Cn$ if and only if $f(F(\xb))=f(\xb)$.
\end{corollary}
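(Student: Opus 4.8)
The plan is to apply Proposition~\ref{general} and translate the geometric equality $S_{(F_1,X)} = X$ into the polynomial identity $f(F(\xb)) = f(\xb)$, using the effective description of invariant sets from Proposition~\ref{prop:stabilization} together with Lemma~\ref{lem:equtionspreimage}.

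\textbf{Setup.} With notation as in Proposition~\ref{general}, set $F_1(\xb,y,z) = (F(\xb), y, zh(\xb))$ and $X = \V(z(f(\xb)-y)) \subset \C^{n+2}$. By Proposition~\ref{general}, the claim $f(\xb)-f(\ab) \in I_{\mL(\ab,h,F)}$ for all $\ab$ is equivalent to $S_{(F_1,X)} = X$. By Proposition~\ref{prop:stabilization}.$(a)$ we always have $S_{(F_1,X)} \subseteq X_1 = X \cap F_1^{-1}(X) \subseteq X$, and by parts $(c)$ and $(d)$, $S_{(F_1,X)} = X$ holds if and only if $X_1 = X$, i.e. $X \subseteq F_1^{-1}(X)$.

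\textbf{Main step.} Now compute $F_1^{-1}(X)$ via Lemma~\ref{lem:equtionspreimage}: since $X = \V(G)$ with $G = z(f(\xb)-y)$, we get $F_1^{-1}(X) = \V\big(G \circ F_1\big) = \V\big(zh(\xb)\,(f(F(\xb)) - y)\big)$. So the condition $X \subseteq F_1^{-1}(X)$ reads: the polynomial $zh(\xb)(f(F(\xb))-y)$ vanishes on $\V(z(f(\xb)-y))$. The hypersurface $\V(z(f(\xb)-y))$ has two irreducible-type components (after removing repeated factors), $\V(z)$ and $\V(f(\xb)-y)$, so $zh(\xb)(f(F(\xb))-y)$ vanishes on it iff it vanishes on each. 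It vanishes on $\V(z)$ trivially because of the factor $z$. On $\V(f(\xb)-y)$, substituting $y = f(\xb)$ gives $zh(\xb)(f(F(\xb)) - f(\xb))$, which must be identically zero in $\C[\xb,z]$. Since $h \neq 0$ and $\C[\xb,z]$ is a domain, this forces $f(F(\xb)) - f(\xb) = 0$, i.e. $f(F(\xb)) = f(\xb)$. Conversely, if $f(F(\xb)) = f(\xb)$, then $zh(\xb)(f(F(\xb))-y) = zh(\xb)(f(\xb)-y)$ is a multiple of $z(f(\xb)-y)$, hence vanishes on $X$, giving $X \subseteq F_1^{-1}(X)$, hence $X_1 = X$ and $S_{(F_1,X)} = X$.

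\textbf{Anticipated obstacle.} The only delicate point is the step "$zh(\xb)(f(F(\xb))-y)$ vanishes on $\V(z(f(\xb)-y))$ iff it is divisible by $z(f(\xb)-y)$ up to what is needed", i.e. correctly handling the radical/Nullstellensatz bookkeeping on the reducible hypersurface. This is cleanly dispatched by noting $z(f(\xb)-y)$ is squarefree (it is a product of the distinct irreducibles $z$ and the irreducible factors of $f(\xb)-y$, none equal to $z$ since $f(\xb)-y$ involves $\xb,y$ but the $z$-degree is $0$... more precisely $f(\xb)-y$ has no factor of $z$), so its ideal is radical and a polynomial vanishes on $\V(z(f(\xb)-y))$ iff it lies in $\langle z(f(\xb)-y)\rangle$; alternatively one argues directly by evaluating on each component as sketched above, which avoids invoking squarefreeness machinery. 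Either way the argument is short; I would present the direct component-wise evaluation to keep it self-contained.
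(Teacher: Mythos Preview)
Your proof is correct and shares the same overall architecture as the paper's: both reduce, via Proposition~\ref{general} and Proposition~\ref{prop:stabilization}, to the equivalence $S_{(F_1,X)}=X \iff X_1=X \iff X\subseteq F_1^{-1}(X)$, and both compute $F_1^{-1}(X)=\V\big(zh(\xb)(f(F(\xb))-y)\big)$ via Lemma~\ref{lem:equtionspreimage}. The difference lies in how the polynomial identity is extracted from the containment $\V(z(f(\xb)-y))\subseteq\V(zh(\xb)(f(F(\xb))-y))$. The paper invokes Hilbert's Nullstellensatz to obtain $(zh)^n(f(F(\xb))-y)^n\in\langle z(f(\xb)-y)\rangle$, then uses the binomial expansion $(f(F(\xb))-y)^n=\big[(f(\xb)-y)+(f(F(\xb))-f(\xb))\big]^n$ to isolate $(zh)^n(f(F(\xb))-f(\xb))^n$ in the ideal, and finally argues that a $y$-free element of $\langle z(f(\xb)-y)\rangle$ must be zero. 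Your route is more direct: you split the hypersurface as $\V(z)\cup\V(f(\xb)-y)$, observe that the factor $z$ handles the first component, and on the second substitute $y=f(\xb)$ (legitimate since $\V(f(\xb)-y)$ is exactly the graph $\{(\ab,f(\ab),c)\}$) to get $zh(\xb)(f(F(\xb))-f(\xb))=0$ in the integral domain $\C[\xb,z]$. This avoids the Nullstellensatz and the binomial trick entirely, making the argument shorter and more elementary; the paper's approach, by contrast, works uniformly through ideal membership without needing to parametrize the component $\V(f(\xb)-y)$.
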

\begin{proof}
Assume that $f(\xb) - f(\ab) \in I_{\mathcal{L}(\ab, h, F)}$ for any $\ab \in \C^n$. By Proposition~\ref{general}, and using the same notations, we have that $X = S_{(F_1, X)}$.

Let $X_1 = X \cap F_1^{-1}(X)$. By Proposition~\ref{prop:stabilization}.(a), we know that $X \subseteq X_1 \subseteq X$, so it follows that $X = X_1$. 
By the definition of $X_1$, we have:
$$ X_1 = V(z(f(\xb) - y), \; zh(\xb)(f(F(\xb)) - y)). $$
Thus, we can conclude that $V(z(f(\xb) - y)) \subseteq V(zh(\xb)(f(F(\xb)) - y))$. By Hilbert's Nullstellensatz, there exists some $n \in \mathbb{N}$ such that:
$$ z^n h(\xb)^n (f(F(\xb)) - y)^n \in \langle z(f(\xb) - y) \rangle. $$
Note that we can expand the terms as follows:
\begin{align*}
    (zh)^n (f(F(\xb)) - y)^n 
    &= (zh)^n \left[f(\xb) - y + (f(F(\xb)) - f(\xb))\right]^n \\
    &= (zh)^n \left[ A(\xb, y)(f(\xb) - y) + \left(f(\xb) - f(F(\xb))\right)^n \right],
\end{align*}
where $A(\xb, y) \in \C[\xb, y]$. Therefore, it follows that:
$$ (zh)^n (f(\xb) - f(F(\xb)))^n \in \langle z(f(\xb) - y) \rangle. $$
Since $h \neq 0$ and there is no $y$-variable in $(zh)^n (f(\xb) - f(F(\xb)))^n$, we conclude that $f(\xb) = f(F(\xb))$. 
Conversely, if $f(\xb) = f(F(\xb))$, then:
$$ X_1 = V(z(f(\xb) - y), zh(f(F(\xb)) - y)) = V(z(f(\xb) - y)) = X. $$
By Proposition~\ref{prop:stabilization}, we have $X = S_{(F_1, X)}$, and using Proposition~\ref{general}, it follows that $f(\xb) - f(\ab) \in I_{\mathcal{L}(\ab, h, F)}$ for any $\ab \in \C^n$.
\end{proof}

In the following, we extend Corollary~\ref{generalcor} to the case of branching loops.
\begin{corollary}\label{cor:generalbranch}
     Let $F=(F_1,\ldots,F_k)$ and $\mL(\ab,h,F)$ be a branching loop with $h\neq 0$ and  $f(\xb)\in \C[\xb]$. Then, $f(\xb)-f(\ab)\in I_{\mathcal{L}(\ab,h,F)}$ for any $\ab\in \Cn$ if and only if $f(F_i(\xb))=f(\xb)$ for any $i\in \{1,\ldots, k\}$.
\end{corollary}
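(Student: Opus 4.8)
The plan is to reduce the branching case to the single-map situation handled by Corollary~\ref{generalcor} and Proposition~\ref{general}, using the machinery for branching invariant sets developed in Proposition~\ref{prop: extendinv} and Proposition~\ref{prop:branchverify}. First I would set up the same auxiliary object as in Proposition~\ref{general}: introduce new indeterminates $y$ and $z$, put $F_{i,1}(\xb,y,z) = (F_i(\xb), y, zh(\xb))$ for each $i \in [k]$, write $\Fb_1 = (F_{1,1},\ldots,F_{k,1})$, and let $X = \V(z(f(\xb)-y)) \subset \C^{n+2}$. The branching analogue of Proposition~\ref{general} is that $f(\xb)-f(\ab) \in I_{\mL(\ab,h,F)}$ for every $\ab \in \Cn$ if and only if $S_{(\Fb_1,X)} = X$; this follows by exactly the argument of Proposition~\ref{general}, replacing the single orbit $F^m(\ab)$ by the branching orbits $F_{i_1,\ldots,i_m}(\ab)$ and invoking Proposition~\ref{prop:branchverify} (with $g(\xb,y)$ playing the role there, i.e.\ $E$ spanned by $f$ and the constant $1$) in place of Proposition~\ref{prop:verify}, together with the stabilization statement $S_{(\Fb_1,X)} = \bigcap_{i=0}^N X_i$ from Proposition~\ref{prop: extendinv}.

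Next I would convert the geometric equality $S_{(\Fb_1,X)} = X$ into the algebraic condition. By Proposition~\ref{prop: extendinv}, $S_{(\Fb_1,X)} = X$ holds if and only if $X = X_1$, where $X_1 = X \cap \bigcap_{i=1}^k F_{i,1}^{-1}(X)$; indeed Proposition~\ref{prop: extendinv}.$(a)$ (the monotonicity of the chain $X_0 \supseteq X_1 \supseteq \cdots$) forces all $X_i$ to coincide with $X$ once $X = X_1$. By Lemma~\ref{lem:equtionspreimage},
\[
X_1 = \V\Big(z(f(\xb)-y),\ zh(\xb)(f(F_1(\xb))-y),\ \ldots,\ zh(\xb)(f(F_k(\xb))-y)\Big).
\]
So $X = X_1$ means $\V(z(f(\xb)-y)) \subseteq \V\big(zh(\xb)(f(F_i(\xb))-y)\big)$ for each $i$. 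By Hilbert's Nullstellensatz, for each $i$ there is $n_i \in \N$ with $z^{n_i}h(\xb)^{n_i}(f(F_i(\xb))-y)^{n_i} \in \langle z(f(\xb)-y)\rangle$, and then the same binomial expansion as in the proof of Corollary~\ref{generalcor}—writing $f(F_i(\xb))-y = (f(\xb)-y) + (f(F_i(\xb))-f(\xb))$ and extracting the $y$-free remainder—yields $(zh)^{n_i}(f(\xb)-f(F_i(\xb)))^{n_i} \in \langle z(f(\xb)-y)\rangle$; since $h \neq 0$ and the left side has no $y$, it must vanish, giving $f(F_i(\xb)) = f(\xb)$. Conversely, if $f(F_i(\xb)) = f(\xb)$ for all $i$, then each generator $zh(\xb)(f(F_i(\xb))-y)$ is a multiple of $z(f(\xb)-y)$, so $X_1 = X$, hence $S_{(\Fb_1,X)} = X$, and the branching analogue of Proposition~\ref{general} finishes the argument.

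I do not anticipate a serious obstacle here: the whole corollary is a routine lift of Corollary~\ref{generalcor}, and every ingredient (the branching invariant-set stabilization, the branching verification criterion, the preimage formula, the Nullstellensatz/binomial manipulation) is already in place. The one point deserving a line of care is the branching version of Proposition~\ref{general}, which is stated for a single map; I would either spell out its short proof explicitly in the branching setting, noting that the forward direction uses that $f$ is constant along \emph{every} branch orbit when $b = f(\ab)$, while the $c=0$ case is unchanged, or simply remark that it follows verbatim from Proposition~\ref{general} together with Proposition~\ref{prop:branchverify} and Proposition~\ref{prop: extendinv}. Either way, the proof is essentially "repeat Corollary~\ref{generalcor} with $F_i$ in place of $F$, for each $i$."
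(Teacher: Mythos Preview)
Your proposal is correct, but the paper takes a shorter route for the forward implication. Rather than establishing a branching analogue of Proposition~\ref{general} and rerunning the Nullstellensatz/binomial argument for each $F_i$, the paper simply observes that if $f(\xb)-f(\ab)\in I_{\mL(\ab,h,(F_1,\dotsc,F_k))}$ for all $\ab$, then in particular $f(\xb)-f(\ab)\in I_{\mL(\ab,h,F_i)}$ for every single-branch loop and every $\ab$ (since each single-branch orbit is a branching orbit), and then applies the already-proven Corollary~\ref{generalcor} to each $F_i$ directly. This sidesteps both the branching version of Proposition~\ref{general} and the repeated Nullstellensatz step. For the converse the two arguments coincide: you both verify $X_1=X$ from $f\circ F_i=f$, conclude $S_{(\Fb_1,X)}=X$, and finish via membership of $(\ab,f(\ab),1)$ (the paper cites Proposition~\ref{prop:branching} for this last step). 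Your approach buys self-containment at the cost of length; the paper's buys brevity by leaning on the single-map corollary.
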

\begin{proof}
   Assume that $f(\xb)-f(\ab)\in I_{\mL(\ab, h,F)}$ for any $\ab \in \Cn$, and in particular, that $f(\xb)-f(\ab)\in I_{\mL(\ab, h,F_i)}$ for all $i \in \{1, \dots, k\}$. By Corollary~\ref{general}, we have $f(F_i(\xb)) = f(\xb)$ for all $i \in \{1, \dots, k\}$.

Conversely, assume that $f(F_i(\xb)) = f(\xb)$ for all $i$. Define $X_0 = \V(z(f(\xb)-y))$, and for $i \geq 1$, define $F_{i,1}(\xb,y,z)=(F_i(\xb),y, zh(x))$ and let
$$ X_{i+1} = X_i \cap F_{1,1}^{-1}(X_i) \cap \cdots \cap F_{k,1}^{-1}(X_i).$$
Since $z(f(\xb)-y) = z(f(F_i(\xb))-y)$ for all $i \in \{1, \dots, k\}$, we obtain
$$ X_1 = \V\Big(z(f(\xb)-y), z(f(F_1(\xb))-y), \dots, z(f(F_k(\xb))-y)\Big) = X. $$
Therefore, $S_{((F_{1,1}, \dots, F_{k,1}), X)} = \V(z(f(\xb)-y))$, which contains $(\ab, f(\ab), 1)$. By Proposition~\ref{prop:branching}, we conclude that $f(\xb) - f(\ab) \in I_{\mL(\ab, h, F)}$ for all $\ab \in \Cn$.
\end{proof}

To apply the previous corollary, we design the following algorithm to generate a basis for the vector space of all polynomial invariants of the form $f(\xb) - f(\ab)$, where $f$ belongs to a prescribed vector space (e.g., polynomials of bounded degree).
We use the procedure $\coefficients$, which takes as input a polynomial $P \in \C[\xb, \yb]$ and outputs the list of coefficients of $P$, viewed as an element of $\C[\yb][\xb]$, for some arbitrary monomial ordering. For example:
$$ \coefficients\left((y_1 - y_2)x_1^2 + y_1 x_1 x_2\right) = (y_1 - y_2, y_1). $$
The procedure $l_i$ returns the $i$th coordinate of an element in $\Q^n$.

\begin{algorithm}[H]
\caption{
\textsf{SpecificForm}}\label{alg3}
\begin{algorithmic}[1]\setstretch{1.25}
\Require  Sequences of polynomials 
$F_1,\ldots, F_k$ in $\mathbb{Q}[\xb]$, where $F_i=(F_{i,1},\dotsc,F_{i,n})$ and $\gb = (g_1,\ldots, g_m)$ spanning a vector space $E\subset \C[\xb]$.
\Ensure 
A vector space basis for all polynomial invariants of the form $f(\xb)-f(\ab)$, where $f \in E$ and $\ab \in \C^n$, for any loop $\mL(\ab,\hb,(F_1,\ldots, F_k))$ and $\hb\subset \C[\xb]$ has only non-zero entries.
\State$ g\gets y_1g_1(\xb)+\cdots+y_mg_m(\xb);$
\State\label{alg6:2}$(D_1,\ldots, D_k) \gets (g(\xb,\yb)-g(F_1(\xb),\yb),\ldots, g(\xb,\yb)-g(F_k(\xb),\yb));$
\State\label{step4} $(L_1(\yb),\ldots,L_{r}(\yb))\gets \textsf{concatenate}(\coefficients(D_1),\ldots, \coefficients(D_k));$
\State\label{step5} $(c_1,\ldots,c_s)\gets \VectorBasis(L_1(\yb),\ldots,L_{r}(\yb));$
\State\label{alg6:5} \Return {\small $(\displaystyle\sum_{i=1}^m l_i(c_1)g_i(\xb)-\displaystyle\sum_{i=1}^m l_i(c_1)g_i(\ab),\ldots,\displaystyle\sum_{i=1}^m l_i(c_s)g_i(\xb)-\displaystyle\sum_{i=1}^m l_i(c_{s})g_i(\ab));$}
\end{algorithmic}
\end{algorithm}

\begin{theorem}\label{generalinvariants}
  Let $\gb = (g_1, \ldots, g_m)$ and $F_i = (F_{i,1}, \dotsc, F_{i,n})$ be sequences of polynomials in $\Q[\xb]$, for $1 \leq i \leq k$. Let $E$ denote the vector space spanned by $\gb$. Given $\gb$ and $F_1, \ldots, F_k$ as input, Algorithm~\ref{alg3} outputs a sequence of polynomials $\mathbf{\Pb}$ in $\Q[\xb, \ab]$ such that, for any $\ab \in \Cn$ and any $\hb \subset \Q[\xb]$ with all entries non-zero, the following hold.  
\begin{itemize}

\item The polynomials in $\mathbf{\Pb}$ form a basis for the vector space of all polynomial invariants of $\mathcal{L}(\ab, \hb, F)$ of the form $f(\xb) - f(\ab)$, where $f \in E$.

\item Furthermore, if $\dg$ and $\dF[1], \dotsc, \dF[k]$ are bounds on the degrees of $\gb$ and $F_1, \dotsc, F_k$ respectively, the algorithm performs at most  
\[
O\left(\sum_{i=1}^k \dg^n \dF[i]^n \left(\frac{\dg^{2n} \dF[i]^n}{n^{2n}} + m^2\right)\right) = k \cdot m^2 (\dg \dF)^{O(n)}
\]  
operations in $\Q$, where $\dF = \max\{\dF[1], \dotsc, \dF[k]\}$.
\end{itemize}
\end{theorem}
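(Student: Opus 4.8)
The plan is to prove correctness first and then derive the complexity bound, both by unwinding the steps of Algorithm~\ref{alg3} and invoking Corollary~\ref{cor:generalbranch}. For correctness, I would argue as follows. Write $g(\xb,\yb) = y_1 g_1(\xb) + \cdots + y_m g_m(\xb)$, and fix $\bb = (b_1,\dots,b_m) \in \C^m$; set $f_{\bb}(\xb) = \sum_i b_i g_i(\xb) \in E$. By Corollary~\ref{cor:generalbranch}, the polynomial $f_{\bb}(\xb) - f_{\bb}(\ab)$ is an invariant of $\mathcal{L}(\ab, \hb, (F_1,\dots,F_k))$ for every $\ab \in \Cn$ and every $\hb$ with non-zero entries if and only if $f_{\bb}(F_i(\xb)) = f_{\bb}(\xb)$ for all $i \in \{1,\dots,k\}$, i.e. if and only if $D_i(\xb,\bb) = 0$ in $\C[\xb]$ for all $i$, where $D_i(\xb,\yb) = g(\xb,\yb) - g(F_i(\xb),\yb)$ as in Step~\ref{alg6:2}. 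Since $g$ is linear in $\yb$, so is each $D_i$; hence $D_i(\xb,\yb) = \sum_j (\text{monomial in }\xb) \cdot L_{i,j}(\yb)$ where each $L_{i,j}$ is a linear form in $\yb$, and $D_i(\xb,\bb) \equiv 0$ iff $\bb$ annihilates all these $L_{i,j}$. Concatenating over $i$ (Step~\ref{step4}) gives the full list $L_1(\yb),\dots,L_r(\yb)$, so $f_{\bb}(\xb)-f_{\bb}(\ab)$ is a universal invariant of the desired form iff $\bb \in \bigcap_t \ker L_t$. Step~\ref{step5} computes a basis $c_1,\dots,c_s$ of exactly this common vanishing set, and Step~\ref{alg6:5} outputs the corresponding polynomials $\sum_i l_i(c_j) g_i(\xb) - \sum_i l_i(c_j) g_i(\ab)$. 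It remains to note that the map $\bb \mapsto f_{\bb}$ is linear and — restricted to the relevant quotient — these outputs are linearly independent exactly when the $c_j$ are; since a basis of the solution space maps to a spanning independent set of invariants of this form (invariants $f(\xb)-f(\ab)$ with $f \in E$ are parametrized by such $\bb$, up to the kernel of $\bb\mapsto f_{\bb}$, which does not affect the span of the output), the output is a basis. One should remark that if two distinct $\bb$'s give the same $f_{\bb}$, they give the same output polynomial, so no spurious dependence is introduced; and conversely any universal invariant of the form $f(\xb)-f(\ab)$, $f\in E$, arises from some $\bb$ in the solution space.

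For the complexity bound, I would go through the steps counting arithmetic operations in $\Q$. Forming $g$ is free. In Step~\ref{alg6:2}, computing each $g(F_i(\xb),\yb) = \sum_j y_j g_j(F_i(\xb))$ requires $m$ compositions $g_j \circ F_i$; each $g_j$ has degree $\le \dg$ in $n$ variables (so $O(\dg^n)$ terms), and substituting the $F_{i,\ell}$ (degree $\le \dF[i]$) produces polynomials of degree $\le \dg \dF[i]$ with $O((\dg\dF[i])^n)$ terms. Using naive sparse multivariate arithmetic (as in the cited references, e.g. \cite{HL2013}), each such composition costs $(\dg\dF[i])^{O(n)}$ operations; this dominates the per-$i$ cost of building $D_i$ and extracting its $\yb$-coefficients in Step~\ref{step4}, of which there are $O((\dg\dF[i])^n)$, each a linear form in $m$ variables. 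Thus $r = \sum_i O((\dg\dF[i])^n)$. Step~\ref{step5} is a single linear-algebra computation: a basis for the common kernel of $r$ linear forms in $m$ variables, i.e. Gaussian elimination on an $r \times m$ matrix, costing $O(r m^2)$ operations (or $O(r m \min(r,m))$ more precisely, which is $\le O(rm^2)$). Step~\ref{alg6:5} outputs $s \le m$ polynomials each a combination of $m$ of the $g_i$'s and their evaluations, costing $O(m^2 \cdot \dg^n)$, which is absorbed. Summing, the total is $\sum_{i=1}^k \big( (\dg\dF[i])^{O(n)} + O((\dg\dF[i])^n m^2) \big)$, which matches the stated bound $O\!\left(\sum_{i=1}^k \dg^n\dF[i]^n\big(\tfrac{\dg^{2n}\dF[i]^n}{n^{2n}} + m^2\big)\right) = k\, m^2 (\dg\dF)^{O(n)}$ once one notes $\dg^n\dF[i]^n \cdot \dg^{2n}\dF[i]^n/n^{2n} = (\dg^3 \dF[i]^2 / n^2)^n$ captures the cost of the sparse compositions.

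The main obstacle I anticipate is the correctness bookkeeping around the parametrization map $\bb \mapsto f_{\bb}$: one must be careful that a basis of $\{\bb : \forall t,\ L_t(\bb)=0\}$ yields a \emph{basis} (not merely a spanning set, and not an over-count) of the target vector space of invariants $\{f(\xb)-f(\ab) : f\in E,\ f(\xb)-f(\ab) \text{ universally invariant}\}$. This is essentially the observation that this target space is the image under the linear map $\bb\mapsto \big(f_{\bb}(\xb)-f_{\bb}(\ab)\big)$ of the solution space of the $L_t$, and that the output vectors $c_1,\dots,c_s$ being a basis of the solution space forces their images to span; linear independence of the images then needs the (true, but worth stating) fact that the output polynomials are being recorded as elements of $\Q[\xb,\ab]$, where $g_i(\xb)-g_i(\ab)$ are as independent as the $g_i$ themselves modulo constants — or, more cleanly, one simply declares the output to be a basis of the span and checks that every universal invariant of the prescribed form lies in that span, which is exactly the "iff" just established. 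Everything else — the linearity of $D_i$ in $\yb$, the reduction to a linear system, and the degree/term-count estimates feeding the complexity — is routine.
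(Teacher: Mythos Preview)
Your proposal is correct and follows essentially the same route as the paper: both reduce correctness to Corollary~\ref{cor:generalbranch}, rewrite the condition $f(F_i(\xb))=f(\xb)$ as the linear system $L_1(\bb)=\cdots=L_r(\bb)=0$, and then read off the complexity from the cost of the compositions in Step~\ref{alg6:2} (via \cite{HL2013}) plus Gaussian elimination in Step~\ref{step5} with $r\le\sum_i \dg^n\dF[i]^n$. Your caution about whether a basis of the solution space maps to a genuine basis (rather than merely a spanning set) of $V_{\mathrm{gen}}$ is well placed; the paper itself only argues the two containments and does not explicitly verify linear independence of the output, so your discussion is if anything more careful than the published proof on this point.
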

\begin{proof}
Without loss of generality, by taking the product of all non-zero entries of $\hb$, one can assume that $\hb$ reduces to a single non-zero polynomial $h \in \Q[\xb]$.  
Since $g(\xb, \yb)$ is linear in the variables $\yb$, the polynomials $L_1(\yb), \ldots, L_r(\yb)$ are linear. Let $V_{gen}$ denote the vector space of all polynomial invariants of the form $f(\xb) - f(\ab) = 0$, where $f \in E$.  
Suppose Algorithm~\ref{alg3} is executed on the inputs $\gb$ and $F_1, \ldots, F_k$, and outputs
\[
(P_1(\xb) - P_1(\ab), \ldots, P_s(\xb) - P_s(\ab)).
\]  
Therefore, $P_t = \textstyle\sum_{i=1}^m l_i(c_s) g_i(\xb)$, where $g(\xb, c_i) = g(F_j(\xb), c_i)$ for any $t \leq s$ and $j \leq k$. This implies that $P_i(\xb) = P_i(F_j(\xb))$ for any $j \leq k$ and $i \leq t$. Moreover, $P_1, \ldots, P_s$ lie within the vector space $E$.  
By Corollary~\ref{cor:generalbranch}, it follows that $P_i(\xb) - P_i(\ab) \in I_{\mL(\ab, h, (F_1, \ldots, F_k))}$ for any initial value $\ab \in \Cn$ and any $h \neq 0$. Consequently, the vector space $V_{gen}$ includes the space generated by $(P_1(\xb) - P_1(\ab), \ldots, P_s(\xb) - P_s(\ab))$.  

\smallskip
    To prove the converse, let $f(\xb) - f(\ab) \in V_{gen}$ and let $\bb \in \C^m$ such that  
$f = \sum_{i=1}^m l_i(\bb) g_i(\xb)$.
Since $h \neq 0$, by Corollary~\ref{cor:generalbranch}, we know that $f(F_i(\xb)) = f(\xb)$ for all $i \leq k$. Consequently,  
$L_1(\bb) = \cdots = L_r(\bb) = 0$.
Thus, $\bb$ lies in the kernel of the linear forms $L_1, \ldots, L_r$. As a result, there exist scalars $e_1, \ldots, e_s \in \C$ such that  
$\bb = e_1 c_1 + \cdots + e_s c_s$.
Since $l_1, \ldots, l_m$ are linear functions, it follows that:     
\begin{align*}
f(\xb)-f(\ab) &=\displaystyle\sum_{i=1}^m l_i(\bb)g_i(\xb)-\displaystyle\sum_{i=1}^m l_i(\bb)g_i(\ab)\\
&=\displaystyle\sum_{i=1}^m l_i(e_1c_1+\cdots +e_sc_s)g_i(\xb)-\displaystyle\sum_{i=1}^m l_i(e_1c_1+\cdots +e_sc_s)g_i(\ab)\\
&= \displaystyle\sum_{i=1}^m (e_1l_i(c_1)+\cdots +e_sl_i(c_s))g_i(\xb)-\displaystyle\sum_{i=1}^m (e_1l_i(c_1)+\cdots +e_sl_i(c_s))g_i(\ab)\\
&=e_1(P_1(\xb)-P_1(\ab))+\cdots+e_s(P_s(\xb)-P_s(\ab))
\end{align*}
Hence, $V_{gen}$ is contained within the vector space spanned by  
\[
(P_1(\xb) - P_1(\ab), \ldots, P_s(\xb) - P_s(\ab)),
\]  
proving the first part of the theorem, namely the correctness of Algorithm~\ref{alg3}.
We now analyze its complexity. 

The two computationally intensive steps of Algorithm~\ref{alg3} are Steps~\ref{alg6:2} and~\ref{step5}, which involve multivariate polynomial arithmetic and linear algebra, respectively.
At Step~\ref{alg6:2}, for each $1 \leq i \leq k$, $D_i$ can be computed in at most  
$\textstyle{O\left(\frac{\dg^{3n}\dF[i]^{2n}}{n^{2n}}\right)}$
operations in $\Q$, according to \cite{HL2013}.
Step~\ref{step5} involves solving a linear system with $r$ equations and $m$ unknowns (the $y_i$'s). Using Gaussian elimination, this can be done at a cost of $O(m^2r)$. However, from Step~\ref{step5}, the value of $r$ corresponds to the sum of the supports of the $D_i$'s, which satisfies  
$r \leq \sum_{i=1}^k \dg^n\dF[i]^n$.
Combining the complexity bounds for Steps~\ref{alg6:2} and~\ref{step5},  we obtain the total complexity of Algorithm~\ref{alg3} as claimed.
\end{proof}

\begin{remark}
In Algorithm~\ref{alg3}, we restrict our focus to invariants for loops where no zero polynomials appear in the guard. This condition can be verified with negligible computational cost. If the condition is not satisfied, the loop does not iterate since the statement $0 \neq 0$ is trivially false. In such cases, the entire invariant ideal reduces to the maximal ideal  
$\langle x_1 - a_1, \dotsc, x_n - a_n \rangle$.
\end{remark}

\begin{example}
Consider the loop Fib 1:

\programboxappendix[0.5\linewidth]{
\State$(x_1, x_2,x_3)=(2,1,1)$
\While{true}
\State $\begin{pmatrix}
x_1 \\
x_2 \\
x_3
\end{pmatrix}
\longleftarrow
\begin{pmatrix}
x_2\\
x_3\\
2x_2x_3-x_1
\end{pmatrix}
$
\EndWhile
}

\noindent 
We compute all polynomial invariants of the form $f(x_1, x_2, x_3) - f(a_1, a_2, a_3) = 0$ up to degree 4 using Algorithm~\ref{alg3}. Define  
\[
g = y_1 + y_2x_3 + y_3x_2 + y_4x_1 + \cdots + y_{35}x_1^4.
\]  
At Step~\ref{step4}, 54 linear equations are generated, including examples such as  
\[
y_2 - y_3, \quad y_4 - y_3, \quad 16y_{11}, \quad \text{and} \quad y_{10} - y_8.
\]  
At Step~\ref{step5}, the algorithm computes a vector basis for the solution space of this system of 54 linear equations. From the result, we obtain the  output:  
\[
x_1^2 + x_2^2 + x_3^2 - 2x_1x_2x_3 - \big(a_1^2 + a_2^2 + a_3^2 - 2a_1a_2a_3\big).
\]  
Thus, the above polynomial is the only polynomial invariant of the form $f(x_1, x_2, x_3) - f(a_1, a_2, a_3)$ of degree $\leq 4$, up to scalar multiplication.
\end{example}

\section{
Termination of algebraic and semi-algebraic loops }\label{sec:termination}

In this section, we use invariant sets to establish termination conditions for algebraic and semi-algebraic loops. We begin by presenting a necessary and sufficient condition for the termination of algebraic loops. In contrast, for semi-algebraic loops, we provide only a sufficient condition. 

\begin{proposition}\label{prop:terminateinvariant}
    Let $\ab\in\mathbb{C}^n$, $\gb = (g_1,\ldots, g_m)$  and $F=(f_1,\ldots, f_n)$ be  two sequences of polynomials in $\C[\xb]$. Let $X=\V(g_1,\ldots, g_m)\subset \Cn$.
    Then, the polynomial loop $\mathcal{L}(\ab,\gb=0,F)$ never terminates if and only if $\ab \in S_{(F, X)}$.
\end{proposition}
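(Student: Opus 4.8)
The plan is to unwind the definition of loop non-termination for the guard $[\gb = 0]$ and match it against the membership condition for the invariant set $S_{(F,X)}$, where $X = \V(\gb)$. Recall that the loop $\mathcal{L}(\ab,\gb=0,F)$ executes the body exactly as long as all the $g_i$ vanish at the current point; equivalently, it keeps iterating precisely when the current iterate stays inside $X$. So the loop never terminates if and only if, starting from $\ab$, every iterate $F^m(\ab)$ lies in $X$; in particular this forces $\ab \in X$ itself (the guard must hold before the first iteration).

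First I would make the easy direction explicit: if $\mathcal{L}(\ab,\gb=0,F)$ never terminates, then for all $m \in \N$ the guard is satisfied after $m$ iterations, which says exactly $F^m(\ab) \in X$ for all $m \geq 0$ (including $m=0$, since the loop is entered at all). By Definition~\ref{def:invariantset}, this is precisely the statement $\ab \in S_{(F,X)}$. Conversely, if $\ab \in S_{(F,X)}$, then $F^m(\ab) \in X = \V(\gb)$ for every $m$, so every $g_i$ vanishes at every iterate; hence the guard $[g_1 \neq 0, \ldots, g_m \neq 0]$ — equivalently $[g_1 = 0, \ldots, g_m = 0]$ holding is what keeps the loop running — is never falsified, and the loop runs forever.

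A small technical point worth stating carefully is the correspondence between the guard of the loop as written on page~\pageref{page:alg} (a conjunction of inequations $h_i \neq 0$) and the algebraic guard $[\gb = 0]$ used in this proposition: here the intended reading is that $\mathcal{L}(\ab,\gb=0,F)$ denotes the loop that iterates $F$ while the point remains on the variety $X = \V(\gb)$, i.e.\ while $g_1 = \cdots = g_m = 0$. Once this convention is fixed, the argument above is a direct translation through Definition~\ref{def:invariantset}, and no further ingredients (Lemma~\ref{lem:largeset}, Proposition~\ref{prop:stabilization}, etc.) are strictly needed for the equivalence itself — although Remark~\ref{remark1} and Theorem~\ref{thm:corralgo1} are what make the condition $\ab \in S_{(F,X)}$ effectively checkable.

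I do not anticipate a genuine obstacle here: the proposition is essentially a restatement of the definition of the invariant set in the language of loops. The only place to be slightly careful is the base case $m = 0$ (ensuring the loop is entered, i.e.\ $\ab \in X$) and the handling of the case $k = +\infty$ versus finite termination, so that ``never terminates'' is correctly identified with ``the guard holds after every finite number of iterations.'' I would write the proof as a short two-direction argument citing Definition~\ref{def:invariantset} directly.
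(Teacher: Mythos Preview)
Your proposal is correct and matches the paper's own proof, which is a one-sentence appeal to the definition: the loop never terminates iff $F^{m}(\ab)\in X$ for all $m\geq 0$, which is precisely $\ab\in S_{(F,X)}$. Your additional remarks about the guard convention and the base case $m=0$ are fine elaborations but not required.
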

\begin{proof} 
The statement follows directly from the definition, since the loop $\mathcal{L}(\ab, X, F)$ does not terminate if and only if $F^{(m)}(\ab) \in X$ for all $m \geq 0$. This is equivalent to $\ab \in S_{(F, X)}$.
\end{proof}

\begin{example} 
Consider the loop $\mL(\ab, g,F)$ from Example~\ref{example1}.
As we have seen in Example~\ref{example1}, the output of $\InvariantSet(g,F)$ is $(g,g\circ F)$. Therefore, $\mathcal{L}$ never terminates if and only if $(a_1, a_2)\in \V(g, g\circ F)$.
\end{example}

\begin{definition}\label{def:saloop}
Consider the basic semi-algebraic set $S$ of  $\R^n$ defined by $g_1=\cdots=g_k=0$ and $h_1>0,\ldots,h_s>0$ and a polynomial map $F= (f_1,\ldots,f_n)$, where the $f_i$'s, the $g_j$'s  and the $h_j$'s are polynomials in $\mathbb{R}[\xb]$.
Then a loop of the form:

\programbox[0.75\linewidth]{
\State$(x_1, x_2,\ldots, x_n)=(a_1,a_2,\ldots,a_n)$
\While{$g_1 = \cdots = g_k =0\textbf{ and } h_1>0,\dotsc,h_s>0$}
\State $\begin{pmatrix}
x_1 \\
x_2 \\
\vdots \\
x_n
\end{pmatrix}
\xleftarrow{F}
\begin{pmatrix}
f_1\\
f_2\\
\vdots\\
f_n
\end{pmatrix}
$
\EndWhile
}

\noindent is called a \emph{semi-algebraic loop} on $S$ with respect to~$F$. 
We denote by $\mathcal{S}(\gb, \hb)$ the solution set in $\mathbb{R}^n$ of the polynomial system defined by $\gb$ and $\hb$.
\end{definition}

The following proposition is a direct consequence of the definitions.
\begin{proposition}\label{prop:saloop}
Let $\ab \in \R^n$, and $\gb$ and $\hb$ be as above. 
Let $r_1,\dotsc,r_p$ be polynomial invariants of $\mathcal{L}(\ab,0,F)$. Then
the semi-algebraic loop $\mathcal{L}(\ab,(\gb,\hb),F)$ never terminates if
$
    \V(r_1,\dotsc,r_p) \cap \R^n \subset \mathcal{S}(\gb,\hb).
$
\end{proposition}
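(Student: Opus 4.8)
The plan is to chase the following chain of implications: if $\V(r_1,\dots,r_p)\cap\R^n\subset\mathcal{S}(\gb,\hb)$, then the real trajectory $\{F^m(\ab)\}_{m\ge 0}$ stays inside $\mathcal{S}(\gb,\hb)$, which is precisely the statement that the loop never terminates. First I would set up the notation: let $\ab_0=\ab$ and $\ab_m=F^m(\ab)=F(\ab_{m-1})$ for $m\ge 1$ be the sequence of values of the loop variables. The loop $\mathcal{L}(\ab,(\gb,\hb),F)$ fails to terminate if and only if for every $m\ge 0$ the guard is still satisfied at $\ab_m$, i.e. $g_1(\ab_m)=\cdots=g_k(\ab_m)=0$ and $h_1(\ab_m)>0,\dots,h_s(\ab_m)>0$; equivalently, $\ab_m\in\mathcal{S}(\gb,\hb)$ for all $m\ge 0$.

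The key step is to use the hypothesis that $r_1,\dots,r_p$ are polynomial invariants of $\mathcal{L}(\ab,0,F)$, i.e. of the infinite loop (no guard, or guard $1\ne 0$) with the same map and same initial value. By the definition of polynomial invariant (and since here there is effectively no inequation $h$ beyond the trivial one, so the degenerate ``$h$ vanishes'' alternative never occurs), each $r_j$ satisfies $r_j(\ab_m)=0$ for all $m\ge 0$. Hence $\ab_m\in\V(r_1,\dots,r_p)$ for every $m\ge 0$. Since $\ab\in\R^n$ and $F$ has real coefficients, every $\ab_m$ lies in $\R^n$, so in fact $\ab_m\in\V(r_1,\dots,r_p)\cap\R^n$ for all $m$.

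Now I would simply invoke the hypothesis: $\V(r_1,\dots,r_p)\cap\R^n\subset\mathcal{S}(\gb,\hb)$, so $\ab_m\in\mathcal{S}(\gb,\hb)$ for every $m\ge 0$. By the reformulation above, this means the guard condition $g_1=\cdots=g_k=0$ and $h_1>0,\dots,h_s>0$ holds at every iterate, so the loop never exits the \textbf{while} block, i.e. $\mathcal{L}(\ab,(\gb,\hb),F)$ never terminates. This completes the proof.

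I do not expect any real obstacle here, since the statement is, as the text says, a direct consequence of the definitions; the only point requiring a little care is to be explicit that ``$r_j$ is a polynomial invariant of $\mathcal{L}(\ab,0,F)$'' forces $r_j$ to vanish on the entire forward orbit of $\ab$ (this is immediate from the definition of the invariant ideal for the guardless loop), and that the orbit stays in $\R^n$ so that intersecting $\V(r_1,\dots,r_p)$ with $\R^n$ is legitimate. One could also phrase the argument via Proposition~\ref{prop:terminateinvariant}: the algebraic loop $\mathcal{L}(\ab,(r_1,\dots,r_p)=0,F)$ never terminates because $\ab\in S_{(F,\V(r_1,\dots,r_p))}$ (as the $r_j$ are invariants), hence $\ab_m\in\V(r_1,\dots,r_p)\cap\R^n\subset\mathcal{S}(\gb,\hb)$ for all $m$, giving non-termination of the semi-algebraic loop.
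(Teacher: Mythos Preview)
Your proof is correct and matches the paper's approach exactly: the paper simply states that the proposition is a direct consequence of the definitions and gives no further argument, so your expansion—showing that each iterate $\ab_m$ lies in $\V(r_1,\dotsc,r_p)\cap\R^n$ and hence in $\mathcal{S}(\gb,\hb)$—is precisely the intended reasoning. The alternative phrasing via Proposition~\ref{prop:terminateinvariant} you mention at the end is also fine and equally close to the spirit of the paper.
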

The above inclusion corresponds to the quantified formula:
\[
\forall \xb \in \R^n, r_1(\xb)=\cdots=r_p(\xb)=0 \Rightarrow\left\{\hspace*{-0.2cm}\begin{array}{l}
g_1(\xb) = \cdots = g_k(\xb) =0\\h_1(\xb)>0,\dotsc,h_s(\xb)>0\end{array}\right..
\]
The validity of such a formula can be determined using a quantifier elimination algorithm \cite[Chapter 14]{bpr2006}. Since the formula contains no free variables or alternating quantifiers, it corresponds to the emptiness problem of the solution set of a system of polynomial equations and inequalities. This can be efficiently addressed by specialized algorithms, with the most general version presented in \cite[Theorem 13.24]{bpr2006}. Furthermore, given the specific structure of the formula, a more efficient approach may involve the method outlined in \cite{synthesis2023algebro}, which combines the Real Nullstellensatz \cite{BCR1998} and Putinar's Positivstellensatz \cite{Pu1993}.
We will not explore these aspects further, as they fall outside the scope of this paper and will be addressed in future work. Instead, we focus on showing why the above sufficient criterion is not necessary.
\begin{example}\label{exsaloop}
Consider the elementary semi-algebraic loop:

\programboxappendix[0.45\linewidth]{
\State$(x_1, x_2)=(a_1,a_2)$
\While{$x_1>0$}
\State $\begin{pmatrix}
x_1 \\
x_2
\end{pmatrix}
\longleftarrow
\begin{pmatrix}
2x_1\\
2x_2
\end{pmatrix}
$
\EndWhile
}

A direct analysis of the linear recursive sequence defined by the successive values $\ab^0, \ab^1, \dotsc$ of $(x_1, x_2)$ reveals that the loop never terminates if and only if $a_1 > 0$. Moreover, the polynomial $a_2x_1 - a_1x_2 = 0$ is an invariant of this loop. Since every $\ab^j$, for $j \geq 0$, must lie on this line, it generates the entire invariant ideal. However, $\V(a_2x_1 - a_1x_2) \cap \mathbb{R}^2$ is not contained within $\mathcal{S}(0, x_1)$, as illustrated in Figure~\ref{fig:saloop}.
\end{example}
\vspace{-4mm}
\begin{figure}[H]\centering
\includegraphics[width=0.35\linewidth]{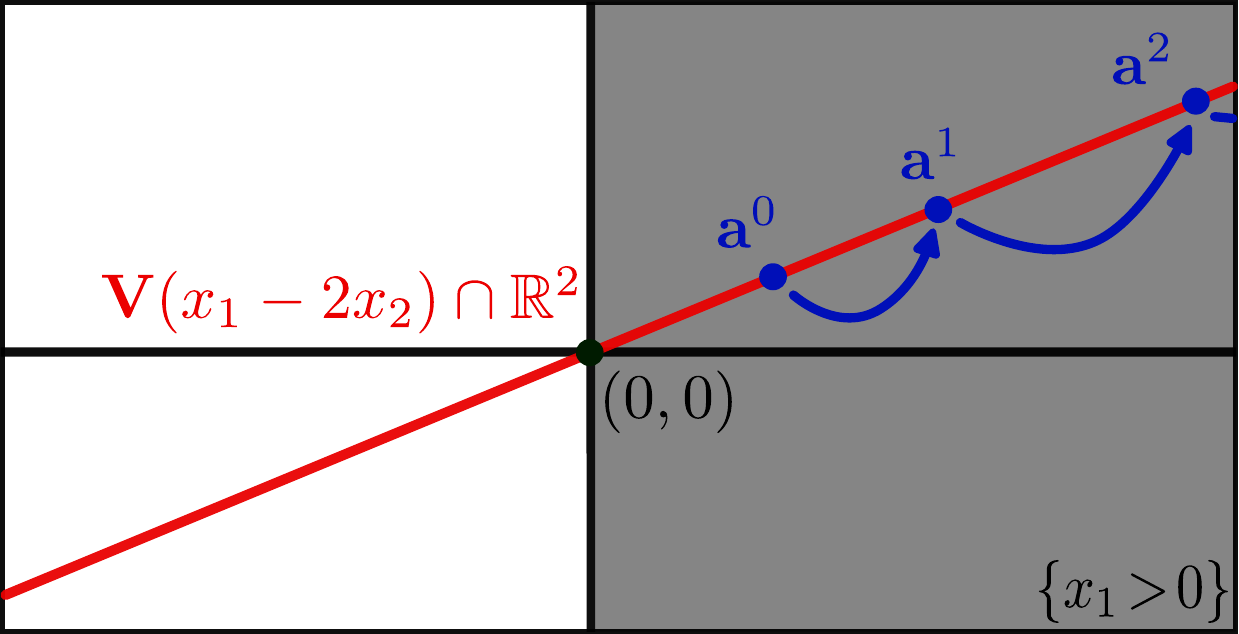}
\caption{\textmd{An illustration of a particular case of Example~\ref{exsaloop} for $(a_1,a_2)=(2,1)$. In blue are depicted the successive values $\ab^0,\ab^1,\dotsc$ of the variables $(x_1,x_2)$, in red is the real zero-set of the invariant ideal, and in gray the set $\mathcal{S}(0,x_1)$ defined by the condition $x_1>0$.}}\label{fig:saloop}
\end{figure}

\vspace{-7mm}
\section{Implementation and Experiments
}\label{sec:implementation}
In this section, we present an implementation of the algorithms discussed in this paper and compare its performance with \textsf{Polar}~\cite{DBLP:journals/pacmpl/MoosbruggerSBK22}, which primarily builds on \cite{Unsolvableloops} for the case of unsolvable loops. In all tables, $n$ represents the number of program variables, and $D$ denotes the degree of the map of a loop. To ensure a fair comparison, and in line with the existing literature, all experiments compute polynomial invariants with a degree bounded by a parameter that we vary. This choice is facilitated by the flexibility of our algorithms in terms of the types of invariants they can compute.
The implementation of Algorithms~\ref{matrixalgo}, \ref{alg2}, and \ref{alg3} in Macaulay2~\cite{M2} is available at: 
\begin{center}\footnotesize
\href{https://github.com/FatemehMohammadi/Algebraic_PolyLoop_Invariants.git}{\texttt{github.com/FatemehMohammadi/Algebraic\_PolyLoop\_Invariants.git}}
\end{center}
\subsection{Implementation details}
The experiments below were performed on a laptop featuring a 4.8 GHz Intel i7 processor, 16 GB of RAM, and a 25 MB L3 cache. The implementation primarily relies on standard linear algebra routines and Gröbner basis computations provided by Macaulay2. This approach ensures that the implementation closely follows the pseudo-code presented in this paper.
We made minor modifications to these algorithms to speed up computations, based on insights gained from experimental observations, detailed below.

We observed that for most polynomial loops, all candidate polynomials in $\mathcal{B}$, computed at Step~\ref{step:B} of Algorithm~\ref{alg2}, are polynomial invariants. 
\rev{Although Algorithm~\CheckPI\ is designed to verify the invariance of a single polynomial, it can naturally be extended to check several candidate invariants $g_1,\dots,g_k$ simultaneously. 
To do so, it suffices to replace $\InvariantSet(zg,F_0)$ with $\InvariantSet((zg_1,\dots,zg_k),F_0)$ in Step~\ref{CheckPIStep3} of Algorithm~\CheckPI. 
This modification accelerates the verification process, since the stabilization in \InvariantSet\ typically occurs earlier.}
Therefore, instead of checking each polynomial individually, we verify all elements of $\mathcal{B}$ at the same time. 
This reduces the dimension of the variety that must be considered and can yield speedups of up to two orders of magnitude (as illustrated in Example~\ref{ex3}).

\subsection{Experimental results}
In Tables~\ref{table 1} and~\ref{table2}, we compare our implementation of Algorithm~\ref{alg3} with the software \textsf{Polar}, which is based on the methods of~\cite{Unsolvableloops, amrollahi2025solvable} for handling unsolvable loops. 
On these benchmarks, \textsf{Polar} only produces invariants of the form $f(\xb)-f(\ab)$, where $\ab$ is the initial value and $f \in \Q[\xb]$. 
In contrast, Algorithm~\ref{alg3} can generate all polynomial invariants of this form whenever they exist. 
\rev{The benchmarks are taken from~\cite{amrollahi2025solvable} and are available at}
\begin{center}\footnotesize
\href{https://github.com/FatemehMohammadi/Algebraic_PolyLoop_Invariants/tree/main/software/loops}{\texttt{github.com/FatemehMohammadi/Algebraic\_PolyLoop\_Invariants/software/loops}}
\end{center}
A key distinction between the two approaches is that our method is global: we compute \emph{all} polynomial invariants up to a specified degree, whereas \textsf{Polar} returns only a (possibly empty) \emph{subset} of them. 
However, \textsf{Polar} can treat probabilistic loops, while our approach is restricted to deterministic ones.

\rev{
To compare our results with those of \textsf{Polar} in a meaningful way, we distinguish between \emph{effective} and \emph{non-effective} (or \emph{defective}) program variables. 
A variable is effective if it does not participate in any cyclic or nonlinear dependency within the loop; in such cases its evolution can be solved explicitly, and classical methods (such as Gr\"obner basis techniques) suffice to compute all polynomial invariants involving only effective variables. 
Invariants depending solely on effective variables are therefore not the focus of \textsf{Polar}'s unsolvable-loop analysis. 
For this reason, we report only polynomial invariants that involve at least one non-effective variable. 
This criterion is applied uniformly to the outputs of both \textsf{Polar} and Algorithm~\ref{alg3}.}

In Table~\ref{table 1}, we present quantitative data comparing Algorithm~\ref{alg3} and \textsf{Polar} across several benchmarks (listed in the rows) for generating polynomial invariants of degrees $1,2,3,\dots,8$.  
The column ``Alg~\ref{alg3}" reports the number of polynomial invariants \rev{involving at least one non-effective variable} generated by Algorithm~\ref{alg3}.  
The column ``\textsf{Polar}" reports the number of polynomial invariants \rev{involving at least one non-effective variable} computed by \textsf{Polar}.

\rev{In particular, even in cases where the results match those of \textsf{Polar} (that is, when the value of $d$ in the \textsf{Polar} column coincides with ours), our method can additionally certify that no further linearly independent polynomial invariants exist, whereas \textsf{Polar} cannot. 
Furthermore, for all benchmarks, \textsf{Polar} reaches the time limit earlier than our method.}

\begin{table}[H]
    \centering
    \scalebox{0.52}{
\begin{tabular}{|*{19}{c|}}
    \hline
        \multicolumn{3}{|c}{Degree} & \multicolumn{2}{|c}{1} & \multicolumn{2}{|c}{2} &  \multicolumn{2}{|c}{3} & \multicolumn{2}{|c|}{4}& \multicolumn{2}{|c|}{5}& \multicolumn{2}{|c|}{6}& \multicolumn{2}{|c|}{7}& \multicolumn{2}{|c|}{8} \\ \hline
        Benchmark & $n$ & $D$ & Alg~\ref{alg3} & \textsf{Polar}& Alg~\ref{alg3} & \textsf{Polar}& Alg~\ref{alg3} & \textsf{Polar}& Alg~\ref{alg3} & \textsf{Polar} & Alg~\ref{alg3} & \textsf{Polar}& Alg~\ref{alg3} & \textsf{Polar}& Alg~\ref{alg3} & \textsf{Polar}& Alg~\ref{alg3} & \textsf{Polar}  \\ \hline
       
         Fib1 &3&2 &0 &0 & 0&0   &1&1&1&1&1&1&2&2&2&2&2&2 \\ \hline
        Fib2 &3&3 & 0&0 & 0  &0 &1&1&1&1&1&1&2&2&2&2&2&TL \\ \hline
        Fib3  &3&2  &0&0  &0&0 &1&1&1&1&1&1&2&2&2&TL&2&TL \\ \hline
        Yagzhev9 &9&3 &0&0   &3 &3 &4 &4  &\textbf{10}&TL&TL&TL&TL&TL&TL&TL&TL&TL \\ \hline
        Yagzhev11 &11&3 &0&0   &0&0  &1&1  &TL&TL&TL&TL&TL&TL&TL&TL&TL&TL \\ \hline
        Ex 9 &3&2 &0&0   &0&0 &1&1 & 1&1& 1&1& \textbf{2}&TL& \textbf{3}&TL& \textbf{3}&TL \\ \hline 
        Ex 10 &3 &2&0&0  &0&0  &0&0&0&TL&0&TL&0&TL&0&TL&0&TL \\ \hline
    \parbox{0.8cm}{\centering Markov\\[-0.4em]Triples\\[-0.4em](Ex~\ref{ex: markov})} &3&2 & 0 &0 & 0&0 &  1&1& 1&1& 1&1& 2&2& 2&2& 2&2\\ \hline
    Nagata &3&5 &0&0 &  1&1  & 2&2 &  4&4&  \textbf{6}&TL&  \textbf{9}&TL&  \textbf{12}&TL&  \textbf{16}&TL \\ \hline
        \parbox{0.8cm}{\centering Squares\\[-0.4em](Ex~\ref{ex3})}  &3&2 & 0 &0 & 0&0 &  0&0& 0&0& 0&0& 0&0&0&0&0 &0\\ \hline
\end{tabular}

}

{\small TL = Timeout (360 seconds); \quad
\textbf{bold}: new invariants found}
\caption{\label{table 1}Data on outputs of Algorithm~\ref{alg3} and \textsf{Polar}
}
\end{table}

In Table~\ref{table2}, we present the execution times for Algorithm~\ref{alg3} and \textsf{Polar} (corresponding to the benchmarks of Table~\ref{table 1}), using a time limit of 360 seconds. 
In cases where \textsf{Polar} hits this limit (for example, degree~4 for \texttt{Yaghzev9}), we verified that it does not terminate within 15 minutes and eventually reaches its maximum recursion depth.
\begin{table}[H]
    \centering\hspace*{-0.2cm}
    \scalebox{0.52}{
    \begin{tabular}{|*{19}{c|}}
    \hline
        \multicolumn{3}{|c}{Degree} & \multicolumn{2}{|c}{1} & \multicolumn{2}{|c}{2} &  \multicolumn{2}{|c}{3} & \multicolumn{2}{|c|}{4}& \multicolumn{2}{|c|}{5}& \multicolumn{2}{|c|}{6}& \multicolumn{2}{|c|}{7}& \multicolumn{2}{|c|}{8} \\ \hline
        Benchmark&$n$ &$D$ & Ours &\textsf{Polar} & Ours & \textsf{Polar} & Ours & \textsf{Polar} & Ours & \textsf{Polar}& Ours & \textsf{Polar}& Ours & \textsf{Polar}& Ours & \textsf{Polar}& Ours & \textsf{Polar} \\ \hline      
        Fib1 &3&2& \textbf{0.03} & 0.2 & \textbf{0.04} & 0.32 & \textbf{0.09} & 0.68 & \textbf{0.18} & 1.58& \textbf{0.35} & 3.5& \textbf{0.61} & 16.6& \textbf{1.42} & 67.5& \textbf{2.88} & 308 \\ \hline
        Fib2&3&3 & \textbf{0.03} & 0.23 & \textbf{0.04} & 0.46 & \textbf{0.07} & 1.18 & \textbf{0.15} & 3.69& \textbf{0.34} & 11.5& \textbf{0.65} & 45.8& \textbf{1.25} & 260& \textbf{2.51} &TL\\ \hline
        Fib3&3&2 & \textbf{0.03} & 0.21 & \textbf{0.05} & 0.4 & \textbf{0.08} & 1.26 & \textbf{0.15} & 4.3& \textbf{0.27} & 31.7& \textbf{0.6} & 107.9& \textbf{1.39} &TL& \textbf{2.89} &TL \\ \hline
       
        Yagzhev9&9&3 & \textbf{0.05} & 0.43 & \textbf{0.36} & 5.2 & \textbf{5.7} & 131.5 & \textbf{143.7} & {TL}& TL & {TL}& TL & {TL}& TL & {TL}& TL & {TL} \\ \hline
        Yagzhev11&11&3 & \textbf{0.1} & 0.45 & \textbf{1.1} & 6.83 & \textbf{19.4} & 359 & {TL}& {TL}& {TL}& {TL}& {TL}& {TL}& {TL}& {TL}& {TL}& {TL}\\ \hline
        Ex 9&3&2& \textbf{0.03} & 0.28 & \textbf{0.06} & 0.64 & \textbf{0.1} & 2.38 & \textbf{0.18} & 11.5& \textbf{0.35} & 172& \textbf{0.77}& TL& \textbf{1.6} & TL& \textbf{4.87} & TL \\ \hline
        Ex 10&3&2 & \textbf{0.02} & 0.39 & \textbf{0.05} & 1.7 & \textbf{0.09} & 14.9 & \textbf{0.2} & TL & \textbf{0.38} & TL & \textbf{
        0.98}& TL & \textbf{2.1} & TL & \textbf{8.5} & TL \\ \hline
        \parbox{0.8cm}{\centering Markov\\[-0.4em]Triples\\[-0.4em](Ex~\ref{ex: markov})}&3&2& \textbf{0.04}  & 0.27& \textbf{0.06} & 0.54 & \textbf{0.13}
        & 1.31 & \textbf{0.26} & 2.82& \textbf{0.51} & 6& \textbf{1.3} & 14.87&
       \textbf{2.6}& 33.84& \textbf{4.55}& 88.1 \\ \hline
         Nagata&3&5& \textbf{0.03} & 0.25 & \textbf{0.04} & 1.7 & \textbf{0.08} & 3.78   & \textbf{0.17} & 26.69& \textbf{0.42} & TL& \textbf{0.8} & TL& \textbf{1.97} & TL& \textbf{3.8} & TL \\ \hline
         \parbox{0.8cm}{\centering Squares\\[-0.4em](Ex~\ref{ex3})}&3&2
        & \textbf{0.03}  & 0.5& \textbf{0.04} & 0.67 & \textbf{0.07}
        & 1.15 & \textbf{0.16} & 2.25& \textbf{0.3} & 5.46& \textbf{0.72} & 10.1& \textbf{1.6} & 70.3& \textbf{4.3} & 165.9\\ \hline
    \end{tabular}
    }
    
    {\small TL = Timeout   (360 seconds); }
    \caption{\label{table2}Timings for Algorithm~\ref{alg3} and \textsf{Polar}, in seconds
    }
\end{table}

We observe that our implementation is typically at least an order of magnitude faster than \textsf{Polar}. 
For all benchmarks except \texttt{Yaghzev9} and \texttt{Yaghzev11}, Algorithm~\ref{alg3} computes all polynomial invariants up to degree~10 within 120 seconds, \rev{although it does not terminate within 360 seconds for degrees~$\geq 15$ on any benchmark}.  
\rev{For \texttt{Yaghzev9} and \texttt{Yaghzev11}, the larger number of variables has a noticeable impact: the maximum computable degree drops to~5. This behavior is consistent with the complexity estimate, which is exponential in the number of variables (Theorem~\ref{generalinvariants}).}

Finally, in Table~\ref{table3}, we report experimental results for Algorithm~\ref{alg2} on several benchmarks, using different degrees for the polynomial invariants. 
Since Algorithm~\ref{alg2} requires a fixed initial value, we selected random integers in the range $[-100,100]$ and averaged the running times over five runs. 
For each benchmark, we list the dimension of the vector space of all computed polynomial invariants together with the corresponding average runtime.

\begin{table}[H]
    \centering\hspace*{-0.2cm}
    \scalebox{0.65}{
    \begin{tabular}{|*{15}{c|}}
    \hline
        \multicolumn{3}{|c}{Degree} & \multicolumn{2}{|c}{1} & \multicolumn{2}{|c}{2} &  \multicolumn{2}{|c}{3} & \multicolumn{2}{|c|}{4} & \multicolumn{2}{|c|}{5} & \multicolumn{2}{|c|}{6} \\ \hline
    Benchmark &$n$ & $D$& Timing & dim & Timing & dim & Timing & dim & Timing & dim & Timing & dim & Timing & dim \\ \hline      
        Fib1 & 3&2 & {0.026} & 0 & {0.062} & 0 & {0.34} & 1 & {30.79} & 4 &TL&TL&TL&TL \\ \hline
        Fib2 & 3&3 & {0.019} & 0 & {0.055} & 0 & {25.26} & 1 & {TL} & TL&TL&TL&TL&TL\\ \hline
        Fib3 & 3&2 & 0.017 & 0 & {0.057} & 0 & {3.1} & 1 & {25.12} & 4&TL&TL&TL&TL \\ \hline
       
        Yagzhev9 & 9&3 & {0.089} & 3 & TL & TL & {TL} & TL & {TL} & {TL}&TL&TL&TL&TL \\ \hline
        Yagzhev11 & 11&3 & {0.11} & 0 & {2.64} & 0 & {318} & 1 & {TL}& {TL}&TL&TL&TL&TL\\ \hline
        Ex 9& 3&2 & {0.02} & 0 & {0.055} & 0 & {0.17} & 3 & {0.57} & 11&2.63&25&10.14&46\\ \hline
        Ex 10& 3&2  & {0.017} & 0 & {0.066} & 2 & {0.18} & 8 & {0.62} & 19&2.61&36&11.4&60 \\ \hline
          
        \parbox{0.8cm}{\centering Markov\\[-0.4em]Triples\\[-0.4em](Ex~\ref{ex: markov})}& 3&2 & {0.033}  & 0& {0.11} & 0 & {0.36}
        & 1 & {1.28} & 4 &2.87&10&8.7&20\\ \hline
         Nagata& 3&5 & 0.019 & 1 & {0.057} & 5 & {0.14} & 13 & {0.39} & 26 &0.98&45&2.56&71\\ \hline
         \parbox{0.8cm}{\centering Squares\\[-0.4em](Ex~\ref{ex3})}
      & 3&2   & {0.017}  & 0 & {0.063} & 1 & {0.82}
        & 4 & {TL} & TL &TL&TL&TL&TL\\ \hline
    \end{tabular}
    }
    
    {\small TL = Timeout   (360 seconds); }
    \caption{\label{table3}Timings in seconds and data on outputs for Algorithm~\ref{alg2}  
    }
\end{table}
For Ex9, Ex10, MarkovTriples, and Nagata, Algorithm~\ref{alg2} terminates within 360 seconds for degrees $\leq 8$, \rev{but not for degrees $\geq 9$}. 
For the remaining benchmarks, it fails to terminate within 360 seconds for degrees $\geq 5$. 
Table~\ref{table3} also shows that, for many benchmarks, there are no polynomial invariants of degree $\leq 2$, requiring the search for degree 3 invariants, which are sufficient for all benchmarks.

\rev{We note that when initial values are fixed, our method can find more invariants than \textsf{Polar}.
For instance, for \texttt{Squares}, we identify the unique linear invariant
$1 + x_1 + x_2 + x_3 = 0$
(see also Example~\ref{ex3}). 
For \texttt{Yaghzev9} with initial value $(2,-3,1,4,-1,7,-4,3,2)$, we obtain the invariants
\[
x_1 - x_3 + x_5 = 0,\quad
x_2 - x_4 + x_6 = 0,\quad
x_8 - x_7 - 7 = 0.
\]
Furthermore, in the general case, for \texttt{Ex10}, \textsf{Polar} does not detect the following invariant in terms of the initial values $(a_1,a_2,a_3)$, which is computed by Algorithm~\ref{algo:class}:
\begin{flushleft}
$(3a_1-a_2-4a_3)^2(x_1+x_2)-(3a_1-a_2-4a_3)^2(x_2+x_3)-9(a_1-a_3)(x_1+x_2)^2-16(a_1-a_3)(x_2+x_3)^2+   24(a_1-a_3)(x_1+x_2)(x_2+x_3)=0$.
\end{flushleft}
}

When the initial values are fixed, all polynomial invariants of a loop up to a specified degree can be computed as the kernel of the polynomial matrix evaluated at these initial values, as output by Algorithm~\ref{matrixalgo}. However, for most of the examples above, Algorithm~\ref{matrixalgo} does not terminate within an hour.

\section{Conclusion}
\rev{
We presented a new method for generating all polynomial invariants in a given vector subspace $E \subseteq \mathbb{Q}[\xb]$ for branching loops with nondeterministic conditional statements. 
Our main contributions are:
\begin{itemize}
\item Algorithms~\ref{matrixalgo} and~\ref{algo:class} for loops without fixed initial values;
\item Algorithm~\ref{alg2} for loops with specified initial values;
\item Algorithm~\ref{alg3} for computing all invariants of the form $f(\xb)-f(\ab)$.
\end{itemize}
Among these, Algorithm~\ref{algo:class} provides an explicit symbolic description of the invariants, in contrast to Algorithm~\ref{matrixalgo}. 
Although polynomial invariants are not always of the form $f(\xb)-f(\ab)$, Algorithm~\ref{alg3} is considerably faster than both Algorithm~\ref{algo:class} and Algorithm~\ref{alg2}, since it relies solely on linear algebra. 
For loops with a fixed initial value, our method also finds more polynomial invariants than \textsf{Polar}. 
Taken together, these results demonstrate that our approach offers a substantial advance in computing polynomial invariants for unsolvable loops.
}

\rev{
Our work also has a few limitations. In particular, Algorithm~\ref{alg3} handles only invariants of the form $f(\xb)-f(\ab)$, and the general case in Algorithm~\ref{matrixalgo} may become costly when the dimension of the polynomial space grows.
Our work opens several directions for future research. 
In Algorithm~\ref{matrixalgo}, most variables appear only linearly in the coefficient polynomials. 
By exploiting this structure, we could optimize Algorithm~\ref{algo1} for such structured polynomials, which would in turn improve the efficiency of Algorithm~\ref{matrixalgo}.
It would be interesting to generalize the special form $f(\xb)-f(\ab)$ by identifying algebraic conditions under which such invariants can be computed.
}

\color{black}

\medskip\noindent{\bf Acknowledgement.} 
\rev{We thank the anonymous reviewers for their helpful comments and suggestions.
We also thank George Kenison for insightful discussions on \textsf{Polar}.
The authors were partially supported by the KU Leuven grant iBOF/23/064,
the FWO grants G0F5921N and G023721N, and the UiT Aurora project MASCOT.}

\vspace{-2mm}
\bibliographystyle{alpha}
\bibliography{ISAAC_References}
\end{document}